\keywords{Extended Addressing Machines, explicit substitutions, PCF, definability, full abstraction}
\newcommand{\obseq}{\equiv_{\mathrm{obs}}}
\newcommand{\appeq}{\equiv_{\mathrm{app}}}
\newcommand{\MM}{\mathcal{M}}
\newcommand{\DD}{\mathscr{D}}
\newcommand{\Nat}{\mathbb{N}}
\newcommand{\Prog}[1]{\mathcal{P}_{#1}}
\newcommand{\redcr}{\to_\mathsf{cr}}
\newcommand{\redsc}{\to_\mathsf{pr}}
\newcommand{\reddsc}{\redd[\mathsf{pr}]}
\newcommand{\E}{\mathsf{E}}
\newcommand{\imp}{\,\Rightarrow\,}
\newcommand{\comb}[1]{\mathbf{#1}}
\newcommand{\Om}{\boldsymbol{\Upomega}}
\newcommand{\bsub}{\begin{enumerate}}
\newcommand{\esub}{\end{enumerate}}
\renewcommand{\bar}{\begin{array}}
\newcommand{\ear}{\end{array}}
\newcommand{\nat}{{\mathbb N}}
\newcommand{\st}{\mid}
\newcommand\set[1]{\{#1\}}
\newcommand{\lam}{\ensuremath{\lambda}} 
\newcommand{\tuple}[1]{\langle #1 \rangle}
\newcommand{\Tuple}[1]{\left\langle\begin{aligned} #1 \end{aligned}\right\rangle}
\newcommand{\cI}{\mathcal{I}}
\newcommand{\cR}{\mathcal{R}}
\newcommand{\cD}{\mathcal{D}}
\newcommand{\und}{\ \land \ }
\newcommand{\ToE}{\Downarrow^\E}
\newcommand{\Lam}{\Lambda}
\newcommand{\Lame}{\Lambda^{\E}}
\newcommand{\C}{\mathrm{C}}
\newcommand{\Ev}{\mathrm{E}}
\newcommand{\Val}[1][]{\mathrm{Val}^{#1}}
\newcommand{\redd}[1][]{\twoheadrightarrow_{#1}}
\newcommand{\Var}{\mathrm{Var}}
\newcommand{\PCF}{{\sf PCF}}
\newcommand{\tint}{\mathsf{int}}
\newcommand{\pred}{\mathbf{pred}\,}
\let\succc\succ
\renewcommand{\succ}{\mathbf{succ}\,}
\newcommand{\ifterm}[3]{\mathbf{ifz}(#1,#2,#3)}
\newcommand{\fix}{\mathbf{fix}\,}
\newcommand{\num}{\underline}
\newcommand{\subst}[2]{[#2/#1]}
\newcommand{\FV}[1]{\mathrm{FV}(#1)}
\newcommand{\esubst}[2]{\langle #2/#1  \rangle}
\newcommand{\EPCF}{{\sf EPCF}}
\newcommand{\wh}{\mathtt{wh}}
\newcommand{\dom}{\mathrm{dom}}
\newcommand{\ifc}[2]{\mathtt{Ifc}(#1,#2)}
\newcommand{\headsize}[1]{\lfloor #1 \rfloor}
\newcommand{\uf}[1]{#1^\dagger}
\newcommand{\redwh}{\to_\wh}
\newcommand{\reddwh}{\twoheadrightarrow_\wh}
\newcommand{\PCFvalrule}{(\mathrm{val})}
\newcommand{\predrule}{(\mathrm{pr})}
\newcommand{\succrule}{(\mathrm{sc})}
\newcommand{\predzrule}{(\mathrm{pr_0})}
\newcommand{\ifzzrule}{(\mathrm{ifz_0})}
\newcommand{\ifzrule}{(\mathrm{ifz_{>0}})}
\newcommand{\betarule}{(\beta_v)}
\newcommand{\fixrule}{(\mathrm{fix})}
\newcommand{\Enumrule}{(\mathrm{nat}^\E)}
\newcommand{\Elamrule}{(\lambda^\E)}
\newcommand{\Evarrule}{(\mathrm{var}^\E)}
\newcommand{\Epredrule}{(\mathrm{pr}^\E)}
\newcommand{\Esuccrule}{(\mathrm{sc}^\E)}
\newcommand{\Epredzrule}{(\mathrm{pr_0}^\E)}
\newcommand{\Eifzzrule}{(\mathrm{ifz_0}^\E)}
\newcommand{\Eifzrule}{(\mathrm{ifz_{>0}}^\E)}
\newcommand{\Ebetarule}{(\beta_v^\E)}
\newcommand{\Efixrule}{(\mathrm{fix}^\E)}
\newcommand{\Types}{\mathbb{T}}
\newcommand{\Addrs}{\mathbb{A}}
\newcommand{\Tapes}[1][\Addrs]{\mathcal{T}_{#1}}
\newcommand\cM[1][\Addrs]{\mathcal{M}_{#1}}
\newcommand{\appT}[2]{#1\,@\,#2\,}
\newcommand{\append}[2]{\appT{#1}{{[}#2{]}}}
\newcommand{\Lookinv}[1]{\#^{-1}(#1)}
\newcommand{\Lookup}[1]{\#{#1}}
\newcommand{\App}[2]{#1\cdot #2}
\newcommand{\Cons}[2]{#1::#2}
\newcommand{\nnat}[1][+]{\nat^{#1}}
\newcommand{\repl}[2]{[#1:=#2]}
\newcommand{\trans}[2][]{|#2|_{#1}}
\renewcommand{\int}[2][]{\llbracket #2\rrbracket^{#1}}
\newcommand{\rtrans}[3][\Delta]{\llparenthesis #2 \rrparenthesis^{#1}_{#3}}
\newcommand{\mPred}{\mach{Pred}}
\newcommand{\mSucc}{\mach{Succ}}
\newcommand{\mIfz}{\mach{Ifz}}
\newcommand{\mProj}[2]{\mach{Pr}_{#2}^{#1}}
\newcommand{\mAppn}[2]{\mach{Apply}_{#1}^{#2}}
\newcommand{\mY}{\mach{Y}}
\newcommand{\mach}{\mathsf} 
\newcommand{\Null}{\varnothing}
\newcommand{\len}[1]{\vert #1\vert}
\newcommand{\val}[1]{\oc #1}
\newcommand{\mM}{\mach{M}}
\newcommand{\mN}{\mach{N}}
\newcommand{\ins}[1]{\mathtt{#1}}
 \newcommand{\Call}[1]{\ins{Call}~#1}
\newcommand{\Load}[1]{\ins{Load}~#1}
\newcommand{\Ifz}[4]{#4\shortleftarrow \ins{Test}(#1,\,#2,\,#3)}
\newcommand{\Pred}[2]{#2\shortleftarrow \ins{Pred}(#1)}
\newcommand{\Succ}[2]{#2\shortleftarrow \ins{Succ}(#1)}
 \newcommand{\Apply}[3]{#3\shortleftarrow \ins{App}(#1,\,#2)}
\newcommand{\redh}{\to_{\mach{c}}}
\newcommand{\reddh}{\twoheadrightarrow_{\mach{c}}}
\newcommand{\convh}{\leftrightarrow_\mach{c}}
\renewcommand{\conv}[1][]{\leftrightarrow_{#1}}
\newcommand{\convg}{\succc_\mach{c}}
\theoremstyle{plain} 
\begin{document}

\title[A Fully Abstract Model of PCF Based on EAMS]{A Fully Abstract Model of PCF Based on Extended Addressing Machines}

\author[B.~Intrigila]{Benedetto Intrigila\lmcsorcid{0000-0002-9754-7007}}[a]
\author[G.~Manzonetto]{Giulio Manzonetto\lmcsorcid{0000-0003-1448-9014}}[b]
\author[N.~M\"unnich]{Nicolas M\"unnich\lmcsorcid{0009-0000-0498-2118}}[c]

\thanks{Nicolas M\"unnich is partly supported by ANR JCJC Project CoGITARe, ANR-18-CE25-0001.}

\address{Dipartimento di Ingegneria dell'Impresa, University of Rome ``Tor Vergata'', Italy}	
\email{benedetto.intrigila@uniroma2.it}  

\address{Université Paris Cité, CNRS, IRIF, F-75013, Paris, France}	
\email{gmanzone@irif.fr}  

\address{USPN,  LIPN, UMR 7030, CNRS, F-93430 Villetaneuse, France}	
\email{manzonetto@univ-paris13.fr, munnich@lipn.univ-paris13.fr}  





\begin{abstract}
  \noindent Extended addressing machines (EAMs) have been introduced to represent higher-order sequential computations.
Previously, we have shown that they are capable of simulating---via an easy encoding---the operational semantics of PCF, extended with explicit substitutions.
In this paper we prove that the simulation is actually an equivalence: a PCF program terminates in a numeral exactly when the corresponding EAM terminates in the same numeral. 
It follows that the model of PCF obtained by quotienting typable EAMs by a suitable logical relation is adequate. 
From a definability result stating that every EAM in the model can be transformed into a PCF program with the same observational behavior, we conclude that the model is fully abstract for PCF.
\end{abstract}

\maketitle

\section{Introduction}

The problem of determining when two programs are equivalent is central in computer science. For instance, it is necessary to verify that the optimizations performed by a compiler actually preserve the meaning of the input program. 
In \lam-calculi like Plotkin's $\PCF$~\cite{Plotkin77}, two programs $P_1,P_2$ are considered observationally equivalent, written $P_1\obseq P_2$, whenever it is possible to plug either $P_1$ or $P_2$ into any context $\C\square$ of appropriate type without noticing any difference in the global behavior, i.e., $\C[P_1]$ of type $\tint$ produces a numeral $\num k$ exactly when $\C[P_2]$ does.
The Full Abstraction Problem, raised by Milner in~\cite{Milner77}, aims at finding a syntax-independent description of $\obseq$ by means of denotational models. The quest for a fully abstract (\emph{FA}, for short) model of \PCF{} kept researchers busy for decades~\cite{Curien07}, as the usual Scott-continuous models did not enjoy this property, and culminated with FA models based on game semantics~\cite{AbramskyMJ94,Nickau94,HylandO00}.
Subsequently, a FA model of \PCF{} based on realizability techniques was presented in \cite{MarzRS99}.

\paragraph{Addressing machines.} In 2020, the first author initiated a research program focused on modeling higher-order sequential  computations through \emph{addressing machines} (AMs), originally introduced by Della~Penna in~\cite{DellaPennaTh}.
The intent is not to develop a denotational model grounded in abstract mathematical structures but rather to propose a computational model, alternative to the von Neumann architecture, where computation is driven by communication between machines instead of local operations.
Addressing machines owe their name to the fact that they operate exclusively by manipulating the addresses of other machines (just like pure \lam-terms operate on other \lam-terms rather than on ordinary data types).
Higher-order computations are realized by passing functional programs via these addresses.
An AM can read an address from its tape, store the result of applying an address to another and pass the execution to another machine by calling its address. 
A challenge in defining addressing machines (AMs) lies in the fact that an address uniquely identifies a machine in a specific \emph{state of execution}, meaning its registers and tape may contain the addresses of other AMs.
The problem is finding a way to assign addresses to machines, and to compute the address resulting from applying an address $a$ to an address $b$, while avoding a circular definition. 
The definition of AMs resolves this circularity issue by initially defining the machines over an arbitrary set $\Addrs$ of addresses, and then employing external maps---whose existence is guaranteed by set-theoretical principles---to establish these associations.

\paragraph{Addressing machines as a model of computation}
Before delving deeper into the formalism, we would like to justify AMs as a computational model and to explain the motivation behind their inception. 
We begin by observing that many significant results in recursion theory are obtained by passing a Turing Machine as input to another Turing Machine (possibly itself).
For example, this technique is a key component in formulating the Halting Problem and proving its undecidability~\cite{Davis1958,Turing1936}. 
Since Turing Machines accept only natural numbers as input, this phenomenon of self-referentiation is achievable only through appropriate encodings.
These encodings are also necessary to represent higher order computations, but difficult to handle in practice.
In fact, while efficient encodings of the \lam-calculus into Turing machines are well-documented in the literature~\cite{Accattoli23}, the reverse encoding is part of the \emph{folklore} and considered excessively verbose to be presented in detail.
By making the addresses first class citizen and by explicitly and uniformly relying on external mechanisms, the framework of AMs reduces to a minimum the technicalities concerning the encodings. 
This is the main feature that allowed Della Penna \emph{et~al.\ }to define a reasonably elegant translation from \lam-terms to AMs, and to show that the associated machine faithfully represents the behavior of the original \lam-term~\cite{DellaPennaIM21}.

A corollary of the \lam-definability result in~\cite{DellaPennaIM21} is that all numerical partial computable functions can be represented via AMs.
As previously mentioned, the definition of addressing machines depends however on the existence of an external function, called \emph{address table map}, bijectively associating each AM with its address. 
Since the set of machines and the set of addresses are both countably infinite there exist a \emph{continuum} of address table maps, most of which cannot be effective. 
On the one hand, choosing a non-computable function as address table map can lead to define AMs representing non-computable functions. 
This situation is reminiscent of Ershov's notion of \emph{pre-complete enumeration}~\cite{Visser80}, where a function $f:A\to B$ is considered computable w.r.t.\ two enumerations $\nu: \nat\to A,\mu:\nat\to B$ of its domain and codomain, and taking non-effective enumerations can lead to represent non-computable functions.
On the other hand, it is possible to construct effective maps by employing classical Gödelization techniques, therefore the formalism of addressing machines actually deserve to be considered a model of computation.

\paragraph{Extended addressing machines.}
In \cite{DellaPennaIM21}, Della Penna \emph{et~al.\ }successfully built a model of untyped \lam-calculus based on AMs, but also realized that performing calculations on natural numbers in this formalism is as awkward as using Church numerals.
To avoid this problem, we recently introduced \emph{extended addressing machines} (EAMs)~\cite{IntrigilaMM22} having additional instructions for performing basic arithmetic operations on addresses of `numeral' machines, a recursor machine representing a fixed point combinator, and a typing algorithm. 
A recursor can even be programmed in the original formalism, but we avoid any dependency on {\em self-application} by manipulating the addressing mechanism to grant it access to {\em its own address}. This can be seen as a very basic version of the reflection principle of programming languages. 
In the present paper we construct a fully abstract model of \PCF{} based on EAMs (Theorem~\ref{thm:FA}).
We start by recalling the definition of EAMs (Section~\ref{sec:EAMs}), their operational semantics and type system. 
Then, we define a translation (Definition~\ref{def:trans}) constructing an EAM from every typable \PCF{} term and prove that such a translation is type-preserving (Theorem~\ref{thm:transtyping}), and the resulting EAM correctly represents the behavior of the original term: 
\begin{quote}
A \PCF{} program of type $\tint$ reduces to a numeral $\num k$ exactly when the corresponding EAM reduces to the machine $\mach{k}$ representing $k\in\nat$ (Theorem~\ref{thm:simulation}).
\end{quote}
This equivalence is achieved using as intermediate language \EPCF, i.e.\ a \PCF{} extended with explicit substitutions, since their operational semantics coincide on closed programs of type $\tint$.
The result can be seen as a strengthening of the main theorem in \cite{IntrigilaMM22}, where only one direction was proven.
As a consequence, one gets that the model constructed by equating all typable EAMs having the same observational behavior is adequate for \PCF{} (Theorem~\ref{thm:adequacy}). 
We achieve completeness---the other side of full abstraction---by defining a `reverse' translation that associates a \PCF{} program with every typable EAM, and proving that the two translations are inverses up to observational equivalence on EAMs (Theorem~\ref{thm:transrevtrans}).
We emphasize that the reverse translation is made possible by the presence of the type system: an EAM is well-typed if there exists a finite derivation that recursively associates a type with each of its components. This phenomenon suppresses all non-definable elements.

\paragraph{The notion of model and full abstraction.} 
The model of PCF that we construct in this paper is based on AMs and therefore does not fit into the notion of denotational model considered in the literature, which is usually based on category theory and domain theory~\cite{Curien07}. 
Our model is an instance of Milner's set-theoretical definition~\cite{Milner77} and provides a syntax-independent description of PCF terms.
Our full abstraction result is achieved through a quotient that somewhat mimics the observational equivalence. 
This approach is reminiscent of the construction of fully abstract models in categories of games, which also rely on a quotient. 
However, while game models use the quotient solely to achieve completeness, we leverage the same quotient to establish both soundness and completeness.
The interest of this kind of results mostly lies in the \emph{definability property} that does not hold automatically in the quotiented model.\footnote{Even in our case, in the untyped world, there are EAMs that are not definable by any PCF program. 
The machines $\mM_n$ introduced in Examples~\ref{rem:sillyEAMs}\eqref{rem:sillyEAMs2} are witnesses of this situation.} 
We believe that our approach, based on a notion of deterministic machine, brings an original, computationally-oriented, perspective in \PCF{} semantics: 
e.g., in our setting no limit construction is required to handle the fixed point combinator since an EAM accessing its own address is easily definable from a mathematical perspective, and can be seen as an abstract view of the usual implementation of recursion. 

\paragraph{Related and future works.} A preliminary version of addressing machines was introduced in \cite{DellaPennaTh} to model computation as communication between distinguished processes by means of their addresses. 
They were subsequently refined in~\cite{DellaPennaIM21} with the theoretical purpose of constructing a model of \lam-calculus. 

Compared to other machine-based formalisms, addressing machines occupy a unique space with a fair amount of individually overlapping features, but distinct overall. Many abstract machines feature subroutines which can be called and returned from. Traditional abstract machines (KAM \cite{Krivine07}, SECD \cite{Landin64}, TIM~\cite{FairbairnW87}, Lazy KAM~\cite{Cregu91,Lang07}), ZINC~\cite{Leroy90} etc.) employ a stack to temporarily store the current state of the program during the execution of a subroutine. AMs adopt a different approach, utilizing separate AMs for each subroutine and an external address table map to allow communication among them. Evaluating a subroutine is done simply by executing the corresponding AM. Thus, to perform meaningful computations, it is necessary to consider the whole set of AMs, rather than a single machine. 
By delegating the complexity of subroutines to the external map, AMs are difficult to compare with other formalisms in terms of implementational complexity and efficiency.
Due to space limitations, any implementation can represent only a finite number of AMs and relies on a finite address table map, which is therefore decidable.
From a theoretical perspective, we believe that a translation from AMs into (multiple) traditional abstract machines could be defined without much difficulty.
Additionally, we emphasize that computational efficiency can be enhanced by optimizing the finite address table map through the use of suitable hash tables. Addressing this optimization challenge is a goal we plan to explore in future work.

The presence of explicit substitutions may also remind of more ``modern'' abstract machines arising from the linear substitution calculus, namely the MAM~\cite{Accattoli14,Accattoli17} and related machines. The MAM also avoids the concept of a closure, through use of $\alpha$-renaming all environments which would be local to a subroutine can instead be stored in a single global environment. One could argue that this is a special method of implementing subroutines, while AMs retain a local environment and sidestep the issue entirely.

The communication aspect of addressing machines may also remind the reader of formalisms such as Interactive Turing Machines~\cite{ITM,AdvancedITM} and the $\pi$-calculus ~\cite{Milner99,Sangiorgi01} which both feature communication between processes. Both of these formalisms use the communication aspect for the purpose of parallel/concurrent computations. By design, AMs do not support concurrent computations and instead model asymmetric communication rather than bilateral communication, so the similarity ends there.

Compared with models of \PCF{} based on Scott-continuous functions~\cite{Milner77,BerryCL85,Curien07}, EAMs provide a more operational interpretation of a program and naturally avoid parallel features that would lead to the failure of FA as in the continuous semantics. 
Compared with Curien's sequential algorithms \cite{Curien92} and categories of games~\cite{AbramskyMJ94,Nickau94,HylandO00} they share the intensionality of the denotations of the programs, while presenting an original way of modelling sequential computation.
The model based on AMs also bares \emph{some} similarities with the categories of assembly used to model PCF \cite{LongleyTh}, mostly on a philosophical level, in the sense that these models are based on the `codes' (rather than addresses) of recursive functions realizing a formula ($\cong$ type). 

A disclaimer concerning explicit substitutions~\cite{AbadiCCL90,AbadiCCL91,CurienHL96,SeamanI96,LevyM99}: since the full abstraction property is defined on \emph{closed} \PCF{} terms,  we assume a closedness condition on explicit substitutions, as in~\cite{AlvesFFM07}, to reduce the level of technicalities. We do not claim, or believe, that our presentation of \EPCF{} gives a satisfying theory of explicit substitution for `open' \PCF. 
In future works, we plan to build from Accattoli and Kesner's works~\cite{AccattoliK12,Accattoli18} to develop a version of \EPCF{} at the correct level of generality.
\section{PCF with Explicit Substitutions}\label{sec:PCF}

We begin by describing the syntax and operational semantics of Plotkin's \PCF{} \cite{Plotkin77,Ong95}.

\begin{defi}\bsub
\item Let us consider fixed a countably infinite set $\Var$ of variables that are denoted $x,y,z,\dots$ The set $\Lam^\PCF$ of \emph{$\PCF$ terms} is defined by induction as follows:
\[
	P,Q,Q'\ ::= x \mid \lam x.P \mid P\cdot Q \mid \fix P\mid \mathbf{0} \mid \pred P \mid \succ P\mid  \ifterm{P}{Q}{Q'} \tag{$\Lam{}^\PCF{}$}
\]
where $\lam x.P$ represents the abstraction, $P\cdot Q$ the application, $\fix$ a fixed point combinator, $\mathbf{0}$ the constant zero, $\pred$ and $\succ$ the predecessor and successor (respectively), and $\mathbf{ifz}$ the conditional test on zero (namely, `is zero?-then-else').
\item A \emph{\PCF{} value} is either a numeral or an abstraction. The \emph{set of \PCF{} values} is defined by:
\[
	\Val[\PCF] = \set{\num n \st n\in\nat}\cup\set{\lam x.P\st P\in\Lam^\PCF},\textrm{ where }\num n = \mathbf{succ}^n(\mathbf{0}).
\]
\item The set $\FV{P}$ of \emph{free variables of $P\in\Lam^{\PCF}$} and \emph{$\alpha$-conversion} are defined as usual.
\item A term $P\in\Lam^\PCF$ is called a \emph{\PCF{} program} if it is closed, i.e.\ $\FV{P} = \emptyset$.
\item Given $P,Q\in\Lam^\PCF$, we denote by $P\subst{x}{Q}$ the \emph{capture-free substitution} of $Q$ for all free occurrences of $x$ in $P$.
\esub
\end{defi}
\noindent 
Hereafter, \PCF{} terms are considered up to $\alpha$-conversion. 
We now introduce \PCF{} `contexts', namely \PCF{} terms possibly containing occurrences of an algebraic variable, which is traditionally called \emph{hole}, and here denoted by $\square$.

\begin{defi}\label{def:contexts}
\bsub
\item\label{def:contexts1} \emph{\PCF{} contexts} $\C\square$ are generated by the following grammar:
\[
	\C\square\ ::= \square \mid x \mid \lam x.\C \mid \C_1\cdot \C_2 \mid \fix \C\mid \mathbf{0} \mid \pred \C \mid \succ \C\mid  \ifterm{\C}{\C_1}{C_2} 
\]
\item Given a $\PCF$ context $\C\square$ and $P\in\Lam^\PCF$, $\C[P]$ stands for the \PCF{} term obtained by substituting $P$ for all occurrences of $\square$ in $\C$, possibly with capture of free variables in $P$.
\item\label{def:contexts3} \PCF{} \emph{evaluation contexts} are \PCF{} contexts generated by the grammar (for $P,Q\in\Lam^{\PCF}$):
\[
	\Ev\square ::= \square \mid \Ev \cdot P \mid \pred \Ev \mid \succ \Ev \mid \ifterm {\Ev} {P}{Q}
\]

\esub
\end{defi}
\noindent 
We introduce the `small step' and `big step' call-by-name operational semantics of $\PCF.$
\begin{defi}
\bsub
\item The \emph{weak head reduction} $\to_\PCF\ \subseteq\Lam^\PCF\times\Lam^\PCF$ is defined as follows:
\[
	\bar{ll}
	\Ev[(\lam x.P)Q]\to_\PCF \Ev[P\subst{x}{Q}],
	&	
	\Ev[\fix(P)]\to_\PCF \Ev[P\cdot (\fix(P))],\\[3pt]
	\Ev[\pred(\succ(\num n))]\to_\PCF \Ev[\num n],
	&
	\Ev[\pred(\mathbf{0})]\to_\PCF \Ev[\mathbf{0}],	
	\\[3pt]
	\Ev[\ifterm{\mathbf{0}}{P_1}{P_2}]\to_\PCF \Ev[P_1],
	&
	\Ev[\ifterm{\num{n+1}}{P_1}{P_2}]\to_\PCF \Ev[P_2],
	\ear
\]
for all evaluation contexts $\E\square$.
\item The \emph{multistep reduction} $\redd[\PCF]$ is defined as the reflexive and transitive closure of $\to_{\PCF}$.
\item The \emph{big step reduction} $\Downarrow\ \subseteq\, \Lam^\PCF\times\Lam^\PCF$ takes a \PCF{} term to a \PCF{} value:
\[
	\bar{ccc}
	\infer[\PCFvalrule]{U\Downarrow U}{ U\in \Val}
	&
	\infer[\predzrule]{\pred P \Downarrow \mathbf{0}}{P \Downarrow \mathbf{0}}	
	&
	\infer[\predrule]{\pred P \Downarrow \num n}{P \Downarrow \num{n+1}}	
	\\[1ex]
	\infer[\ifzzrule]{\ifterm P{Q}{Q'} \Downarrow U_1}{P\Downarrow \mathbf{0} & Q\Downarrow U_1 }
	&\qquad
	\infer[\ifzrule]{\ifterm P{Q}{Q'} \Downarrow U_2}{P\Downarrow \num{n+1} & Q'\Downarrow U_2 }\qquad
	&
	\infer[\succrule]{\succ P \Downarrow \num {n+1}}{P \Downarrow \num{n}}	
	\\[1ex]
	\infer[\fixrule]{\fix P \Downarrow U}{P \cdot (\fix P) \Downarrow U}
	&
	\infer[\betarule]{P\cdot Q \Downarrow U}{P\Downarrow \lam x.P' & P'\subst{x}{Q}\Downarrow U}
	\ear
\]
\esub
\end{defi}
\noindent 
It is well-known that the big-step and small-step semantics of \PCF{} are equivalent on \PCF{} programs. 
For a proof of this equivalence, see e.g.~\cite{Ong95}. 

\begin{exas}\label{ex:PCFterms} 
The following \PCF{} programs will be used as running examples:
\bsub
\item $\comb{I} = \lam x.x$, representing the identity.
\item $\Om = \fix(\comb{I})$, representing the paradigmatic looping program.
\item $\mathbf{succ1} = \lam x.\succ x$, representing the successor function.
\item $\mathbf{succ2}=(\lam sn.s\cdot(s\cdot n))\cdot \mathbf{succ1}$, representing the function $f(x) = x+2$.
\esub
\end{exas}

\begin{defi}\label{def:simpletypes}\bsub
\item\label{def:simpletypes1} The set $\Types$ of \emph{(simple) types} over a \emph{ground type} $\tint$ is defined by:
\begin{equation}\tag{$\Types$}
	\alpha,\beta\ ::=\ \tint \mid \alpha\to \beta
\end{equation}
The arrow operator associates to the right, in other words we write $\alpha_1\to\cdots\to\alpha_n\to\beta$ for $\alpha_1\to(\cdots(\alpha_n\to\beta)\cdots)$ ($= \vec\alpha\to\beta$, for short). If $n=0$ then $\alpha_1\to\cdots\to\alpha_n\to\beta = \beta$.
\item\label{def:simpletypes2} A \emph{typing environment} is defined as a finite map $\Gamma : \Var\to\Types$. 
\item
We write  $x_1:\alpha_1,\dots,x_n : \alpha_n$ for the unique environment $\Gamma$ satisfying $\dom(\Gamma) = \set{x_1,\dots,x_n}$ and $\Gamma(x_i) = \alpha_i$, for all~$i\in\set{1,\dots,n}$.
\item \emph{Typing judgements} are triples, denoted $\Gamma\vdash P : \alpha$, where $\Gamma$ is a typing environment, $P$ is a \PCF{} term and $\alpha\in\Types$.
\item \emph{Typing derivations} are finite trees built bottom-up in such a way that the root has shape $\Gamma\vdash P : \alpha$ and every node is an instance of a rule from Figure~\ref{fig:typing}. In the rule $(\to_\mathrm{I})$ we assume wlog that $x\notin\dom(\Gamma)$, by $\alpha$-conversion.
\item 
When writing $\Gamma\vdash P : \alpha$, we mean that this typing judgement is derivable.
\item We say that $P$ \emph{is typable} if $\Gamma\vdash P : \alpha$ is derivable for some $\Gamma,\alpha$.
\esub
\end{defi}
\begin{figure*}[t]
$\arraycolsep=20pt\def\arraystretch{2.2}
\bar{c}
\bar{ccc}
\infer[(0)]{\Gamma\vdash\mathbf{0}:\tint}{}&
\infer[(\mathrm{ax})]{\Gamma,x:\alpha\vdash x:\alpha}{}&
\infer[(\mathrm{Y})]{\Gamma\vdash \fix P:\alpha}{\Gamma\vdash P : \alpha\rightarrow \alpha}\\
\infer[(-)]{\Gamma\vdash\pred P:\tint}{\Gamma\vdash P:\tint}&\infer[(+)]{\Gamma\vdash\succ P:\tint}{\Gamma\vdash P:\tint}&
{\infer[(\to_\mathrm{I})]{\Gamma\vdash\lambda x.P :\alpha\to\beta}{\Gamma,x:\alpha\vdash P : \beta}}
\ear\\
\bar{cc}
\infer[(\to_\mathrm{E})]{\Gamma\vdash P\cdot Q:\beta}{\Gamma\vdash P:\alpha\rightarrow \beta &\Gamma\vdash Q:\alpha}&
\infer[(\mathrm{ifz})]{\Gamma\vdash \ifterm {P}{Q}{Q'}:\alpha}{
	\Gamma\vdash P:\tint
	&
	\Gamma\vdash Q :\alpha
	&
	\Gamma\vdash Q' :\alpha
}
\ear
\ear
$
\caption{\PCF{} type assignment system.}\label{fig:typing}
\end{figure*}

\begin{exas} The following are examples of derivable typing judgments.
\bsub
\item $\vdash \lam x.x : \alpha \rightarrow \alpha$, for all $\alpha \in \Types$.
\item $\vdash (\lam sn.s\cdot (s\cdot n))\cdot(\lam x.\succ x):\tint\rightarrow\tint$.
\item $\vdash \fix (\lam f x y.\ifterm y {x} {f \cdot(\succ x)\cdot (\pred y)}):\tint\rightarrow\tint\rightarrow\tint$.
\item $\vdash \Om : \alpha$, for all $\alpha\in\Types$. In particular, we have $\vdash \Om :\tint$.\pagebreak
\esub
\end{exas}

\subsection{\PCF{} with explicit substitutions}

We introduce the language \EPCF, a version of \PCF{} endowed with (closed) explicit substitutions.

\begin{defi}\bsub\item The set $\Lame$ of \EPCF{} \emph{terms} is defined by the grammar (for $x\in\Var$):
	\[
		\bar{lcl}
		L,M,N &::=&x\mid M\cdot N \mid \lam x.M \mid \fix M \mid M\esubst{x}{N} \mid\\
		&&\phantom{x}\mid \mathbf{0} \mid \pred M\mid \succ M\mid \ifterm L{M}{N}
		\ear\tag{$\Lame$}
	\]
	The only change compared to standard \PCF{} is the addition of a constructor $M\esubst{x}{N}$ representing the term $M$ receiving the \emph{explicit substitution} $\esubst{x}{N}$ of $x$ by $N$.
	Notice that, while $M\subst{x}{N}$ is a meta-notation, explicit substitutions are actual constructors.
\item When considering a (possibly empty) list of explicit substitutions $\sigma \!=\! \esubst{x_1}{N_1}\cdots \esubst{x_n}{N_n}$, we assume that the variables $\vec x$ in $\sigma$ are pairwise distinguished. \noindent 
	Given $\sigma$ as above, we let $\dom(\sigma) = \set{x_1,\dots,x_n}$ be the \emph{domain of $\sigma$} and write $\sigma(x_i) = N_i$, for $x_i\in\dom(\sigma)$.
	\item Given $M\in\Lame$, we write $M^\sigma$ for $M\esubst{x_1}{N_1}\cdots \esubst{x_n}{N_n} = (\cdots(M\esubst{x_1}{N_1})\cdots) \esubst{x_n}{N_n}$.
	\item The set $\Val[\E]$ of \EPCF{} \emph{values} contains abstractions under substitutions and numerals, i.e.
\[
	\Val[\E] = \set{\num n \st n\in\nat}\cup \set{ (\lam x.M)^\sigma \st M\in\Lame}
\]
	\item Given $M\in\Lame$, the set $\FV{M}$ of all \emph{free variables of $M$} and \emph{$\alpha$-conversion} are defined by induction on $M$ as expected, namely with the following additional cases (for $y$ fresh):
	\[
		\FV{M\esubst{x}{N}} = (\FV{M} - \{x\}) \cup \FV{N},\qquad\quad
		M\esubst{x}{N} =_\alpha M\subst{x}{y}\esubst{y}{N},
	\]
	where $y$ is a fresh variable and $M\subst{x}{L}$ denotes the \emph{capture-free substitution} of a term $L$ (in the case above, one takes $L = y$) for all free occurrences of $x$ in $M$.
\item An \EPCF{} term $M$ is called \emph{an \EPCF{} program}, written $M\in\Prog{\E}$, if it is closed 
and all its explicit substitutions are of shape $\esubst{x}{N}$ for some  $x\in\Var$ and $N$ closed.
\item \EPCF{} \emph{evaluation contexts} are defined identically to \PCF{} evaluation contexts (Def.~\ref{def:contexts}\eqref{def:contexts1}), but with $P,Q$ ranging over $\Lame$.
Similarly, given $M\in\Lame$ and an evaluation context $\Ev\square$, we write $\Ev[M]$ for the \EPCF{} term obtained by substituting $M$ for the hole $\square$ in $\Ev\square$.
\esub
\end{defi}
\noindent 
Notice that, in an \EPCF{} program, all subterms of the form $M\esubst{x}{N}$ must have $N\in\Prog{\E}$.
Clearly $\Lambda^\PCF\subsetneq\Lame$, moreover all \PCF{} programs belong to $\Prog{\E}$.
In this paper we focus on ({\sf E})\PCF{} programs and adopt Barendregt's \emph{variable convention} stating that bound variables are given maximally distinguished names.

\begin{exas}\label{ex:EPCFterms} 
All \PCF{} terms introduced previously are also \EPCF{} terms. Some examples of \EPCF{} programs that are not \PCF{} terms are $x\esubst{x}{\num 1}$ and $\lambda x.(\ifterm{x}{y}{z}\esubst{y}{\num 1}\esubst{z}{ \num2})$. 
\end{exas}

We endow \EPCF{} with a small-step call-by-name operational semantics capturing weak head reduction.
The reduction is `weak' since it does not reduce under abstractions.
\begin{defi}\label{def:rewriting}
\bsub

\item\label{def:rewriting11} The \emph{computation reduction} $\redcr$ on \EPCF{} terms is defined as:
\[
	\bar{r@{\hskip 3pt}c@{\hskip 3pt}lcr@{\hskip 3pt}c@{\hskip 3pt}l} 
	\Ev[(\lam x.M)^\sigma N]&\redcr& \Ev[M^\sigma\esubst{x}{N}],&&\Ev[\pred \mathbf{0}] &\redcr& \Ev[\mathbf{0}],\\
\Ev[	\ifterm{\mathbf{0}} M N] &\redcr& \Ev[M],&&\Ev[\pred(\num{n+1})]&\redcr& \Ev[\num n],\\
	\Ev[\ifterm{\num {n+1}} M N] &\redcr& \Ev[N],&&\Ev[\fix(M)]&\redcr&\Ev[M(\fix(M))].\\
	\ear
\]
\item\label{def:rewriting12} The \emph{percolation reduction} $\redsc$ on \EPCF{} terms is defined as (where $\sigma$ is non-empty):
\begin{minipage}{\textwidth}
\[
	\bar{r@{\hskip 3pt}c@{\hskip 3pt}lcr@{\hskip 3pt}c@{\hskip 3pt}l} 
	\Ev[x^\sigma]&\redsc&\Ev[N], \textrm{ if } \sigma(x) =N,&&\Ev[\mathbf{0}^\sigma]&\redsc&\Ev[\mathbf{0}],\\
	\Ev[y^\sigma]&\redsc&\Ev[y],\ \textrm{ if }y\notin\dom(\sigma),&&\Ev[(M\cdot N)^\sigma]&\redsc&\Ev[M^\sigma\cdot  N^\sigma],\\	\Ev[(\pred(M))^\sigma]&\redsc&\Ev[\pred(M^\sigma)],&&\Ev[(\succ(M))^\sigma]&\redsc&\Ev[\succ(M^\sigma)],\\
	\Ev[(\ifterm L M N)^\sigma] &\redsc&\Ev[ \ifterm{L^\sigma}{M^\sigma}{N^\sigma}],&&\Ev[(\fix(M))^\sigma]&\redsc&\Ev[\fix(M^\sigma)].\\
	\ear
\]
\end{minipage}
\item\label{def:rewriting2} 
	The (one step) \emph{weak head (w.h.) reduction} $\redwh$ is defined as the union of $\redcr$ and $\redsc$.
\item As it is customary, we denote by $\reddwh$ the transitive-reflexive closure of $\redwh$.
\item We define $\conv[\wh]$ as the symmetric, transitive and reflexive closure of $\redwh$.
\esub
\end{defi}

\begin{rems}\label{rem:EPCFprops}
\bsub
\item The reduction $\redwh$ is designed to work on \EPCF{} programs. 
On closed terms that are not programs, we might obtain unsound reductions like the following:
\[
(\lambda x.(y\esubst{y}{x}))\mathbf{0} \redwh y\esubst{y}{x}\esubst{x}{\mathbf{0}} \redwh x
\]
\item The set $\Prog{\E}$ is closed under $\redwh$, hence under $\reddwh$ as well.
\item\label{rem:EPCFprops3} 
	The w.h.\ reduction is deterministic: $N_1\,{}_\wh\!\!\leftarrow M\redwh N_2$ implies $N_1=N_2$.
\item\label{rem:EPCFprops4} 
	An \EPCF{} program $M$ is w.h.-normalizing if and only if $M\reddwh V$, for some $V\in\Val[\E]$.
\esub
\end{rems}
\noindent 
Since $\redwh$ is deterministic, we can safely write $\len{M\reddwh N}$ to denote the length $n$ of the unique reduction sequence $M = M_1\redwh M_2\redwh\cdots \redwh M_n = N$.

\begin{defi}\label{def:epcfbigstep} From $\redwh$ and mirroring the big-step reduction of \PCF{}, we can derive a big-step reduction $\ToE\ \subseteq \Lam^\EPCF\times\Val[\E]$ relating an \EPCF{} term with an \EPCF{} value:
\[
\bar{c}
	\bar{ccc}
	\infer[\Enumrule]{(\num n)^\sigma\ToE \num n}{ \num n \in \Nat}
	&
	\infer[\Elamrule]{(\lam x. M)^{\sigma}\ToE (\lam x. M)^\sigma}{}
	&
	\infer[\Evarrule]{x^\sigma\ToE V}{\sigma(x) = N & N \ToE V}
	\\[1ex]
	\infer[\Epredzrule]{(\pred M)^\sigma \ToE \mathbf{0}}{M^\sigma \ToE \mathbf{0}}	
	&
	\infer[\Epredrule]{(\pred M)^\sigma \ToE \num n}{M^\sigma \ToE \num{n+1}}	
	&
	\infer[\Esuccrule]{(\succ M)^\sigma \ToE \num {n+1}}{M^\sigma \ToE \num{n}}	
	\\[1ex]
	\ear\\
	\bar{cc}
	\infer[\Eifzzrule]{(\ifterm L{M}{N})^\sigma \ToE V_1}{L^\sigma\ToE \mathbf{0} & M^\sigma\ToE V_1 }
	&
	\infer[\Eifzrule]{(\ifterm L{M}{N})^\sigma \ToE V_2}{L^\sigma\ToE \num{n+1} & N^\sigma\ToE V_2 }\qquad
	\\[1ex]
	\infer[\Efixrule]{(\fix M)^\sigma \ToE V}{M^\sigma \cdot \fix (M^\sigma) \ToE V}
	&
	\infer[\Ebetarule]{(M\cdot N)^\sigma \ToE V}{M^\sigma\ToE (\lam x.M')^{\sigma'} & (M')^{\sigma'}\esubst{x}{N^\sigma}\ToE V}
	\ear
	\ear
\]
\end{defi}

\begin{prop}\label{prop:epcfbigsmallstep}
Given an \EPCF{} program $M$ and an \EPCF{} value $V$, we have
\[
	M\ToE V	\iff M \reddwh V.
\]
\end{prop}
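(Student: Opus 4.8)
The plan is to establish the two implications separately. Both directions rest on one preliminary observation: evaluation contexts are closed under composition, so that $M\reddwh M'$ entails $\Ev[M]\reddwh\Ev[M']$ for every evaluation context $\Ev\square$. Together with the fact that $\Prog{\E}$ is closed under $\redwh$ (Remark~\ref{rem:EPCFprops}), this lets me freely splice subcomputations while staying within programs, so the inductive hypotheses always apply.

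For the forward implication $M\ToE V\Rightarrow M\reddwh V$ I would proceed by induction on the derivation of $M\ToE V$, with one case per big-step rule. Here the percolation reduction does the work of distributing $\sigma$ through the outermost constructor: e.g.\ in case $\Ebetarule$ one first notes $(M\cdot N)^\sigma\redsc M^\sigma\cdot N^\sigma$, then uses the inductive hypothesis on the premise $M^\sigma\ToE(\lam x.M')^{\sigma'}$ together with compatibility to get $M^\sigma\cdot N^\sigma\reddwh (\lam x.M')^{\sigma'}\cdot N^\sigma$, fires the computation redex $(\lam x.M')^{\sigma'}\cdot N^\sigma\redcr (M')^{\sigma'}\esubst{x}{N^\sigma}$, and finally applies the inductive hypothesis to the second premise. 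The cases $\Epredrule$, $\Epredzrule$, $\Esuccrule$, $\Eifzzrule$, $\Eifzrule$, $\Efixrule$ are entirely analogous; $\Enumrule$ amounts to percolating $\sigma$ through the nested successors of $\num n$ until reaching $\mathbf{0}^\sigma\redsc\mathbf{0}$; and $\Elamrule$ is immediate since values are $\redwh$-normal. The only bookkeeping is to check that each premise is again a program, which follows from the closedness conditions on explicit substitutions.

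For the converse $M\reddwh V\Rightarrow M\ToE V$ I would argue by induction on $\len{M\reddwh V}$. The base case reduces to the auxiliary fact that $V\ToE V$ for every value $V$, which is immediate from $\Enumrule$ and $\Elamrule$. The inductive step $M\redwh M'\reddwh V$ needs a \emph{head-expansion lemma}: if $M\redwh M'$ and $M'\ToE V$, then $M\ToE V$; combined with the inductive hypothesis $M'\ToE V$ it closes the argument. I expect this lemma to be the main obstacle.

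To prove head expansion I would write the contracted step as $M=\Ev[R]\redwh\Ev[R']=M'$ and induct on the evaluation context $\Ev\square$. When $\Ev=\square$, $R$ is a top-level redex and I would inspect each computation and percolation rule: in every case the shape of $M$ singles out the applicable big-step rule, so a derivation of $M\ToE V$ can be read off from $R'\ToE V$ — for instance $x^\sigma\redsc\sigma(x)$ is handled by $\Evarrule$, and $(\lam x.P)^\sigma N\redcr P^\sigma\esubst{x}{N}$ by $\Ebetarule$ with the left premise discharged via $\Elamrule$. For the context cases ($\Ev=\Ev'\cdot P$, $\pred\Ev'$, $\succ\Ev'$, $\ifterm{\Ev'}{P}{Q}$) the decisive move is \emph{inversion}: since the big-step rules are syntax-directed (the last rule is fixed by the outer constructor, the $\pred$ and $\mathbf{ifz}$ cases being further resolved by the numeral reached on the guard), a derivation of $\Ev[R']\ToE V$ must expose a subderivation of $\Ev'[R']\ToE W$ for some value $W$; the inductive hypothesis applied to the strictly smaller context $\Ev'$ converts this into $\Ev'[R]\ToE W$, and reinstating the same final rule yields $\Ev[R]\ToE V$. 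The crux is therefore the interplay between invertibility of the big-step rules and the structural induction on $\Ev$; once this lemma is in place, the equivalence follows.
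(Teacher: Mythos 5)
Your proposal is correct. The forward direction is the paper's own argument: induction on the big-step derivation, with percolation distributing $\sigma$ through the head constructor and compatibility of $\reddwh$ with evaluation contexts. The backward direction, however, takes a genuinely different route. Both proofs induct on $\len{M\reddwh V}$ and use determinism (Remark~\ref{rem:EPCFprops}\eqref{rem:EPCFprops3}) to isolate the contracted redex, but your key lemma is one-step head expansion ($M\redwh M'$ and $M'\ToE V$ imply $M\ToE V$), proved by induction on the evaluation context with inversion of the syntax-directed big-step rules supplying the needed subderivations. The paper instead proves a factorization lemma on reduction sequences (Lemma~\ref{lem:epcffactorization}): if $M\cdot N$ (resp.\ $\pred M$, $\succ M$, $\ifterm{M}{N_1}{N_2}$) reduces to a value, then $M$ reaches an abstraction-under-substitution (resp.\ a numeral) by a \emph{strictly shorter} reduction; the length-based IH is then applied to these subreductions and the big-step derivation is reassembled directly --- this is how the paper handles both the non-trivial context cases and the head percolation steps such as $(\pred M)^\sigma\redsc\pred(M^\sigma)$, which you dispatch by inversion instead. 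The two techniques are dual: you analyse the structure of the big-step derivation of the reduct, the paper analyses the structure of the remaining reduction sequence. Yours buys modularity --- once head expansion is in place the length induction is immediate, with no bookkeeping about lengths of subreductions --- at the cost of an inversion principle on $\ToE$; the paper avoids inversion but needs the auxiliary factorization lemma. Your inversion step is legitimate here because each term shape matches the conclusion of at most one rule schema (the $\pred$/$\mathbf{ifz}$ ambiguity being resolved by the numeral reached on the guard), exactly as you observe.
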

\begin{proof}[Proof (Appendix~\ref{app:regPCF})]
$(\Rightarrow)$ By induction on the height of a derivation of $M\ToE V$.

$(\Leftarrow)$ By induction on the length $\len{M \reddwh V}$.
\end{proof}

\EPCF{} terms can be typed in a manner very similar to \PCF{} terms. 

\begin{defi}
\bsub\item
An \EPCF{} typing judgement is, like in \PCF{}, a triple of $\Gamma\vdash M : \alpha$. 
\item The rules for typing an \EPCF{} term have been presented in Figure~\ref{fig:etyping}. 
\esub
\end{defi}
\noindent 
The only change compared to \PCF{} typing rules is the introduction of the $(\sigma)$ rule. 
In this rule, we rely on the fact that in a program of shape $M\esubst{x}{N}$, the subterm $N$ must be closed and remove the typing environment in $\vdash N : \beta$. This is not standard when considering more general notions of explicit substitutions, but simplifies our definitions later.
\begin{rem}
This approach of forcing explicit substitutions to be closed also sidesteps the issues caused by Melli\`{e}s' degeneracy~\cite{Mellies95}. 
\end{rem}
\begin{exa}
The running examples can be typed without the use of the $(\sigma)$ rule, as they are \PCF{} terms. For an \EPCF{} exclusive term, as an example, we type $x\esubst{x}{\succ\mathbf{0}}$:
\[
\infer[(\sigma)]{\vdash x\esubst{x}{\succ\mathbf{0}} : \tint}{
\infer[(\mathrm{ax})]{x : \tint \vdash x:\tint}{}
& \infer[(+)]{\vdash \succ \mathbf{0} : \tint}{
\infer[(0)]{\vdash \mathbf{0}:\tint}{}}
}
\]
\end{exa}
\begin{figure*}[t]
$\arraycolsep=4pt\def\arraystretch{2.2}
\bar{ccc}
\infer[(0)]{\Gamma\vdash\mathbf{0}:\tint}{}&
\infer[(\to_\mathrm{E})]{\Gamma\vdash M\cdot N:\beta}{\Gamma\vdash M:\alpha\rightarrow \beta &\Gamma\vdash N:\alpha}&
\infer[(\mathrm{ax})]{\Gamma,x:\alpha\vdash x:\alpha}{}\\
\infer[(+)]{\Gamma\vdash\succ M:\tint}{\Gamma\vdash M:\tint}&
\infer[(\sigma)]{\Gamma\vdash M\esubst{x}{N} : \alpha}{
	\Gamma, x : \beta\vdash M : \alpha
	&
	\vdash N : \beta
}&
\mathclap{\infer[(\to_\mathrm{I})]{\Gamma\vdash\lambda x.M :\alpha\to\beta}{\Gamma,x:\alpha\vdash M : \beta}}\\
\infer[(-)]{\Gamma\vdash\pred M:\tint}{\Gamma\vdash M:\tint}
&
\infer[(\mathrm{ifz})]{\Gamma\vdash \ifterm LMN:\alpha}{
	\Gamma\vdash L:\tint
	&
	\Gamma\vdash M:\alpha
	&
	\Gamma\vdash N:\alpha
}&
\infer[(\mathrm{Y})]{\Gamma\vdash \fix M:\alpha}{\Gamma\vdash M : \alpha\rightarrow \alpha}\\
\ear
$
\caption{\EPCF{} type assignment system. In the rule $(\sigma)$ we enforce $N$ to be closed.}\label{fig:etyping}
\end{figure*}

The following lemma summarizes the main properties of the type assignment system.
\begin{lem}\label{lem:epcftyping} Let $M\in\Lame$, $\alpha,\beta\in\Types$ and $\Gamma$ be a typing environment.
\bsub
\item \label{lem:epcftyping1} (Syntax directedness) Every derivable judgement $\Gamma\vdash M : \alpha$ admits a unique derivation.
\item \label{lem:epcftyping2} (Strengthening) $\Gamma,x:\beta\vdash M : \alpha$ and $x\notin\FV{M}$ hold if and only if $\Gamma\vdash M : \alpha$ does. Thus, an \EPCF{} program $M$ is typable if it is typable in the empty environment.
\item \label{lem:epcftyping4} (Subject reduction) 
For all $M\in\Prog{\E}$, $\vdash M : \alpha$ and $M \redwh N$ entail $\vdash N : \alpha$.
\esub
\end{lem}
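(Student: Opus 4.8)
The statement bundles three properties, so I would prove them one at a time, each by induction on the relevant structure.

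For part~\eqref{lem:epcftyping1} (syntax directedness), the plan is a straightforward induction on the structure of $M\in\Lame$. The key observation is that the \EPCF{} type assignment system in Figure~\ref{fig:etyping} has exactly one rule per syntactic constructor: the head constructor of $M$ determines which rule must conclude the derivation. For instance, if $M = M'\esubst{x}{N}$, the only applicable rule is $(\sigma)$, forcing the shape of the premises $\Gamma,x:\beta\vdash M':\alpha$ and $\vdash N:\beta$. One then applies the induction hypothesis to the subterms. The only point requiring a remark is the $(\sigma)$ rule: uniqueness of the intermediate type $\beta$ follows because $N$ is closed and typed in the empty environment, and by induction its derivation (hence its type) is unique.

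For part~\eqref{lem:epcftyping2} (strengthening), I would again induct on $M$. The $(\Rightarrow)$ direction removes an unused hypothesis $x:\beta$ from $\Gamma$; the $(\Leftarrow)$ direction weakens by adding it. In both directions the inductive cases follow by pushing the hypothesis manipulation through the premises, using $x\notin\FV{M}$ to know that $x$ is never consulted by the $(\mathrm{ax})$ rule. The delicate cases are the binders: for $(\to_\mathrm{I})$ and $(\sigma)$ one uses Barendregt's variable convention (or an $\alpha$-renaming) to ensure the bound variable differs from $x$, so that adding or removing $x:\beta$ commutes with extending the environment by the bound variable. The final sentence of~\eqref{lem:epcftyping2} is then immediate: by~\eqref{lem:epcftyping2} applied repeatedly, a closed program typable in some $\Gamma$ is typable in the empty environment, since none of the variables in $\dom(\Gamma)$ can be free in $M$.

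For part~\eqref{lem:epcftyping4} (subject reduction), I would proceed by cases on the redex fired in $M\redwh N$, using the fact (Definition~\ref{def:rewriting}) that $\redwh$ is the union of $\redcr$ and $\redsc$ and that every step has the form $\Ev[R]\redwh\Ev[R']$ for an evaluation context $\Ev\square$. The standard move is first to prove a \emph{replacement} sublemma: if $\Gamma\vdash\Ev[R]:\alpha$ then the hole is typed with some $\gamma$, i.e.\ $\Gamma\vdash R:\gamma$, and substituting any $R'$ with $\Gamma\vdash R':\gamma$ back into the context preserves the type $\alpha$; this reduces the problem to checking each redex/contractum pair $R\redwh R'$ in isolation. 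The arithmetic, conditional, and $\fix$ cases are routine inversions of the typing rules. I expect the \textbf{main obstacle} to be the two genuinely substitutional redexes, namely the $\beta_v$-step $(\lam x.M)^\sigma N\redcr M^\sigma\esubst{x}{N}$ and the variable percolation step $x^\sigma\redsc\sigma(x)=N$. For $\beta_v$ one inverts the $(\to_\mathrm{E})$, $(\sigma)$ and $(\to_\mathrm{I})$ rules to extract the type of the bound variable and then rebuilds the derivation of $M^\sigma\esubst{x}{N}$ via the $(\sigma)$ rule; for percolation one must check that $\sigma(x)$ carries the same type as $x^\sigma$, which is exactly what the $(\sigma)$ typing records (and here the closedness condition, together with part~\eqref{lem:epcftyping2}, guarantees that the closed replacement term $N$ keeps its type in the larger environment). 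Care is needed because $\redsc$ distributes a substitution $\sigma$ across a constructor (e.g.\ $(M\cdot N)^\sigma\redsc M^\sigma\cdot N^\sigma$), so subject reduction for those cases rests on showing that pushing $\sigma$ inward commutes with the typing rules — again a direct consequence of syntax directedness and the closedness of the substituted terms.
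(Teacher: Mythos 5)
Your treatment of strengthening and subject reduction matches the paper's own proofs essentially step for step: strengthening is proved there by the same induction on the shape of $M$, and subject reduction by induction on the decomposition $M = \E[M']$ with the contracted redex $M'$, where the only interesting cases are the head-position ones ($\E\square = \square$) and the cases $\E\square \neq \square$ are dispatched by the induction hypothesis. Your ``replacement sublemma'' is just an explicit repackaging of that context induction, and it is sound here precisely because evaluation contexts bind no variables. The key redex cases are also handled identically: for $(\lambda x.M_1)^\sigma\cdot M_2$ one inverts $(\to_\mathrm{E})$, the chain of $(\sigma)$'s and $(\to_\mathrm{I})$, then rebuilds the derivation of $M_1^\sigma\esubst{x}{M_2}$ with applications of $(\sigma)$; for the percolation step $x^\sigma \redsc \sigma(x)$ the required judgement $\vdash \sigma(x) : \alpha$ is read off the $(\sigma)$ premises; and the distribution steps like $(M\cdot N)^\sigma \redsc M^\sigma\cdot N^\sigma$ are decompose-and-reassemble, as in the paper.

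There is, however, a step in your proof of syntax directedness that fails. You claim that the intermediate type $\beta$ in the $(\sigma)$ case is unique ``because $N$ is closed and typed in the empty environment, and by induction its derivation (hence its type) is unique.'' The induction hypothesis gives at most uniqueness of the derivation of each \emph{fixed} judgement $\vdash N : \beta$; it says nothing about uniqueness of $\beta$ itself, and closed terms do not have unique types: $\vdash \lambda y.y : \beta\to\beta$ holds for every $\beta\in\Types$. Concretely, $\vdash \mathbf{0}\esubst{x}{\lambda y.y} : \tint$ admits one derivation for each choice of $\beta$, so uniqueness of the cut type cannot be obtained this way; the same non-uniqueness occurs for the cut type in $(\to_\mathrm{E})$, a case your argument does not even single out. (The paper's own proof of this item is only ``trivial by inspection''; what is actually true, and what the paper uses downstream, is the weaker reading of syntax directedness that the head constructor of $M$ determines the last rule applied. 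Your induction is fine for that reading, but the uniqueness-of-$\beta$ justification as you state it is not salvageable.)
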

\noindent 
We now move on to proving that \PCF{} and \EPCF{} operational semantics coincide on closed terms of type $\tint$. To this purpose, we first introduce the \emph{collapse} $\uf M$ of an \EPCF{} term $M$ defined by performing all of the internal explicit substitutions.

\begin{defi}
\bsub
\item Given an \EPCF{} term $M$, define a \PCF{} term $\uf M\in\Lam^\PCF$ as follows:
\[
	\bar{rclcrcl}
	\uf{x}&=& x &&	\uf{\mathbf{0}}&=& \mathbf{0} \\
	\uf{(M\cdot N)} &=& \uf{M}\cdot \uf{N} &&	\uf{(\pred M)}&=& \pred\uf{M} \\
	\uf{(\lam x.M)}&=& \lam x.\uf M&&\uf{(\succ M)}&=& \succ\uf{M} \\
	\uf{(\fix M)}&=& \fix(\uf M) &\qquad&\uf{(\ifterm L{M}{N})}&=& \ifterm {\uf L}{\uf M}{\uf N}\\
	\uf{(M\esubst{x}{N})}&=& \uf M\subst{x}{\uf N}\\
	\ear
\]
\item The \emph{head size} $\headsize{-} : \Lame\to\nat$ of an \EPCF{} term is defined as follows:
\[
\bar{rclcrcl}
\headsize{x}=\headsize{\mathbf{0}}&=& 1 & & \headsize{M\esubst{x}{N}}&=&\headsize{M}\cdot (\headsize{N} + 1)\\
\headsize{M\cdot N} &=& \headsize{M}+1 && \headsize{\lam x.M} &=& \headsize{M}+1\\\headsize{\pred M}&=&\headsize{M} + 1&&\headsize{\fix M} &=&  \headsize{M}+1 \\ \headsize{\succ M}&=&\headsize{M}+1&&\headsize{\ifterm L{M}{N}}&=&\headsize{L}+1
\ear
\]
\item The map $\headsize{-}$ is extended to explicit substitutions $\sigma = \esubst{x_1}{N_1}\cdots \esubst{x_n}{N_n}$, by setting 
\[
	\headsize{\sigma} = \prod_{i=1}^n(\headsize{N_i} + 1).
\]
\esub
\end{defi}
\noindent 
Recall that $\Prog{\E}$ denotes the set of $\EPCF$ programs, i.e.\ the set of closed $\EPCF$ terms $M$ whose explicit substitutions (if any) are of the form $\esubst{x}{N}$ for $N$ closed.

\begin{lem}\label{lem:trivial} 
\bsub\item\label{lem:trivial1} If $M\in\Prog{\E}$ then $\uf M$ is a \PCF{} program.
\item\label{lem:trivial2} If $P\in\Lam^\PCF$ then $\uf P = P$.
\esub
\end{lem}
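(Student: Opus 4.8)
The plan is to prove both items by structural induction on the \EPCF{} term, dispatching the easier item~\ref{lem:trivial2} first. For item~\ref{lem:trivial2}, I would observe that a \PCF{} term $P$ contains no explicit-substitution constructor, so the only clause of the definition of $\uf{(-)}$ that does anything nontrivial --- namely $\uf{(M\esubst{x}{N})} = \uf M\subst{x}{\uf N}$ --- never applies. Every remaining clause merely pushes $\uf{(-)}$ through a \PCF{} constructor, so a routine induction on the structure of $P$ yields $\uf P = P$: the base cases $x$ and $\mathbf 0$ are immediate, and each inductive case follows by applying the induction hypothesis to the immediate subterms.

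Item~\ref{lem:trivial1} cannot be obtained by a direct induction on the statement itself, because the immediate subterms of a closed term need not be closed (for instance, $M$ inside $\lam x.M$ may have $x$ free). Instead I would first establish, by a single structural induction over all of $\Lame$ (not merely over programs), the two auxiliary facts: \textbf{(a)} $\uf M\in\Lam^\PCF$, i.e.\ the collapse never leaves an explicit substitution behind; and \textbf{(b)} $\FV{\uf M}\subseteq\FV M$. Fact~\textbf{(a)} is immediate for the \PCF{} constructors, and in the substitution clause it uses that the capture-free substitution $\uf M\subst{x}{\uf N}$ of one \PCF{} term into another (both \PCF{} terms by the induction hypothesis) is again a \PCF{} term. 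Fact~\textbf{(b)} is likewise routine for the \PCF{} constructors, and in the substitution clause it invokes the standard inclusion $\FV{A\subst{x}{B}}\subseteq(\FV A\setminus\set x)\cup\FV B$ together with the hypotheses $\FV{\uf M}\subseteq\FV M$ and $\FV{\uf N}\subseteq\FV N$. Item~\ref{lem:trivial1} then follows at once: if $M\in\Prog{\E}$ then $M$ is closed, so $\FV{\uf M}\subseteq\FV M=\emptyset$ by~\textbf{(b)} and $\uf M\in\Lam^\PCF$ by~\textbf{(a)}; hence $\uf M$ is a closed \PCF{} term, that is, a \PCF{} program.

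There is essentially no serious obstacle here --- this is exactly the kind of bookkeeping lemma whose label is well deserved --- but a little care is needed in the two clauses that manipulate binders. In the abstraction clause one relies on Barendregt's variable convention so that the bound variable does not clash, and in the explicit-substitution clause one must check that $\subst{x}{\uf N}$ behaves correctly on free variables; both points are subsumed by the standard substitution lemma cited above. I would also note, since it clarifies what is actually being used, that the closedness condition imposed on the bodies $N$ of explicit substitutions in the definition of $\Prog{\E}$ is not needed for this particular statement: closedness of $\uf M$ follows purely from closedness of $M$ via fact~\textbf{(b)}.
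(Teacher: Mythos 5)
Your proof is correct and takes essentially the same route as the paper, whose proof of both items is just the one-liner ``by structural induction on $M$'' (resp.\ on $P$). Your explicit strengthening for item~(i) --- establishing $\uf M\in\Lam^\PCF$ and $\FV{\uf M}\subseteq\FV{M}$ for arbitrary $M\in\Lame$, rather than inducting on the closed-term statement directly --- is exactly the generalization that the paper's terse induction leaves implicit, since subterms of a program need not be programs.
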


\begin{proof} \bsub\item By structural induction on $M$.
\item By structural induction on $P$.\qedhere
\esub
\end{proof}
\noindent 
For a proof of the following proposition, we refer to the technical Appendix~\ref{app:regPCF}.\pagebreak

\begin{prop}\label{prop:MreddNthenMreddN}
\bsub
\item Let $M,N\in\Prog{\E}$ be such that $M \redcr N$.
Then $\uf M \to_\PCF \uf N$.\label{prop:McrNthenMredN} 
\item Let $M,N\in\Prog{\E}$ be such that $M \redsc N$.
Then $\uf M = \uf N$.\label{prop:MscNthenMisN} 
\esub
\end{prop}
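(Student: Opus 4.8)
The plan is to factor both statements through the evaluation context. Writing the given step as $M = \Ev[M']\to\Ev[N']=N$, where $M'\to N'$ is a contraction occurring at the hole of an evaluation context $\Ev\square$, I would first reduce each claim to a statement about the bare redex $M'$ and its contractum $N'$, and then propagate it through $\Ev\square$. This rests on two auxiliary facts. The first is a \emph{substitution lemma for the collapse}: for any \EPCF{} program $M^\sigma$ with $\sigma = \esubst{x_1}{N_1}\cdots\esubst{x_n}{N_n}$, one has $\uf{(M^\sigma)} = \uf{M}\subst{x_1}{\uf{N_1}}\cdots\subst{x_n}{\uf{N_n}}$; since each $\uf{N_i}$ is a closed \PCF{} program (Lemma~\ref{lem:trivial}\eqref{lem:trivial1}) and the $x_i$ are pairwise distinct by the variable convention, these substitutions do not interfere, and consequently the collapse commutes with every term constructor after percolation, e.g.\ $\uf{((M\cdot N)^\sigma)} = \uf{(M^\sigma)}\cdot\uf{(N^\sigma)}$ and $\uf{((\lam x.M)^\sigma)} = \lam x.\uf{(M^\sigma)}$. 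The second is a \emph{context lemma}: $\uf{\Ev[L]} = \uf{\Ev}[\uf{L}]$, where $\uf{\Ev}$ is the \PCF{} evaluation context obtained by collapsing the components of $\Ev\square$. This is proved by a straightforward induction on $\Ev\square$, and it hinges on the fact that an evaluation context places no binder (neither an abstraction nor an explicit substitution) around its hole, so no substitution descending from above can affect $\uf{L}$.

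With these in hand, for part~\eqref{prop:MscNthenMisN} I would verify case by case that every percolation redex $M'\redsc N'$ satisfies $\uf{M'}=\uf{N'}$. For a variable, $\uf{(x^\sigma)} = x\subst{x_1}{\uf{N_1}}\cdots\subst{x_n}{\uf{N_n}}$ collapses to $\uf{(\sigma(x))}$ when $x\in\dom(\sigma)$ (the closedness of the $\uf{N_i}$ killing all later substitutions) and to $x$ otherwise; the remaining rules are precisely the assertions that the collapse distributes over $\cdot$, $\pred$, $\succ$, $\ifterm{}{}{}$, $\fix$ and fixes $\mathbf{0}$, which are instances of the substitution lemma. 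The context lemma then upgrades $\uf{M'}=\uf{N'}$ to $\uf{M} = \uf{\Ev}[\uf{M'}] = \uf{\Ev}[\uf{N'}] = \uf{N}$.

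For part~\eqref{prop:McrNthenMredN}, I would check that each computation redex $M'\redcr N'$ induces a top-level \PCF{} step $\uf{M'}\to_\PCF\uf{N'}$. The crucial case is $(\lam x.M)^\sigma N\redcr M^\sigma\esubst{x}{N}$: the substitution lemma gives $\uf{M'} = (\lam x.\uf{(M^\sigma)})\cdot\uf{N}$ and $\uf{N'} = \uf{(M^\sigma)}\subst{x}{\uf{N}}$, which is exactly an instance of the \PCF{} $\beta$-rule. The arithmetic and fixed-point redexes collapse, using Lemma~\ref{lem:trivial}\eqref{lem:trivial2} on the numerals, to the matching \PCF{} rules (for instance $\pred(\num{n+1})$ collapses to $\pred(\succ(\num n))$, aligning with $\Ev[\pred(\succ(\num n))]\to_\PCF\Ev[\num n]$). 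Since $\uf{\Ev}$ is a \PCF{} evaluation context and weak head reduction is closed under such contexts, the context lemma yields $\uf{M} = \uf{\Ev}[\uf{M'}]\to_\PCF\uf{\Ev}[\uf{N'}] = \uf{N}$.

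The main obstacle is the substitution lemma itself: its proof must carefully track $\alpha$-renaming and exclude variable capture, and it is exactly here that the restriction to \emph{programs}---i.e.\ closed explicit substitutions---is indispensable. Without it the identity $\uf{((\lam x.M)^\sigma)} = \lam x.\uf{(M^\sigma)}$ breaks, mirroring the unsound reduction on non-programs noted in Remark~\ref{rem:EPCFprops}. Everything past that point is routine case analysis.
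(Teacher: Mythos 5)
Your proposal is correct and takes essentially the same approach as the paper's proof: both arguments reduce to a case analysis on the root redex (where the $\beta$ and variable cases rest on pushing the collapsed substitutions under binders and past one another, legitimate precisely because explicit substitutions are closed) combined with propagation through the evaluation context. The only difference is organizational --- the paper performs the context propagation as the inductive cases of an induction on the derivation of $M \redcr N$ (resp.\ $M \redsc N$) and establishes your substitution and context facts inline in the relevant base cases, whereas you factor them out as standalone lemmas.
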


\begin{cor}\label{cor:MreddwhthenreddPCF} 
Let $M\in\Prog{\E}$ be such that $M\reddwh \num n$. Then $\uf{M} \redd[\PCF]\num n$.
\end{cor}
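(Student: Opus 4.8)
The plan is to derive Corollary~\ref{cor:MreddwhthenreddPCF} from Proposition~\ref{prop:MreddNthenMreddN} by induction on the length $\len{M\reddwh\num n}$ of the given weak-head reduction sequence. By Remark~\ref{rem:EPCFprops}\eqref{rem:EPCFprops4} such a reduction to the value $\num n$ exists, and since $\Prog{\E}$ is closed under $\redwh$ (Remark~\ref{rem:EPCFprops}), every intermediate term $M_i$ in the sequence $M = M_1\redwh\cdots\redwh M_k = \num n$ is again an \EPCF{} program, so the hypotheses of Proposition~\ref{prop:MreddNthenMreddN} apply at each step.

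First I would set up the base case. If the sequence has length one then $M = \num n$ already, and by Lemma~\ref{lem:trivial}\eqref{lem:trivial2} we have $\uf{M} = \uf{\num n} = \num n$, so $\uf M\redd[\PCF]\num n$ holds by reflexivity of $\redd[\PCF]$.

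For the inductive step, suppose $M\redwh M'\reddwh\num n$ where the tail $M'\reddwh\num n$ is strictly shorter, and $M'\in\Prog{\E}$. Since $\redwh$ is the union of $\redcr$ and $\redsc$ (Definition~\ref{def:rewriting}\eqref{def:rewriting2}), the first step $M\redwh M'$ is either a computation step or a percolation step. If $M\redcr M'$, then Proposition~\ref{prop:MreddNthenMreddN}\eqref{prop:McrNthenMredN} gives $\uf M\to_\PCF\uf{M'}$; combining this single \PCF{} step with the induction hypothesis $\uf{M'}\redd[\PCF]\num n$ and transitivity of $\redd[\PCF]$ yields $\uf M\redd[\PCF]\num n$. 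If instead $M\redsc M'$, then Proposition~\ref{prop:MreddNthenMreddN}\eqref{prop:MscNthenMisN} gives $\uf M = \uf{M'}$, so the induction hypothesis applied to $M'$ directly delivers $\uf M = \uf{M'}\redd[\PCF]\num n$. In both cases the conclusion follows.

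I do not expect any genuine obstacle here, since the corollary is essentially a routine lifting of the single-step Proposition~\ref{prop:MreddNthenMreddN} along the reflexive-transitive closure. The only point requiring minor care is the bookkeeping that keeps each intermediate term inside $\Prog{\E}$—needed so that Proposition~\ref{prop:MreddNthenMreddN} is applicable at every step—which is guaranteed by the closure property recorded in Remark~\ref{rem:EPCFprops}. The asymmetry between the two cases (a $\redcr$ step adds one \PCF{} reduction step whereas a $\redsc$ step collapses to an equality of \PCF{} terms) is exactly what makes the percolation reduction transparent at the level of the collapse, and this is what allows the whole \EPCF{} reduction to \num n to be mirrored by a possibly shorter \PCF{} reduction to the same numeral.
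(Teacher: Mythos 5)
Your proof is correct and follows essentially the same route as the paper's: induction on the length of the weak-head reduction sequence, with the base case settled by $\uf{\num n} = \num n$ and reflexivity, and the inductive step combining Proposition~\ref{prop:MreddNthenMreddN} with the induction hypothesis and transitivity of $\redd[\PCF]$. The only cosmetic difference is that you split the first step into the $\redcr$ and $\redsc$ cases explicitly, whereas the paper subsumes both by reading the proposition as giving $\uf M \redd[\PCF] \uf N$ directly.
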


\begin{proof} By induction on the length  of the reduction sequence $\ell = \len{M \reddwh \num n}$.

Case $\ell = 0$. Then $M = \num n = \uf M$, so this case follows by reflexivity of $\redd[\PCF]$.

Case $\ell > 0$. Then there exists $N\in\Prog{\E}$ such that $M\to_\wh N\reddwh \num n$ where $\len{N\reddwh \num n}<\ell$.
By Proposition~\ref{prop:MreddNthenMreddN}, we have $\uf M \redd[\PCF] \uf N$. 
By induction hypothesis, we obtain $\uf N\redd[\PCF] \num n$.
By transitivity, we conclude $\uf{M} \redd[\PCF]\num n$.
\end{proof}

\begin{lem}\label{lem:secSN}
The percolation reduction $\redsc$ on \EPCF{} terms is strongly normalizing. More precisely:
\[
	M \redsc N\imp \headsize{M} > \headsize{N}
\]
\end{lem}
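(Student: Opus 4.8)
The plan is to show that the natural-number measure $\headsize{-}$ strictly decreases along every $\redsc$-step; since $\headsize{M}\ge 1$ for all $M$, this strict decrease immediately yields strong normalization. Before the case analysis I would record three elementary facts, each by a routine induction. First, $\headsize{M}\ge 1$ for every $M\in\Lame$ (induction on $M$; the only non-obvious case is $M\esubst{x}{N}$, where $\headsize{M}\cdot(\headsize{N}+1)\ge 1\cdot 2$). Second, iterating the clause $\headsize{M\esubst{x}{N}}=\headsize{M}\cdot(\headsize{N}+1)$ gives the product formula
\[
\headsize{M^\sigma}=\headsize{M}\cdot\headsize{\sigma},\qquad\text{where }\headsize{\sigma}=\prod_{i=1}^{n}(\headsize{N_i}+1).
\]
Third, for a \emph{non-empty} $\sigma$ one has $\headsize{\sigma}\ge 2$, since each factor $\headsize{N_i}+1\ge 2$ by the first fact.

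Next I would isolate the role of the evaluation context. Because no frame of the grammar for $\Ev\square$ places the hole underneath an explicit substitution, and because $\headsize{-}$ of an application, of $\pred$, of $\succ$ and of $\mathbf{ifz}$ depends only on the head immediate subterm, a straightforward induction on $\Ev\square$ shows that $\headsize{\Ev[M]}=\headsize{M}+k_\Ev$ for a constant $k_\Ev\in\nat$ depending only on $\Ev\square$. In particular $\headsize{-}$ is strictly monotone under plugging: $\headsize{M}>\headsize{N}$ implies $\headsize{\Ev[M]}>\headsize{\Ev[N]}$. This reduces the lemma to checking, for each of the eight rule schemas, that the bare redex $R$ and its contractum $R'$ satisfy $\headsize{R}>\headsize{R'}$.

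I would then go through the schemas using the product formula. For the lookup rule $x^\sigma\redsc N=\sigma(x)$ with $x=x_j$, we have $\headsize{x^\sigma}=\headsize{\sigma}\ge\headsize{N_j}+1>\headsize{N_j}=\headsize{N}$, the remaining factors being $\ge 1$. For $\mathbf{0}^\sigma\redsc\mathbf{0}$ and $y^\sigma\redsc y$ (with $y\notin\dom(\sigma)$) the left-hand side has measure $\headsize{\sigma}\ge 2>1$. For the five \emph{distribution} rules, taking $(M\cdot N)^\sigma\redsc M^\sigma\cdot N^\sigma$ as representative, the product formula gives
\[
\headsize{(M\cdot N)^\sigma}=(\headsize{M}+1)\cdot\headsize{\sigma}=\headsize{M}\cdot\headsize{\sigma}+\headsize{\sigma}>\headsize{M}\cdot\headsize{\sigma}+1=\headsize{M^\sigma\cdot N^\sigma},
\]
and the cases of $\pred$, $\succ$, $\mathbf{ifz}$ and $\fix$ are identical, since the measure of each of those constructors ignores all but its head subterm. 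Combining the redex decrease with monotonicity under $\Ev\square$ yields $\headsize{M}>\headsize{N}$ for every step $M\redsc N$, and hence strong normalization.

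The only point that requires genuine care — the place where the definition of $\headsize{-}$ has been tuned — is the distribution step: pushing $\sigma$ through a constructor replaces a \emph{multiplied} factor $\headsize{\sigma}$ on the head by a single $+1$, so the step strictly decreases the measure precisely because $\headsize{\sigma}\ge 2$ for non-empty $\sigma$. Dually, I must make sure the measure of an argument never contributes, which is exactly why $\headsize{M\cdot N}=\headsize{M}+1$ discards $N$; this is also what makes the context contribution additive rather than multiplicative, keeping the monotonicity step trivial.
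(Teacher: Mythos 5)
Your proof is correct and takes essentially the same route as the paper's: the decisive computation is identical (each percolation redex strictly decreases $\headsize{-}$ because distribution trades the multiplicative factor $\headsize{\sigma}\ge 2$ for an additive $+1$), and your explicit formula $\headsize{\Ev[M]}=\headsize{M}+k_\Ev$ for evaluation contexts is just a repackaging of the paper's inductive cases on the derivation of $M\redsc N$.
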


\begin{proof}
By induction on a derivation of $M\redsc N$. In the following we consider a non-empty list of explicit substitutions $\sigma = \esubst{x_1}{N_1}\cdots \esubst{x_n}{N_n}$, i.e.\ $n > 0$.
It follows that $\headsize{\sigma} > 1$.
\begin{itemize}
\item Base cases.
\begin{itemize}
\item Case $M = x_i^\sigma$ and $N = \sigma(x_i) = N_i$. Then $\headsize{x_i^\sigma} = 1\cdot\headsize{\sigma} > \headsize{N_i} = \headsize{N}$.
\item Case $M = \mathbf{0}^\sigma$ and $N = \mathbf{0}$. Then $\headsize{\mathbf{0}^\sigma} = 1\cdot\headsize{\sigma} > 1 = \headsize{\mathbf{0}}$.
\item Case $M = y^\sigma$, with $y \notin\dom(\sigma)$, and $N = y$. Then $\headsize{y^\sigma} =1\cdot\headsize{\sigma} > 1 = \headsize{y}$.
\item Case $M = (M_1\cdot M_2)^\sigma$. Then $\headsize{(M_1\cdot M_2)^\sigma} = (\headsize{M_1} + 1)\cdot \headsize{\sigma} > \headsize{M_1}\cdot \headsize{\sigma} + 1 = \headsize{M_1^\sigma \cdot M_2^\sigma}$.
\item Case $M = (\pred M')^\sigma$. Then $\headsize{(\pred M')^\sigma} = (\headsize{M'} + 1)\cdot \headsize{\sigma} > \headsize{M'}\cdot \headsize{\sigma} + 1 = \headsize{\pred (M')^\sigma}$.
\item Case $M = (\succ M')^\sigma$. Then $\headsize{(\succ M')^\sigma} = (\headsize{M'} + 1)\cdot \headsize{\sigma} > \headsize{M'}\cdot \headsize{\sigma} + 1 = \headsize{\succ (M')^\sigma}$.
\item Case $M = (\ifterm {M_1}{M_2}{M_3})^\sigma$. Then $\headsize{(\ifterm {M_1}{M_2}{M_3})^\sigma} = (\headsize{M_1} + 1)\cdot \headsize{\sigma} > \headsize{M_1}\cdot \headsize{\sigma} + 1 = \headsize{\ifterm {M_1^\sigma}{M_2^\sigma}{M_3^\sigma}}$.
\item Case $M = (\fix M')^\sigma$. Then $\headsize{(\fix M')^\sigma} = (\headsize{M'} + 1)\cdot \headsize{\sigma} > \headsize{M'}\cdot \headsize{\sigma} + 1 = \headsize{\fix (M')^\sigma}$.
\end{itemize}
\item Induction case $M = M_1\cdot M_2$ and $N = N_1\cdot M_2$ with $M_1\redsc N_1$. By induction hypothesis, we have $\headsize{M_1} > \headsize{N_1}$. Therefore $\headsize{M_1}+1 > \headsize{N_1}+1$.
\end{itemize}
The remaining cases follow analogously from the induction hypothesis.
\end{proof}

\begin{prop}\label{prop:infinitereduction}
Let $(M_n)_{n\in\nat}$ be a sequence of \EPCF{} programs such that $M_n\!\redwh M_{n+1}$. Then for all $i\in\nat$ there exists an index $j>i$ such that $M_i \reddwh M_j$ and $\uf {M_i} \to_\PCF \uf M_j$.
\end{prop}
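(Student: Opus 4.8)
The plan is to exploit the decomposition $\redwh\ =\ \redcr\cup\redsc$ together with the two complementary facts already established: a percolation step leaves the collapse unchanged (Proposition~\ref{prop:MreddNthenMreddN}\eqref{prop:MscNthenMisN}), whereas a computation step induces exactly one \PCF{} weak-head step on the collapse (Proposition~\ref{prop:MreddNthenMreddN}\eqref{prop:McrNthenMredN}). Fixing $i\in\nat$, the idea is to locate the first computation step occurring at or after index $i$, observe that the intervening percolation steps are all collapse-preserving, and take $j$ to be the index immediately following that computation step.

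The crux is the existence of such a computation step, and here I would argue by contradiction. Suppose every step $M_i\redwh M_{i+1}\redwh\cdots$ from index $i$ onward were a percolation step; then $(M_n)_{n\ge i}$ would be an infinite $\redsc$-reduction sequence. By Lemma~\ref{lem:secSN} each percolation step strictly decreases the head size, so $\headsize{M_i}>\headsize{M_{i+1}}>\cdots$ would form an infinite strictly decreasing sequence of natural numbers, which is impossible. Hence there is a least index $k\ge i$ with $M_k\redcr M_{k+1}$, and every step $M_l\redwh M_{l+1}$ for $i\le l<k$ is a percolation step.

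It then remains to assemble the two ingredients. For each $l$ with $i\le l<k$, Proposition~\ref{prop:MreddNthenMreddN}\eqref{prop:MscNthenMisN} gives $\uf{M_l}=\uf{M_{l+1}}$, so a routine induction on $k-i$ yields $\uf{M_i}=\uf{M_k}$ (this includes the degenerate case $k=i$, where the chain of equalities is empty). The computation step $M_k\redcr M_{k+1}$ gives $\uf{M_k}\to_\PCF\uf{M_{k+1}}$ by Proposition~\ref{prop:MreddNthenMreddN}\eqref{prop:McrNthenMredN}. Setting $j=k+1>i$, I conclude $\uf{M_i}=\uf{M_k}\to_\PCF\uf{M_{k+1}}=\uf{M_j}$, while $M_i\reddwh M_j$ holds trivially since $M_i\redwh\cdots\redwh M_j$ is an initial segment of the given sequence. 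The only genuinely delicate point is the termination argument of the previous paragraph; everything else is bookkeeping over the two cases of $\redwh$.
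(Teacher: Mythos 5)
Your proposal is correct and follows essentially the same route as the paper: both use the decomposition $\redwh\ =\ \redcr\cup\redsc$, invoke Lemma~\ref{lem:secSN} to guarantee that the percolation steps following $M_i$ cannot go on forever, and then apply the two parts of Proposition~\ref{prop:MreddNthenMreddN} to get $\uf{M_i}=\uf{M_k}\to_\PCF\uf{M_{k+1}}$ with $j=k+1$. The only cosmetic difference is that you unfold the strong-normalization argument into an explicit contradiction via the decreasing head-size measure, whereas the paper cites the lemma directly.
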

\begin{proof} Consider an arbitrary $M_i$. By Lemma~\ref{lem:secSN} $M_i \reddsc M_{k}$, for some $k\ge i$ such that $M_k$ in $\redsc$-normal form.
Therefore $M_k\redwh M_{k+1}$ must be a computation step, i.e.\ $M_k\redcr M_{k+1}$.
By Proposition~\ref{prop:MreddNthenMreddN}, $\uf{M_i} = \uf{M_k}\to_\PCF \uf{M_{k+1}}$. 
Conclude by taking $j = k+1 > i$.
\end{proof}
\begin{thm}\label{thm:eqPCFandEPCF} 
For a \PCF{} program $P$ having type $\tint$, we have:
\[
P\redd[\PCF] \num n\iff P\reddwh \num n
\]
\end{thm}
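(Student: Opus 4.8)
The plan is to treat the two implications separately, using the collapse $\uf{(-)}$ as a bridge between \EPCF{} and \PCF{} reductions, and to exploit the determinism of both weak head reductions.

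For the direction $(\Leftarrow)$ there is essentially nothing to do. A \PCF{} program is in particular an \EPCF{} program, so $P\in\Prog{\E}$, and $\uf P = P$ by Lemma~\ref{lem:trivial}\eqref{lem:trivial2}. Hence, assuming $P\reddwh\num n$, Corollary~\ref{cor:MreddwhthenreddPCF} immediately yields $P = \uf P \redd[\PCF]\num n$.

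For the direction $(\Rightarrow)$, assume $P\redd[\PCF]\num n$. The crux is to show that the (deterministic, by Remark~\ref{rem:EPCFprops}\eqref{rem:EPCFprops3}) w.h.\ reduction issuing from $P$ in \EPCF{} terminates. I would argue by contradiction: were it infinite, there would be a sequence $P = M_0\redwh M_1\redwh\cdots$ of \EPCF{} programs (they stay in $\Prog{\E}$, which is closed under $\redwh$). Proposition~\ref{prop:infinitereduction} then lets me extract an infinite chain of genuine \PCF{} steps on the collapses, $\uf{M_0}\to_\PCF\uf{M_{i_1}}\to_\PCF\uf{M_{i_2}}\to_\PCF\cdots$. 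Since $\uf{M_0} = \uf P = P$, this is an infinite $\to_\PCF$ reduction from $P$; but $\to_\PCF$ is deterministic and $\num n$ is a $\to_\PCF$-normal form, so the hypothesis $P\redd[\PCF]\num n$ forces the unique reduction path out of $P$ to be finite, a contradiction.

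Having established that $P$ is w.h.-normalizing, Remark~\ref{rem:EPCFprops}\eqref{rem:EPCFprops4} gives $P\reddwh V$ for some $V\in\Val[\E]$. Projecting this reduction through the collapse by iterating Proposition~\ref{prop:MreddNthenMreddN} (each $\redcr$ step contributing a $\to_\PCF$ step and each $\redsc$ step an identity) yields $P = \uf P\redd[\PCF]\uf V$. To identify $V$, note that subject reduction (Lemma~\ref{lem:epcftyping}\eqref{lem:epcftyping4}) gives $\vdash V : \tint$; as an \EPCF{} value of the form $(\lam x.M)^\sigma$ necessarily carries an arrow type, we must have $V = \num m$ and hence $\uf V = \num m$. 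Thus $P\redd[\PCF]\num m$, and comparing with $P\redd[\PCF]\num n$ through determinism of $\to_\PCF$ (numerals being normal forms) forces $m = n$, so $P\reddwh\num n$. The main obstacle is precisely the termination step: a single \PCF{} $\beta$-redex is mirrored in \EPCF{} by one computation step followed by a burst of percolation steps, and one must exclude the possibility that this interleaving diverges without ever reaching a numeral. Proposition~\ref{prop:infinitereduction} is exactly the device that rules this out, transporting any putative non-termination back to \PCF{}, where it is impossible for a program that reduces to a numeral.
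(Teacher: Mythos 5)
Your proof is correct, and it rests on the same two pillars as the paper's: the collapse projection (Corollary~\ref{cor:MreddwhthenreddPCF}, via Lemma~\ref{lem:trivial}) for $(\Leftarrow)$, and the transport of divergence (Proposition~\ref{prop:infinitereduction}) for $(\Rightarrow)$. The difference is one of logical organization, but it is not vacuous: the paper proves $(\Rightarrow)$ by contraposition---from $P \not\reddwh \num n$ it claims, via subject reduction, an infinite $\redwh$-path, then transports it to an infinite $\to_\PCF$-path---whereas you argue directly, first showing that $P\redd[\PCF]\num n$ forbids an infinite $\redwh$-path and then identifying the value that is reached. Your route is longer because the direct formulation forces you to pin down the w.h.\ normal form: you need Remark~\ref{rem:EPCFprops}\eqref{rem:EPCFprops4}, subject reduction (Lemma~\ref{lem:epcftyping}\eqref{lem:epcftyping4}) to see that the value has type $\tint$ and is therefore a numeral $\num m$, the collapse (Proposition~\ref{prop:MreddNthenMreddN}) to get $P\redd[\PCF]\num m$, and determinism of $\to_\PCF$ to conclude $m=n$. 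What this extra work buys is precisely the case the paper's contrapositive glosses over: $P\not\reddwh\num n$ is also compatible with $P\reddwh\num m$ for some $m\neq n$, in which case $P$ has \emph{no} infinite $\redwh$-path at all, and one must instead argue---exactly as you do---through determinism of $\to_\PCF$ that $P\not\redd[\PCF]\num n$. So your proof in fact covers a case left implicit in the paper; the only caveat is that determinism of $\to_\PCF$ (unique decomposition of a \PCF{} term into evaluation context and redex), which both arguments ultimately rely on, is never stated explicitly in the paper and deserves a one-line justification in your write-up.
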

\begin{proof} $(\Rightarrow)$ We prove the contrapositive. Assume that $P \not\reddwh \num n$. By Subject Reduction (Lemma~\ref{lem:epcftyping}\eqref{lem:epcftyping4}) $P$ must have an infinite $\redwh$ reduction path. By Proposition~\ref{prop:infinitereduction}, $\uf P$ must have an infinite $\to_\PCF$ reduction path. Conclude since, by Lemma~\ref{lem:trivial}\eqref{lem:trivial2}, $\uf P = P$.

$(\Leftarrow)$ Assume $P\reddwh \num n$. Since $P$ is a \PCF{} program it also belongs to $\Prog{\E}$.
By Corollary~\ref{cor:MreddwhthenreddPCF} we have $\uf P\redd[\PCF] \num n$.
Conclude since, by Lemma~\ref{lem:trivial}\eqref{lem:trivial2}, $\uf P = P$.
\end{proof}

\section{EAMs: Extended Addressing Machines}\label{sec:EAMs}

We extend the addressing machines from \cite{DellaPennaIM21} with instructions for performing arithmetic operations and conditional testing.
Natural numbers are represented by particular machines playing the role of numerals.

\subsection{Main definitions}

We consider fixed a countably infinite set $\Addrs$ of \emph{addresses} together with a distinguished countable subset $\mathbb{X} \subset \Addrs$, such that $\Addrs-\mathbb{X}$ remains infinite. 
Intuitively, $\mathbb{X}$ is the set of addresses that we reserve for the numerals, therefore hereafter we work under the hypothesis that $\mathbb{X} = \nnat[]$, an assumption that we can make without loss of generality.
\begin{defi} 
\bsub
\item Let $\Null\notin\Addrs$ be a ``null'' constant representing an uninitialised register. 
Set $\Addrs_\Null = \Addrs\cup\set{\Null}$.

\item 
	An \emph{$\Addrs$-valued tape} $T$ is a finite ordered list of addresses $T = [a_1,\dots,a_n]$ with $a_i\in\Addrs$ for all $i\,(1\le i \le n)$. When $\Addrs$ is clear from the context, we simply call $T$ a tape. We denote by $\Tapes$ the set of all $\Addrs$-valued tapes.	
\item
	Let $a\in\Addrs$ and $T,T'\in\Tapes$. We denote by $\Cons a {T}$ the tape having $a$ as first element and $T$ as tail. We write $\appT{T}{T'}$ for the concatenation of $T$ and $T'$, which is an $\Addrs$-valued tape itself.

\item
	Given a natural number $i \ge 0$, an $\Addrs_\Null$-valued \emph{register} $R_i$ is a memory-cell, accessible via its index $i$, and capable of storing either $\Null$ or an address $a\in\Addrs$. We write $\val{R_i}$ to represent the value stored in the register $R_i$. (The notation $\oc R_i$ is borrowed from ML, where $\oc$ represents an explicit dereferencing operator.)
\item
	Given $\Addrs_\Null$-valued registers $R_0,\dots,R_r$ for $r\ge 0$, an address $a\in\Addrs$ and an index $i\ge 0$, we write $\vec R\repl{R_i}{a}$ for the list of registers $\vec R$ where the value of $R_i$ has been updated by setting $\val{R_i} = a$.
Notice that, whenever $i > r$, we assume that the contents of $\vec R$ remains unchanged, i.e.\ $\vec R\repl{R_i}{a} = \vec R$.
\esub
\end{defi}
\noindent 
Intuitively, the contents of the registers $R_0,\dots,R_r$ constitute the \emph{state} of a machine, while the tape correspond to the list of its inputs.
The addressing machines from \cite{DellaPennaIM21} are endowed with only three instructions ($i,j,k,l$ range over indices of registers):
\begin{enumerate}[1.]
\item $\Load i$ : If the tape is non-empty, `pops' the address $a$ from the input tape $\Cons a {T}$ and stores $a$ in the register $R_i$.
If the tape is empty then the machine halts its execution.
\item $\Apply ijk$ : reads the addresses $a_1,a_2$ from $R_i$ and $R_j$ respectively, and stores in $R_k$ the address of the machine obtained by extending the tape of the machine of address $a_1$ with the address $a_2$. The resulting address is not calculated internally but rather obtained calling an external \emph{application map}.
\item $\Call i$ : transfers the computation to the machine having as address the value stored in $R_i$, whose tape is extended with the remainder of the current machine's tape.
\end{enumerate}
As a general principle, writing on a non-existing register does not cause issues as the value is simply discarded---this is in fact the way one can erase an argument. 
Attempts to read an uninitialized register can be avoided statically (see Lemma~\ref{lem:correction}).

We enrich the above set of instructions with arithmetic operations mimicking the ones present in \PCF:
\begin{enumerate}[1.,resume]
\item $\Ifz i j k l$: implements the ``\emph{is zero?}'' test on $!R_i$. Assuming that the value of $R_i$ is an address $n\in\nat$, the instruction stores in $R_l$ the value of $R_j$ or $R_k$, depending on whether $n = 0$.
\item $\Pred i j$: if $\val{R_i} \in\nat$, the value of $R_j$ becomes $\val{R_i}\ominus1 = \max(\val{R_i} - 1, 0)$.
\item $\Succ i j$: if $\val{R_i} \in\nat$, then the value of $R_j$ becomes $\val{R_i}+1$.
\end{enumerate}
Notice that the instructions above need $R_i$ to contain a natural number to perform the corresponding operation. However, they are also supposed to work on addresses of machines that compute a numeral. 
For this reason, the machine whose address is stored in $R_i$ must first be executed, and only if the computation terminates with a numeral is the arithmetic operation performed. 
If the computation terminates in an address not representing a numeral, then the machine halts. We will see that these terminations can be avoided using a type inference algorithm (see Proposition~\ref{prop:typing}, below).

\begin{defi}\label{def:progs}
\bsub
\item \label{def:progs1}
	A \emph{program} $P$ is a finite list of instructions generated by the following grammar, where $\varepsilon$ represents the empty string and $i,j,k,l$ are indices of registers:
	\[
	\begin{array}{lcl}
	\ins{P}&::=&\Load i;\, \ins{P}\mid \ins{A}\\[1ex]
	\ins{A}&::=& \Apply ijk;\, \ins{A}\mid \Ifz ijkl;\, \ins{A}\mid
	\Pred ij;\, \ins{A}\mid\Succ ij;\, \ins{A}\mid \ins{C}\\[1ex]
	\ins{C}&::=&\Call i \mid \varepsilon
	\end{array}
	\]
	Thus, a program starts with a list of $\ins{Load}$'s, continues with a list of $\ins{App}$, $\ins{Test}$, $\ins{Pred}$, $\ins{Succ}$, and possibly ends with a $\ins{Call}$. Each of these lists may be empty, in particular the empty program $\varepsilon$ can be generated.
	\item In a program, we write $\Load (i_1, \dots, i_n)$ as an abbreviation for the instruction sequence $\Load i_1;\,\cdots;\,\Load i_n$. For a given instruction $\ins{ins}$, we write $\ins{ins}^a$ for the repetition of $\ins{ins}$ $a$-many times.
\item\label{def:progs2}
	Let $P$ be a program, $r\ge 0$, and $\cI\subseteq \set{0,\dots,r-1}$ be a set of indices corresponding to the indices of initialized registers. 
	Define the relation $\cI\models^{r} P$, whose intent is to specify that $P$ does not read uninitialized registers, as the least relation closed under the rules:
\[
	\bar{cccc}
		\infer{\cI\models^{r}\varepsilon}{}
		&
		\infer{\cI\models^{r}\Call i}{i\in \cI}
		&
		\infer{\cI\models^{r} \Pred ij;\, \ins{A}}{\cI\cup\set{j}\models^{r}  \ins{A} & i\in \cI& j<r}
		\\[1ex]
		\infer{\cI\models^{r} \Load i;\, \ins{P}}{\cI\cup\set{i}\models^{r}  \ins{P} & i< r}		
		&
		\infer{\cI\models^{r} \Load i;\, \ins{P}}{\cI\models^{r}  \ins{P} & i\ge r}\qquad
		&
	\infer{\cI\models^{r} \Succ ij;\, \ins{A}}{\cI\cup\set{j}\models^{r}  \ins{A} & i\in \cI& j<r}
		\\[1ex]
	\infer{\cI\models^{r} \Apply ijk;\, \ins{A}}{\cI\cup\set{k}\models^{r}  \ins{A} & i,j\in \cI& k<r}	
	&
	\multicolumn{2}{c}{		
	\infer{\cI\models^{r} \Ifz ijkl;\, \ins{A}}{\cI\cup\set{l}\models^{r}  \ins{A} & i,j,k\in \cI& l<r}}
	\ear
\]
\item\label{def:progs3} 
	A program $P$ is \emph{valid with respect to $R_0,\dots,R_{r-1}$} if $\cR\models^{r} P$ holds for
	\[\cR = \set {i\st R_i \neq\Null \und 0\le i < r}.\]
\esub
\end{defi}
\noindent 
Notice that requiring all loads to occur at the beginning of the program is not an actual limitation. 
Once the operational semantics of AMs is discussed, we encourage the reader to verify that any AM whose program contains loads at arbitrary positions can be transformed into an equivalent AM, where all loads are placed at the beginning.

\begin{exas} For each of these programs, we specify its validity w.r.t.\ $R_0 = 7$, $R_1 = a$, $R_2 = \Null$ (i.e., $r = 3$).
\[
\bar{lcr}
P_1 = \Pred 02;\,\Call 2&&\textrm{(valid)}\\
P_2 = \mathrlap{\Load(2,8);\,\Ifz 0120;\, \Call 0}&&\textrm{(valid)}\\
P_3 = \Load(0, 2,8);\, \Call 8&&\textrm{(calling the uninitialized register $R_8$, thus not valid)}\\
P_4 = \Succ 2 0; \Call 1 &&\textrm{(reading from the uninitialized register $R_2$, thus not valid)}\\
P_5 = \Pred 0 8; \Call 0 &&\textrm{(storing in non-existent register $R_0$, thus not valid)}
\ear
\]
\end{exas}

\begin{lem}\label{lem:correction}
Given $\Addrs_\Null$-valued registers $\vec R$ and a program $P$ it is decidable whether $P$ is valid w.r.t.\ $\vec R$.
\end{lem}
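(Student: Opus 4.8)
The plan is to turn the inductive definition of the relation $\cI\models^{r} P$ from Definition~\ref{def:progs}\eqref{def:progs2} into a terminating recursive decision procedure. First note that the set $\cR = \set{i\st R_i\neq\Null\und 0\le i<r}$ occurring in the definition of validity (Definition~\ref{def:progs}\eqref{def:progs3}) is computable from $\vec R$, since one only has to inspect the finitely many registers $R_0,\dots,R_{r-1}$ and test each against $\Null$. Hence it suffices to show that, for an arbitrary finite $\cI\subseteq\set{0,\dots,r-1}$, the judgement $\cI\models^{r} P$ is decidable; validity is then the particular instance $\cR\models^{r} P$.

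The crucial feature of the rules is that they are syntax-directed: the leading instruction of $P$ selects which rule may derive $\cI\models^{r} P$, and the $\models^{r}$-premise of each rule always concerns the tail of $P$. I would therefore decide $\cI\models^{r} P$ by recursion on the length of $P$. When $P=\varepsilon$ the procedure accepts; when $P=\Call i$ it accepts exactly when $i\in\cI$. For a leading $\Load i$ it recurses on the tail, updating $\cI$ to $\cI\cup\set i$ if $i<r$ and leaving $\cI$ unchanged if $i\ge r$. For a leading $\Pred ij$ or $\Succ ij$ it checks $i\in\cI$ and $j<r$ and then recurses with $\cI\cup\set j$; for $\Apply ijk$ it checks $i,j\in\cI$ and $k<r$ and recurses with $\cI\cup\set k$; for $\Ifz ijkl$ it checks $i,j,k\in\cI$ and $l<r$ and recurses with $\cI\cup\set l$. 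All of the side conditions are decidable---membership in the fixed finite set $\cI$, and the comparisons $i<r$ and $i\ge r$---and every recursive call strips one instruction from the front of $P$, so the recursion is well-founded and terminates.

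What remains is to check that this procedure is correct, i.e.\ that it accepts if and only if $\cI\models^{r} P$ is derivable. Soundness is immediate: an accepting computation records exactly the sequence of rule instances of a derivation, which assemble into a valid derivation of $\cI\models^{r} P$. For completeness I would argue by induction on the length of $P$ that the form of $P$, together with $\cI$ and $r$, determines the last rule of any derivation uniquely---here the only point worth spelling out is that the two $\Load$ rules have mutually exclusive side conditions $i<r$ and $i\ge r$, while every other instruction shape matches a single rule---so that a derivation of $\cI\models^{r} P$ exists precisely when the side conditions hold and the unique sub-judgement for the tail is itself derivable, which is exactly what the procedure tests. I do not expect any genuine obstacle; the only mild care needed is this determinacy check, ensuring that the least relation closed under the rules coincides with the relation computed by the recursion.
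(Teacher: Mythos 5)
Your proof is correct and follows essentially the same route as the paper: the paper's (very terse) argument likewise rests on the syntax-directedness of Definition~\ref{def:progs}\eqref{def:progs2} and on the invariant that $\cI\subseteq\set{0,\dots,r-1}$ is preserved because $\cI$ is only ever extended with indices below $r$, which is exactly the recursive procedure you describe. Your write-up merely makes explicit the termination measure (length of $P$) and the determinacy check for the two $\Load$ rules, both of which the paper leaves implicit.
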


\begin{proof} Decidability follows from the syntax directedness of Definition~\ref{def:progs}\eqref{def:progs2}, and the preservation of the invariant $\cI\subseteq\set{0,\dots,r-1}$, since $\cI$ is only extended with $k<r$.
\end{proof}

Hereafter, we will focus on EAMs having valid programs. 

\begin{defi}\label{def:AM}
\bsub
\item
	An \emph{extended addressing machine} (\emph{EAM}) $\mM$ over $\Addrs$ (having $r+1$ registers) is given by a tuple $\mM = \tuple{R_0,\dots,R_{r},P,T}$
where 
\begin{itemize}
	\item $\vec R$ are $\Addrs$-valued registers, 
	\item $P$ is a program valid w.r.t.\ $\vec R$, and
	\item $T\in\Tapes$ is an (input) tape.
\end{itemize}
\item We denote by $\cM$ the set of all extended addressing machines over $\Addrs$.
\item
	Given an EAM $\mM$, we write $\mM.\vec R$ for the list of its registers, $\mM.R_i$ for its $i$-th register, $\mM.P$ for the associated program and $\mM.T$ for its input tape.
\item 
	Given $\mM\in\cM$ and $T'\in\Tapes$, we write $\appT{\mM}{T'}$ for the machine 
	\[
	\tuple{\mM.\vec R,\mM.P,\appT{\mM.T}{T'}}.
	\]
\item For $n\in\nat$, the \emph{$n$-th numeral machine} is defined as $\mach{n}= \tuple{R_0,\varepsilon,[]}$, with $\val{R_0} = n$.
\item\label{def:AM6} 
	 For every $a\in\Addrs$, define the following extended addressing machine:
\[
	\mY^a = \tuple{R_0 = \Null,R_1=\Null ,\Load (0, 1);\Apply  0 1 0;\Apply 101; \Call 1,[a]}.
\] 
\esub
\end{defi} 
\noindent 
We now enter into the details of the addressing mechanism which constitutes the core of this formalism.
\begin{defi}\label{def:bijectivelookup+Y} Recall that $\nat$ stands for an infinite subset of $\Addrs$, here identified with the set of natural numbers, and that $\mY^a$ has been introduced in Definition~\ref{def:AM}(\ref{def:AM6}).
\begin{enumerate}
\item Since $\cM$ is countable, by a simple set-theoretical argument, we can fix a bijective function called an \emph{address table map} (ATM) \[\Lookup\, :\, \cM \to  \Addrs\] satisfying the following conditions:
	\bsub
	\item (Numerals)
	$\forall n \in \mathbb{N}\,.\, \Lookup{\mach{n}} = n$, where $\mach{n}$ is the $n$-th numeral machine;
\item\label{itemB}
	(Fixed point combinator) $\exists a\in\Addrs-\nnat[]\,.\,\Lookup(\mY^{a}) = a$.
	\end{enumerate}
\item\label{def:ymachine}
	We write $\mY$ for the EAM $\mY^a$ satisfying $\Lookup(\mY^a) = a$ (which exists by condition \eqref{itemB} above).	
\item Given $\mM\in\cM$, we call the element $\Lookup \mM$ the {\em address of the EAM $\mM$}. 
	Conversely, the EAM having as address $a\in\Addrs$ is denoted by $\Lookinv{a}$, i.e.\ $\Lookinv{a} = \mM\iff \Lookup\mM = a.$
\item
	Define the \emph{application map} $(\App{}{}) : \Addrs\times\Addrs\to \Addrs$ by setting 
	\[\App{a}{b} = \Lookup (\append{\Lookinv{a}}{b}).\]
	I.e., the \emph{application} of $a$ to $b$ is the unique address $c$ of the EAM obtained by appending $b$ at the end of the input tape of the EAM $\Lookinv{a}$.	  	
\esub
\end{defi}

\begin{rem}\label{rem:blackhole} 
In general, there are uncountably many possible address table maps of arbitrary computational complexity. 
A natural example of such maps is given by \emph{G\"odelization}, which can be performed effectively.
The framework is however more general and allows us to consider non-r.e.\ sets of addresses like the complement $K^c$ of the halting set
\[
	K = \set{(i, x) \st \textrm{the program $i$ terminates when run on input $x$}}
\]
and a non-computable function $\Lookup\,:\, \mathcal{M}_{K^c} \to K^c$ as a map.
\end{rem} 
In an implementation of EAMs the address table map should be computable---one can choose a fresh address from $\Addrs$ whenever a new machine is constructed, save the correspondence in some table and retrieve it in constant time.

\begin{rems}\label{rem:sillyEAMs}
\bsub\item Since $\cM$ and $\Addrs$ are both countable,  the existence of $2^{\aleph_0}$ ATMs follows from cardinality reasons. This includes effective maps as well as non-computable ones.
\item\label{rem:sillyEAMs2} Depending on the chosen address table map, there might exist infinite (static) chains of EAMs, e.g., EAMs  $(\mM_n)_{n\in\nat}$ satisfying
$\mM_n = \tuple{R_0,\varepsilon,[]}$ with $\val R_0 = \Lookup\mM_{n+1}$.
\esub
\end{rems}
\noindent 
The results presented in this paper are independent from the choice of the ATM.

\begin{exas}\label{ex:extabsmach} The following are examples of EAMs (whose registers are assumed uninitialized where unspecified, i.e.\ $\vec R = \vec \Null$). 
\bsub
\item $\mach{I} := \tuple{R_0 = \Null, \Load 0; \Call 0, []}$,
\item For some $a\in\Addrs,$ $\tuple{R_0 = a, R_1 = \Null, \Apply 1 0 0; \Call 1, []}$
\item $\mach{Succ1} := \tuple{R_0, \Load 0; \Succ 0 0; \Call 0, []}$.
\item $\mach{Succ2} := \tuple{R_0, R_1, \Load 0;\Load 1; \Apply 0 1 1; \Apply 0 1 1; \Call 1,[a_{\mach{S}}]}$, where $a_{\mach{S}} = \Lookup{\mach{Succ1}}$.
\item $\mach{Add\_aux} := \tuple{\vec R,P, []}$ with $\mach{Add\_aux}.r = 5$ and $P = \Load (0, 1, 2); \Pred 1 3; \Succ 2 4;\Apply 0 3 0;$ $\Apply 0 4 0; \Ifz 1 2 0 0; \Call 0$.
\esub
\end{exas}

\subsection{Operational semantics}

The operational semantics of extended addressing machines is given through a small-step rewriting system. 
The reduction strategy is deterministic, since the only applicable rule at every step is univocally determined by the first instruction of the internal program, the contents of the registers and the head of the tape.

\begin{defi}\label{def:eamred}
\bsub
\item Define a reduction strategy $\redh$ on EAMs, representing one step of computation, as the least relation $\redh\ \subseteq\cM\times\cM$ closed under the rules in Figure~\ref{fig:am:small_step}.
\item The \emph{multistep reduction} $\reddh$ is defined as the transitive-reflexive closure~of~$\redh$.
\item The \emph{conversion relation} $\convh$ is the transitive-reflexive-symmetric closure~of~$\redh$.
\item Given $\mM, \mN, \mM\reddh\mN$, we write $|\mM\reddh\mN|\in\nat$ for the length of the (unique) reduction path from $\mM$ to $\mN$.
\item For $n\ge 0$, we write $\mM\reddh^{n}\mN$ whenever $\mM \reddh \mN$ and $\len{\mM \reddh \mN} = n$ hold.
\esub
\end{defi}
\begin{figure*}[t!]
\centering
{\bf Unconditional rewriting rules}
\[
	\bar{rcl}
	\tuple{\vec R,\Call i; P,T}&\redh&\appT{\Lookinv{\val{R_i}}}{T}\\
	\tuple{\vec R,\Load i;P,\Cons a{T}} &\redh& \tuple{\vec R[R_i := a],P,T}\\
	\tuple{\vec R,\Apply i j k; P,T}&\redh&\tuple{\vec R[R_k := \App{\val{R_i}}{\val{R_j}}],P,T}\\
	\ear
\]
{\bf Under the assumption that $\val{R_i} \in \nat$.}
\[
	\bar{rcl}
	\tuple{\vec R,\Pred i j;P,T} &\redh& 
	\tuple{\vec R[R_j := \val{R_i} \ominus 1,P,T},\textrm{ where }n\ominus1 := \max\set{n - 1, 0},\\[1ex]
	\tuple{\vec R,\Succ i j;P,T} &\redh& 
	\tuple{\vec R[R_j := \val{R_i} + 1,P,T}
	\\[1ex]	
	\tuple{\vec R,\Ifz i j k l; P, T}&\redh&\begin{cases}
		\tuple{\vec R[R_l := \val{R_j}], P, T},&\textrm{if }\val{R_i} = 0,\\[3pt]
		\tuple{\vec R[R_l := \val{R_k}], P, T},&\textrm{otherwise}.
		\end{cases}\\
	\ear
\]
{\bf Under the assumption that $\Lookinv{\val{R_i}}\redh \mM$.}
\[
	\bar{rcl}
	\tuple{\vec R,\Pred i j;P,T} &\redh& 
	\tuple{\vec R[R_i := \Lookup{\mM}],\Pred i j;P,T}
	\\
	\tuple{\vec R,\Succ i j;P,T} &\redh& 
	\tuple{\vec R[R_i := \Lookup{\mM}],\Succ i j;P,T}
	\\
	\tuple{\vec R,\Ifz i j k l; P, T}&\redh&\tuple{\vec R[R_i := \Lookup{\mM}],\Ifz i j k l; P, T}\\[-2ex]
	\ear
\]
\caption{Small-step operational semantics for extended addressing machines.}
\label{fig:am:small_step}
\end{figure*}
\noindent 
As a matter of terminology, we say that an EAM $\mM$: 
	\begin{itemize}
	\item is \emph{stuck} 
	if its program has shape $\mM.P = \Load i;P$ but $\mM.T = []$;
		\item \emph{is in final state} if $\mM$ is not in an error state, but cannot reduce further, i.e.\ $\mM\not\redh$;
	\item \emph{is in an error state} if its program has shape $\mM.P = \ins{ins};P'$ for some instruction $\ins{ins}\in\set{\Load i,\Pred i j,\Succ i j,\Ifz i j k l}$, but $\val{R_i}\notin\nat$ and $\Lookinv{\val{R_i}}$ cannot reduce further.
	\item \emph{reaches a final state} (resp.~\emph{raises an error}) if $\mM\reddh\mM'$ for some $\mM'$ in final (resp.\ error) state;  
	$\mM$ \emph{does not terminate}, otherwise.
	\end{itemize}
\noindent 
Given an EAM $\mM$, the first instruction of its program, together with the contents of its registers and tape, univocally determine which rule from Figure~\ref{fig:am:small_step} is applicable (if any).
When $\mM$ tries to perform an arithmetic operation in one of its registers, say $R_i = a$, it needs to wait that the EAM $\Lookinv{a}$ terminates its execution. 
If it does then the success of the operation depends on whether the result is a numeral, otherwise $\mM$ is in an error state. \pagebreak

\begin{lem}\label{lem:redhproperties}
\bsub\item\label{lem:redhproperties1} 
	The strategy $\redh$ is deterministic: $\mN\,{}_\mach{c}\!\!\leftarrow\mM\redh\! \mN'$ implies $ \mN = \mN'$.
\item\label{lem:redhproperties2} 
	The reduction $\redh$ is Church-Rosser: $\mM\convh\mN \Leftrightarrow \exists \mach{Z}\in\cM\,.\,\mM\reddh \mach{Z} {~}_{\mach{c}}\!\!\twoheadleftarrow\mN$.
\item \label{lem:redhproperties3}
 	If $\mM \reddh \mach{M'}$, then $\appT{\mM}{[\Lookup\mN]} \reddh \appT{\mach{M'}}{[\Lookup\mN]}$.
\esub
\end{lem}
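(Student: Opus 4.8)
The plan is to treat the three items in order, with determinism (item~\ref{lem:redhproperties1}) doing the heavy lifting for the other two.

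\emph{Determinism.} I would argue by inspecting Figure~\ref{fig:am:small_step}: for a machine $\mM = \tuple{\vec R, P, T}$ the applicable rule is fixed by the first instruction of $P$ together with the relevant $\val{R_i}$ and the head of $T$, and the guards of the different schemata are pairwise mutually exclusive. The only delicate point is the group of ``waiting'' rules for $\Pred i j$, $\Succ i j$, $\Ifz i j k l$, whose firing depends on a premise $\Lookinv{\val{R_i}}\redh\mM''$. I would rule out overlap with the arithmetic rules by noting that if $\val{R_i}=n\in\nat$ then $\Lookinv{n}=\mach n=\tuple{R_0,\varepsilon,[]}$ has an empty program, so $\Lookinv{n}\not\redh$ and the two guards cannot hold simultaneously. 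Determinism of the premise itself is obtained by a straightforward induction on the height of the derivation of $\mM\redh\mN$: the sub-derivation $\Lookinv{\val{R_i}}\redh\mM''$ is strictly shorter, so the induction hypothesis makes $\mM''$ (hence $\Lookup{\mM''}$, hence the updated register) unique.

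\emph{Church--Rosser.} The $(\Leftarrow)$ direction is immediate from the definition of $\convh$ as the reflexive--transitive--symmetric closure of $\redh$. For $(\Rightarrow)$ I would first extract from item~\ref{lem:redhproperties1} that $\redh$ is a partial function, so the maximal reduction issuing from any $\mM$ is a unique (finite or infinite) linear chain; every $\reddh$-reduct of $\mM$ lies on this chain, whence any two reducts $\mN_1,\mN_2$ are comparable and therefore join. This yields confluence: $\mM\reddh\mN_1$ and $\mM\reddh\mN_2$ imply there is $\mach Z\in\cM$ with $\mN_1\reddh\mach Z$ and $\mN_2\reddh\mach Z$. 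I would then straighten an arbitrary conversion $\mM\convh\mN$ into a single valley by induction on the number of alternations, using confluence to collapse each peak ${}_{\mach c}\!\!\twoheadleftarrow\cdot\reddh$ into a valley; this is the routine ``confluence implies Church--Rosser'' argument.

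\emph{Tape extension.} For item~\ref{lem:redhproperties3} I would first prove the one-step statement: writing $b=\Lookup\mN$, if $\mM\redh\mM_1$ then $\appT{\mM}{[b]}\redh\appT{\mM_1}{[b]}$, by case analysis on the rule applied. The tape of $\appT{\mM}{[b]}$ is $\mM.T$ with $b$ appended at the very end, so the head read by a $\Load$ and the register contents inspected by the first instruction are unchanged, and every rule fires exactly as before, merely carrying the trailing $b$ along; in particular no step of $\mM$ is ever disabled by the extra element. The only bookkeeping occurs in the $\Call i$ case, where $\tuple{\vec R,\Call i,\appT T{[b]}}\redh\appT{\Lookinv{\val{R_i}}}{(\appT T{[b]})}$ must be identified with $\appT{(\appT{\Lookinv{\val{R_i}}}{T})}{[b]}$, which holds by associativity of tape concatenation. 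I would then obtain the multistep statement by a trivial induction on $\len{\mM\reddh\mach{M'}}$.

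I expect the genuine content to sit entirely in determinism, and specifically in spelling out that the waiting-rule guards are both mutually exclusive and well-founded; Church--Rosser and tape extension are then essentially bookkeeping, the latter hinging only on associativity of $@$ in the $\Call i$ case.
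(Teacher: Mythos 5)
Your proof is correct and takes essentially the same route as the paper, whose own proof simply records item (1) as ``there is at most one applicable reduction rule for each state,'' item (2) as a trivial consequence of (1), and item (3) by induction on the length of $\mM \reddh \mach{M'}$. The details you supply---mutual exclusivity of the numeral guard and the waiting guard (since $\Lookinv{n} = \mach{n}$ has empty program and cannot reduce), induction on the derivation height to handle the recursive premise of the waiting rules, determinism yielding a linear reduction chain and hence confluence, and associativity of tape concatenation in the $\Call i$ case---are precisely what the paper leaves implicit.
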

\begin{proof}
\bsub
\item Trivial: There is at most one applicable reduction rule for each state.
\item Trivial as a consequence of (\ref{lem:redhproperties1}).
\item By induction on the length of $\mM \reddh \mach{M'}$.\qedhere
\esub
\end{proof}

\begin{exas} See Example~\ref{ex:extabsmach} for the definition of $\mach{I}, \mach{Succ1}$, $\mach{Succ2}$, $\mach{Add\_aux}$. 
\bsub
\item For some $a\in\Addrs$, $\appT{\mach{I}}{[a]} \redh \tuple{R_0 = a, \Call 0, []} \redh \Lookinv{a}$
\item We have $\appT{\mach{Succ1}}{[0]}\reddh \Lookinv{1}$ and $\appT{\mach{Succ2}}{[1]}\reddh \Lookinv{3}$.
\item Define $\mach{Add} = \appT{\mY}{[\Lookup{\mach{Add\_aux}}]}$, an EAM performing the addition. We show:\\
\noindent 
\begin{minipage}{\linewidth} 
\begin{equation*}
	\bar{l}
	\appT{\mach{Add}}{[1,3]}\!\!
\redh \!\! \Tuple{(R_0 \!=\! \Lookup{\mY},R_1 = \Lookup{\mach{Add\_aux}}), \Apply 0 1 0;\Apply 1 01; \Call 1,[1, 3]}\\[2ex]
\reddh\! \Tuple{\vec R, \Load (0,1,2); \Pred 1 3; \Succ 2 4; \Apply 0 3 0;\Apply 0 4 0;\\\Ifz 1 2 0 0; \Call 0, [\Lookup{\mach{Add}}, 1, 3]}\\[3ex]
	\reddh\!\Tuple{R_0 \!=\! \Lookup{\mach{Add}}, R_1 = 1, R_2 = 3, R_3, R_4, \Pred 1 3; \Succ 2 4; \Apply 0 3 0;\\\Apply 0 4 0; \Ifz 1 2 0 0; \Call 0, []}\\[2ex]
	\reddh\!\Tuple{R_0 \!=\! \Lookup(\appT{\mach{Add}}{[0,4]}), R_1 = 1, R_2 = 3, R_3 = 0, R_4 = 4, \Ifz 1 2 0 0; \Call 0, []}\\[2ex]
	\reddh\!\Tuple{R_0 \!=\! \Lookup(\appT{\mach{Add}}{[0,5]}), R_1 = 0, R_2 = 4, R_3 = 0, R_4 = 5, \Ifz 1 2 0 0; \Call 0, []}\\[2ex]
	\reddh\! \Lookinv{4}.
\ear
\end{equation*}
\end{minipage}
\esub
\end{exas}


\subsection{Typing Extended Addressing Machines}

Recall that the set $\Types$ of (simple) types has been introduced in Definition~\ref{def:simpletypes}\eqref{def:simpletypes1}. 
We now show that certain EAMs can be typed, and that typable machines do not raise an error.

\begin{defi}\label{def:EAMtypes} 
\bsub
\item A \emph{typing context} $\Delta$ is a finite set of associations between indices and types, represented as a list $i_1 : \alpha_1,\dots,i_{r} : \alpha_{r}$. The indices $i_1,\dots,i_r$ are not necessarily consecutive.
\item We denote by $\Delta[i : \alpha]$ the typing context $\Delta$ where the type associated with $i$ becomes $\alpha$. Note that $\dom(\Delta[i : \alpha]) = \dom(\Delta)\cup\set{i}$. If $i$ is not present in $\Delta$, then $\Delta[i : \alpha] = \Delta, i : \alpha$.
\item Let $\Delta$ be a typing environment, $\mM\in\cM$, $r\ge 0$, $R_0,\dots,R_r$ be registers, $P$ be a program, $T\in\Tapes$ and $\alpha\in\Types$. We define the typing judgements
\[
	\mM : \alpha\qquad\qquad\qquad	\Delta \Vdash^r (P,T) : \alpha
	\qquad\qquad\qquad R_0,\dots,R_r\models \Delta
\] 
by mutual induction as the least relations closed under the rules of Figure~\ref{fig:eamstyping}.
\item A machine $\mM$ is called \emph{typable} if the judgement $\mM : \alpha$ is derivable for some $\alpha\in\Types$.
\esub
\end{defi}

\begin{figure*}[t!]
\[\bar{ccc}
\infer[(\mathrm{nat})]{\mach{n} : \tint}{}
\qquad
	\infer[(\mathrm{fix})]{\mY{} : (\alpha \to \alpha) \to \alpha}{}&&
	\infer[(\vec R)]{\tuple{R_0,\dots,R_{r},P,T} : \alpha}{R_0,\dots,R_{r} \models \Delta & \Delta\Vdash^r (P,T) : \alpha}
	\\[1ex]
	\infer[(R_\Null)]{R_0,\dots,R_r \models \Delta}{R_0,\dots,R_{r-1} \models \Delta & !R_r = \Null}
	&&
	\infer[(R_\Types)]{R_0,\dots,R_r \models \Delta, r:\alpha}{R_0,\dots,R_{r-1} \models \Delta & \Lookinv{!R_r} : \alpha}
	\\[1ex]
	\infer[(\mathrm{load_{\Null}})]{\Delta \Vdash^r (\Load i;P,[]) : \beta\to\alpha}{\Delta[i : \beta] \Vdash^r (P,[]) : \alpha}
	&&
	\infer[(\mathrm{load_{\Types}})]{\Delta\Vdash^r (\Load i;P, \Cons a {T}) : \alpha}{\Delta[i : \beta] \Vdash^r (P,T) : \alpha & \Lookinv {a} : \beta}
	\\[1ex]
	\infer[(\mathrm{pred})]{\Delta, i : \tint \Vdash^r (\Pred i j;P,T) : \alpha}{(\Delta, i : \tint)[j : \tint] \Vdash^r (P,T) : \alpha}
	&&
	\infer[(\mathrm{succ})]{\Delta, i : \tint \Vdash^r (\Succ i j;P,T) : \alpha}{(\Delta, i : \tint)[j :\tint] \Vdash^r (P,T) : \alpha}
	\\[1ex]
	\multicolumn{3}{c}{\infer[(\mathrm{test})]{\Delta, i : \tint,  j :\beta, k : \beta \Vdash^r (\Ifz i j k l;P,T) : \alpha}{
		(\Delta, i : \tint,  j : \beta, k : \beta)[l : \beta] \Vdash^r (P,T) : \alpha 
		}
	}
	\\[1ex]
	\multicolumn{3}{c}{\infer[(\mathrm{app})]{\Delta, i : \alpha \to \beta, j : \alpha \Vdash^r (\Apply i j k;P,T) : \delta}{
		(\Delta, i : \alpha \to \beta, j : \alpha)[k : \beta] \Vdash^r (P,T) : \delta}	}
	\\[1ex]
	\multicolumn{3}{c}{\infer[(\mathrm{call})]{\Delta, i : \alpha_1\to\dots\to\alpha_n\to\alpha \Vdash^r (\Call i,[\Lookup{\mM_1},\dots,\Lookup{\mM_n}]) : \alpha}{
		 \mM_1 : \alpha_1 & \cdots &  \mM_n : \alpha_n}}
\ear
\]
\caption{Typing rules for extended addressing machines.}\label{fig:eamstyping}
\end{figure*}
\noindent 
The algorithm in Figure~\ref{fig:eamstyping} deserves some discussion. As it is presented as a set of inference rules, one should reason bottom-up.

To give a machine $\mM$ a type $\alpha$, one needs to derive the judgement $\mM : \alpha$. The machines $\mach{n}$ and $\mY$ are recognizable from their addresses and the rules $(\mathrm{nat})$ and $(\mathrm{fix})$ can thus be given higher precedence. For all other cases, one begins by giving all the registers a type using the rule $(\vec R)$, applying $(R_\Types)$ or $(R_\Null)$ for each register. Once this initial step is performed, one needs to derive a judgement of the form $i_1:\beta_{i_1},\dots,i_n:\beta_{i_n}\Vdash^r (P,T) : \alpha$, where $P$ and $T$ are the program and the input tape of the original machine respectively. This is done by verifying the coherence of the instructions in the program with the types of the registers and of the values in the input tape.

As a final consideration, notice that the rules in Figure~\ref{fig:eamstyping} can only be considered as an algorithm when the address table map is effectively given. Otherwise, the algorithm would depend on an oracle deciding $a = \#\mM$.
\begin{rems}\label{rem:abouttypings}
\bsub
\item\label{rem:abouttypings1} For all $\mM\in\cM$ and $\alpha\in\Types$, we have $\mM : \alpha$ if and only if there exists $a\in\Addrs$ such that both $\Lookinv{a} : \alpha$ and $ \Lookup{\mM} = a$ hold.
\item\label{rem:abouttypings2} 
If $\Lookup\mM\notin \nnat[]\cup\set{\Lookup{\mY}}$, then 
\[
	\mM : \alpha \iff \exists\Delta\,.\, [\Delta\models \mM.\vec R\ \land\ \Delta \Vdash^r (\mM.P,\mM.T) : \alpha ]
\]
\item The superscript $r\ge 0$ in $\Vdash^r$ keeps track of the initial amount of registers, i.e.\ $\vec i \in\set{0,\dots,r}$.
\item\label{rem:abouttypings3} Numeral machines are not typable by content, so $(\mathrm{nat})$ having a higher priority is necessary for the consistency of the system. The $(\mathrm{fix})$ rule having a higher priority does not modify the set of typable machines, but guarantees the syntax-directedness of the system.
\esub
\end{rems}

\begin{exas} The following are some examples of derivable typing judgements.
\bsub
\item $\mach{I}$ can be typed with $\alpha\rightarrow\alpha$ for any $\alpha \in \Types$:
\[
\infer{\tuple{R_0 = \Null, \Load 0;\Call 0, []}: \alpha \rightarrow\alpha}{
\infer{R_0\models}{!R_0 = \Null}
&
\infer{\Vdash^1 \tuple{\Load 0;\Call 0,[]}:\alpha\rightarrow\alpha}{
\infer{0 : \tint\Vdash^1 \tuple{\Call 0, []} : \alpha}{}
}
}
\]
\item $\mach{Succ1}$ can be typed with $\tint \rightarrow\tint$:
\[
\infer{\tuple{R_0=\Null, \Load 0; \Succ 0 0; \Call 0, []} : \tint\to\tint}{
	\infer{R_0\models}{!R_0 = \Null}
	&
	\infer{\Vdash^1\tuple{ \Load 0; \Succ 0 0; \Call 0, []} : \tint\to\tint}{\infer{0 : \tint \Vdash^1 \tuple{\Succ 0 0; \Call 0, []} : \tint}{
		\infer{0 : \tint\Vdash^1 \tuple{\Call 0, []} : \tint}{}	
		}
	}
}
\]
\item $\mach{Succ2} : \tint \rightarrow \tint$ and $\mach{Add} : \tint\rightarrow\tint\rightarrow\tint$ are also derivable typing judgements, but the full trees are omitted due to size constraints.
\esub
\end{exas}

\begin{lem} Let $\mM\in\cM$, $\alpha\in\Types$. Assume that $\# : \mM\to\Addrs$ is effectively given.
\bsub
\item If $\mM = \tuple{\vec R = \Null,P,[]}$ then the typing algorithm is capable of deciding whether $\mM : \alpha$ holds.
\item In general, the typing algorithm semi-decides whether $\mM : \alpha$ holds.
\esub
\end{lem}

\begin{proof}(Sketch) (i) In this case, $\mM : \alpha$ holds if and only if $\Vdash^r (\mM.P,[])$ does. By induction on the length of $\mM.P$, one verifies if it is possible to construct a derivation. Otherwise, conclude that $\mM : \alpha$ is not derivable.

(ii) In the rules $(R_\Types)$ and $(\mathrm{load}_\mathbb{T})$, one needs to show that a type for the premises exists.
As the set of types is countable, and effectively given, one can easily design an algorithm constructing a derivation tree (by dovetailing).
However, the algorithm cannot terminate when executed on the machine $\mM_0$ defined in Remark~\ref{rem:sillyEAMs}(\ref{rem:sillyEAMs2}) because it would require an infinite derivation tree.
\end{proof}
 
\begin{prop}\label{prop:typing} 
Let $\mM,\mach {M'},\mN,\in\cM$ and $\alpha,\beta\in\Types$.
\bsub
\item\label{prop:typing1}  If $\mM : \beta \to \alpha$ and $\mN : \beta$ then $\appT{\mM}{[\Lookup{\mN}]} : \alpha$.
\item\label{prop:typing2}  If $ \mM : \alpha$ and $\mM\redh \mN$ then $ \mN : \alpha$.
\item\label{prop:typing4}  If $ \mM : \tint$ then either $\mM$ does not terminate or $\mM\reddh \mach{n}$, for some $n\ge 0$.
\item\label{prop:typing3}  If $ \mM : \alpha$ then $\mM$ does not raise an error.
\esub
\end{prop}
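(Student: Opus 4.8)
The plan is to prove (i) first, use it to establish subject reduction (ii), and then derive the two safety statements (iii) and (iv) from a common canonical-forms lemma, itself proved by induction on typing derivations. For (i), I would reason by inversion on the derivation of $\mM : \beta\to\alpha$. The case $(\mathrm{nat})$ is vacuous, since a numeral machine only receives the non-arrow type $\tint$. The case $(\mathrm{fix})$, where $\mM=\mY$ and $\beta\to\alpha=(\gamma\to\gamma)\to\gamma$, is settled by a direct computation: the machine $\appT\mY{[\Lookup\mN]}$ has a strictly longer tape than $\mY$, so it is not $\mY$ itself and is typed through $(\vec R)$; running the rules of Figure~\ref{fig:eamstyping} on its program---and typing the occurrence of $\Lookup\mY$ on its tape again by the axiom $(\mathrm{fix})$, which breaks the apparent circularity---yields exactly $\gamma$ whenever $\mN : \gamma\to\gamma$. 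The main case $(\vec R)$ reduces to a tape-extension lemma for the auxiliary judgement, proved by induction on the derivation of $\Delta\Vdash^r(P,T):\beta\to\alpha$: if $\mN:\beta$ then $\Delta\Vdash^r(P,\appT T{[\Lookup\mN]}):\alpha$. The only interesting steps are $(\mathrm{load_{\Null}})$, which turns into $(\mathrm{load_{\Types}})$ with the new tape head $\Lookup\mN$ typed by $\mN:\beta$, and $(\mathrm{call})$, which absorbs the extra argument by passing from an $n$- to an $(n{+}1)$-element tape; every other step follows from the induction hypothesis with the head of the tape unchanged. Repackaging the resulting judgement with $(\vec R)$ then yields $\appT\mM{[\Lookup\mN]}:\alpha$.

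For subject reduction (ii), I would proceed by induction on the derivation of $\mM\redh\mN$---note that the waiting rules of Figure~\ref{fig:am:small_step} carry a reduction premise, so $\redh$ is genuinely an inductively defined relation---and perform a case analysis on the last rule, after a generation lemma exposes a context $\Delta$ with $\mM.\vec R\models\Delta$ and $\Delta\Vdash^r(\mM.P,\mM.T):\alpha$. The $(\mathrm{load_{\Types}})$, $(\mathrm{pred})$, $(\mathrm{succ})$ and $(\mathrm{test})$ steps are reconstructed from the matching typing rule together with the routine compatibility fact that $\vec R\models\Delta$ and $\Lookinv a:\beta$ imply $\vec R[R_i:=a]\models\Delta[i:\beta]$, the stored value being a numeral (hence of type $\tint$) in the arithmetic cases. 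The $\Call i$ and $\Apply i j k$ steps are exactly where (i) is used: for $\Call i$, the register $R_i$ addresses a machine of type $\alpha_1\to\cdots\to\alpha_n\to\alpha$ while the tape carries $n$ arguments of the matching types, so an $n$-fold application gives $\appT{\Lookinv{\val{R_i}}}{\mM.T}:\alpha$; for $\Apply i j k$, one application shows $\Lookinv{\App{\val{R_i}}{\val{R_j}}}$ has the type recorded at index $k$. The waiting rules are handled by the induction hypothesis applied to the sub-reduction $\Lookinv{\val{R_i}}\redh\mM'$: as the inspected register has type $\tint$, the reduct $\mM'$ keeps type $\tint$, so overwriting $R_i$ with $\Lookup{\mM'}$ preserves $\mM.\vec R\models\Delta$. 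Numerals are in a final state (vacuous case), while the single $\ins{Load}$ step of $\mY$ is verified by hand to retain the type $(\gamma\to\gamma)\to\gamma$.

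For (iii) and (iv) I would first isolate the canonical-forms lemma: if $\mN:\tint$ and $\mN$ cannot reduce ($\mN\not\redh$), then $\mN=\mach n$ for some $n$. This is proved by induction on the height of the typing derivation of $\mN:\tint$. If the program is empty then, as there is no $\Vdash^r$ rule for the empty program, $\mN$ can only be typed by $(\mathrm{nat})$, hence is a numeral. If the program is non-empty, its first instruction cannot fire: it is neither an $\ins{App}$ nor a $\ins{Call}$ (these always reduce for a typable, hence validly initialised, machine), and if it were a $\Load i$ the tape would be empty, making $\mN$ carry an arrow type through $(\mathrm{load_{\Null}})$---impossible at type $\tint$. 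The remaining possibility is an arithmetic instruction inspecting some $R_i:\tint$; were it stuck we would have $\val{R_i}\notin\nat$ and $\Lookinv{\val{R_i}}\not\redh$, but the premise $\Lookinv{\val{R_i}}:\tint$ sits in a strictly smaller subderivation, so by the induction hypothesis $\Lookinv{\val{R_i}}$ is a numeral, i.e.\ $\val{R_i}\in\nat$---contradicting that the instruction did not fire. Error-freedom (iv) now follows: if $\mM:\alpha$ reduced to an error state $\mM'$, then $\mM':\alpha$ by (ii); the offending instruction forces the inspected register to have type $\tint$, so $\Lookinv{\val{R_i}}:\tint$ is non-reducing and hence a numeral by the lemma, contradicting the error condition $\val{R_i}\notin\nat$. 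Finally (iii): a machine $\mM:\tint$ never errs by (iv), so either it diverges, or it reaches a non-reducing state which by (ii) and the canonical-forms lemma is some $\mach n$, giving $\mM\reddh\mach n$.

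The hard part will be subject reduction (ii) together with the addressing mechanism it touches. Threading the invariant $\mM.\vec R\models\Delta$ through each register update and invoking (i) at the correct arity for $\Call i$ is bookkeeping, but the waiting rules force the inductive reading of $\redh$, and the genuinely delicate point everywhere is the self-referential machine $\mY$: since it is typed by the axiom $(\mathrm{fix})$ rather than through its own program, every occurrence of $\mY$---and of machines obtained by extending its tape---must be checked directly, exploiting that extending its tape yields a $(\vec R)$-typable machine whose internal reference to $\Lookup\mY$ is again discharged by $(\mathrm{fix})$, so that no infinite typing derivation is ever required.
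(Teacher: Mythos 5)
Your proposal is correct and follows essentially the same route as the paper's proof: item (1) by simultaneous induction on the machine typing and the auxiliary judgement $\Delta\Vdash^r(P,T):\beta\to\alpha$ with the $(\mathrm{fix})$ case checked by a direct derivation, item (2) by case analysis on the first instruction using (1) for $\ins{Call}$/$\ins{App}$, and items (3)--(4) reduced to the fact that numerals are the only non-reducing machines of type $\tint$. The only difference is one of explicitness, to your credit: you prove that canonical-forms fact by induction on the typing derivation and you make the inductive reading of $\redh$ explicit for the waiting rules, whereas the paper simply asserts the former and handles the latter with a brief appeal to the induction hypothesis.
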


\begin{proof} 
\bsub \item Simultaneously, one proves that if both $\Delta\Vdash^r (P,T) : \beta\to\alpha$ and $ \mN : \beta$ hold, then so does $\Delta\Vdash^r(P,\appT{T}{[\Lookup \mN]}) : \alpha$.
We proceed by induction on a derivation of $ \mM : \beta\to\alpha$ (resp.\ $\Delta\Vdash^r (P,T) : \beta\to\alpha$). 

Case ($\mathrm{nat}$) is vacuous.

Case ($\mathrm{fix}$). By definition of $\mY$ (see Definition~\ref{def:bijectivelookup+Y}\eqref{def:ymachine}), we have: 
\[
	\appT{\mY}\!{[\Lookup\mN]} \!=\! \Tuple{R_0 \!=\! \Null, R_1 \!=\! \Null, \Load 0 ; \Apply 1 0 1;\Apply 0 1 0; \Call \!0,[\Lookup{\mY}, \Lookup{\mN}]}.
\]
Notice that, in this case, $\beta = \alpha\to\alpha$. We derive (omitting the $(R_\Null)$ rule usages) :
{
\begin{minipage}{\linewidth} 
\[
	\infer{\appT{\mY}{[\Lookup\mN]} : \alpha}{\infer{
	\Vdash^2 \tuple{\mY.P,[\Lookup{\mY},\Lookup{\mN}]} : \alpha
	}{\infer{0:(\alpha\to\alpha)\to\alpha\Vdash^2 \tuple{\Load 1;\dots,[\Lookup{\mN}]} : \alpha}{
	\infer{0 : (\alpha\to\alpha)\to\alpha, 1 : \alpha\to\alpha\Vdash^2 (\Apply 0 1 0;\dots,[]) : \alpha
	}{\infer{0 : \alpha, 1 : \alpha\to\alpha\Vdash^2 (\Apply 1 0 1;\dots,[]) : \alpha}{
				\infer{0 : \alpha,1 :\alpha\Vdash^2 ( \Call 1,[]) : \alpha}{}
				}
		}
		&
		\!\vdash \mN\! :\! \alpha\!\to\!\alpha}
		&
		\infer{ \mY \!:\! (\alpha\!\to\!\alpha)\!\to\!\alpha}{}
	}
	}
\]
\end{minipage}
}
\vspace{0.5\baselineskip}\\\indent 
Case $\mathrm{load}_\Null$. Then $P = \Load i;P'$, $T = []$ and $\Delta[i : \beta] \Vdash^r (P',[]) : \alpha$.
By assumption $ \mN : \beta$, so we conclude $\Delta \Vdash^r (\Load i;P',[\Lookup{\mN}]) : \alpha$ by applying $\mathrm{load}_\Types$.

All other cases derive straightforwardly from the IH.

\item The cases $\mM = \mY$ or $\mM = \mach{n}$ for some $n\in\nat$ are vacuous, as these machines are in final state.
Otherwise, by Remark~\ref{rem:abouttypings}\eqref{rem:abouttypings2},  $\Delta \Vdash^r (\mM.P,\mM.T) : \alpha$ for some $\Delta\models \mM.\vec R$.
By cases on the shape of $\mM.P$. 

Case $P = \Load i;P'$. Then $\mM.T = \Cons a T'$ otherwise $\mM$ would be in final state, and $\mN = \tuple{\vec R[R_i := a],P',T'}$. 
From $(\mathrm{Load}_\Types)$ we get $\Delta[i : \beta] \Vdash^r (P',T') : \alpha$ for some $\beta\in\Types$ satisfying $\Lookinv {a} : \beta$. 
As $\vec R\models \Delta$ we derive $\vec R[R_i := a]\models\Delta[i : \beta] $, so as $N = \Tuple{\vec R[R_i := a],P',T'}$, by Remark~\ref{rem:abouttypings}\eqref{rem:abouttypings2}, $N : \alpha$.

Case $P = \Call i$. Then $i : \alpha_1\to\cdots\to\alpha_n\to\alpha$, $T = [\Lookup{\mM_1},\dots,\Lookup{\mM_n}]$ and $\mM_j : \alpha_j$, for all $j\le n$.
In this case, $\mN = \appT{\Lookinv{\val{(\mM.R_i)}}}{T}$ with $\Lookinv{\val{(\mM.R_i)}} : \alpha_1\to\cdots\to\alpha_n\to\alpha$, so we conclude by \eqref{prop:typing1}.

All other cases follows easily from the IH.

  \item Assume that $\mM : \tint$ and $\mM\reddh \mN$ for some $\mN$ in final state. By \eqref{prop:typing2}, we obtain that $\mN: \tint$ holds, thus $\mN= \mach{n}$ since numerals are the only machines in final state typable with $\tint$.

 \item The three cases from Figure~\ref{fig:am:small_step} where a machine can raise an error are ruled out by the typing rules ($\mathrm{pred}$), ($\mathrm{succ}$) and ($\mathrm{test}$), respectively. Therefore, no error can be raised during the execution.\qedhere
\esub
\end{proof}

\section{Simulating (E)PCF}\label{sec:Simulation}

In this section we define a translation from \EPCF{} terms to EAMs. 
We prove that both the typing and the operational semantics of \EPCF{} programs are preserved under this translation.
In Theorem~\ref{thm:eqPCFandEPCF}, we show that a \PCF{} program having type $\tint$ computes a numeral $\num n$ exactly when the corresponding EAM reduces to $\mach n$.
This result builds upon \cite[Thm.~4.12]{IntrigilaMM22}, where only one implication is shown. It can be seen as a first step towards full abstraction.

\subsection{EAMs implementing PCF instructions}

We start by defining some auxiliary EAMs implementing the main instructions of \PCF.

\begin{defi}\label{def:fixedmachines} Define the following EAMs (for $k>0$ and $n\ge 0$):
\[
	\bar{rl}
	\mProj{k}{i} =& \Tuple{R_0, (\Load 1)^{i-1};\Load 0; (\Load 1)^{k-i}; \Call 0; []},\textrm{ for }1\le i \le k;\\[4pt]
	\mPred =& \Tuple{R_0, \Load 0;\Pred 00;\Call 0; []};\\[4pt]
	\mSucc =& \Tuple{R_0, \Load 0;\Succ 00;\Call 0; []};\\[4pt]
	\ear
	\]
	\[
	\bar{rl}
	\mIfz =& \Tuple{R_0,R_1,R_2, \Load 0;\Load 1;\Load 2;\Ifz 0120; \Call 0; []};\\[4pt]
	\mAppn{0}{k} =&\mProj{1}{1},\textrm{ for $k > 0$. Recall that the EAM $\mProj{1}{1} = \mach{I}$ represents the identity};\\[4pt]
	\mAppn{n+1}{k} =& \Tuple{R_0 = \Lookup{\mAppn{n}{k}},R_1,\dots,R_{k+2},\Load (1,\dots,k+2);\\\Apply {2}{k+2}{2};\cdots;\Apply {k+1}{k+2}{k+1};\\ \Apply 010;\cdots;\Apply 0{k+1}0;\Call 0, []}.\\
	\ear
\]
The registers whose values are not specified are assumed to be uninitialized.
\end{defi}

The EAM $\mAppn{n}{k}$ deserves an explanation. 
This machine is stuck, waiting for $n+k+1$ arguments $a,d_1,\ldots,d_k,e_1,\ldots,e_n$, where $k$ is the arity of $a$ and $n$ is the arity of each $d_i$. Once all arguments are available in the tape, $\mAppn{n}{k}$ applies $\vec e$ to each $d_i$ and then feeds the machine $\Lookinv{a}$ the resulting list of arguments $d_1\cdot\vec e,\dots,d_k\cdot\vec e$.

\begin{lem}\label{lem:existenceofeams} 
For all $a,b,c,d_1,\dots,d_k,e_1,\dots,e_n\in\Addrs$, the EAMs below reduce as follows:
\bsub
\item\label{lem:existenceofeams1} 
	$\appT{\mProj{k}{i}}{[d_1,\dots,d_k]} \reddh^{k+1} d_i$, for $1\le i \le k>0$;
\item\label{lem:existenceofeams2}  
	$\appT{\mAppn{n}{k}}{[a,d_1,\ldots,d_k,e_1,\ldots,e_n]} \reddh^{(3k+4)n + 2} \appT{\Lookinv a}{[d_1\cdot e_1\cdots e_n,\ldots,d_k\cdot e_1\cdots e_n]}\!$;
\item\label{lem:existenceofeams3}  
	 $\appT{\mPred}{[a]}\redh \Tuple{R_0 = a, \Pred 00;\Call 0,[]}$;
\item\label{lem:existenceofeams4}   
	$\appT{\mSucc}{[a]}\redh \Tuple{R_0 = a, \Succ 00;\Call 0,[]}$;
\item\label{lem:existenceofeams5}
	$\appT{\mIfz}{[a,b,c]}\reddh^3\Tuple{R_0 = a,R_1 = b, R_2 = c, \Ifz 0120;\Call 0,[]}$;
\item\label{lem:existenceofeams6}
	$\appT{\mY}{[a]}\reddh^5 \appT{\Lookinv{a}}{[\#(\appT{\mY}{[a]})]}$.
\esub
\end{lem}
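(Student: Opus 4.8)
The plan is to establish each of the six reductions by directly unfolding the small-step rules of Figure~\ref{fig:am:small_step}, tracing the contents of the registers and of the tape instruction by instruction; since $\redh$ is deterministic (Lemma~\ref{lem:redhproperties}\eqref{lem:redhproperties1}) each such computation is forced, so I only have to follow it to its end and count the steps. All items except \eqref{lem:existenceofeams2} are finite computations whose length is read off from the shape of the program; item~\eqref{lem:existenceofeams2} is the one genuine induction, on the parameter $n$.

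For items \eqref{lem:existenceofeams3}, \eqref{lem:existenceofeams4} and \eqref{lem:existenceofeams5} I would append the given tape to the (empty) tape of $\mPred$, $\mSucc$, $\mIfz$ and fire the $\Load$ rule once, once, and three times respectively, popping the arguments into $R_0$ (resp.\ $R_0$; resp.\ $R_0,R_1,R_2$); the residual machine is then exactly the one stated. For \eqref{lem:existenceofeams1}, running $\mProj{k}{i}$ on $[d_1,\dots,d_k]$ lets the first $i-1$ copies of $\Load 1$ overwrite $R_1$ with $d_1,\dots,d_{i-1}$, the $\Load 0$ store $d_i$ in $R_0$, the last $k-i$ copies of $\Load 1$ overwrite $R_1$ once more, and $\Call 0$ transfer control to $\Lookinv{d_i}$ with empty residual tape; this is $k+1$ steps. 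Item~\eqref{lem:existenceofeams6} is a five-step computation that crucially uses the fact that $\mY$ carries its own address in its tape, i.e.\ $\appT{\mY}{[a]}$ has tape $[\Lookup{\mY},a]$: after $\Load 0;\Load 1$ one has $\val{R_0}=\Lookup{\mY}$ and $\val{R_1}=a$; then $\Apply 010$ sets $\val{R_0}=\App{\Lookup{\mY}}{a}=\Lookup{(\appT{\mY}{[a]})}$ by definition of the application map; $\Apply 101$ sets $\val{R_1}=\App{a}{\Lookup{(\appT{\mY}{[a]})}}$; and $\Call 1$ hands control to $\appT{\Lookinv a}{[\Lookup{(\appT{\mY}{[a]})}]}$, which is the claimed result.

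The substantial item is \eqref{lem:existenceofeams2}, proved by induction on $n$. For $n=0$ one has $\mAppn 0 k=\mProj 1 1$, so item~\eqref{lem:existenceofeams1} (with $k=i=1$) gives $\appT{\mAppn 0 k}{[a]}\reddh^2\Lookinv a=\appT{\Lookinv a}{[]}$, matching $(3k+4)\cdot 0+2$. For the step $n+1$, I would run $\mAppn{n+1}{k}$ on $[a,d_1,\dots,d_k,e_1,\dots,e_{n+1}]$: the block $\Load{(1,\dots,k+2)}$ stores $a,d_1,\dots,d_k,e_1$ into $R_1,\dots,R_{k+2}$ and leaves $e_2,\dots,e_{n+1}$ on the tape; the block $\Apply 2{k+2}2;\cdots;\Apply{k+1}{k+2}{k+1}$ turns each $\val{R_j}=d_{j-1}$ into $\App{d_{j-1}}{e_1}$; the block $\Apply 010;\cdots;\Apply 0{k+1}0$ accumulates into $R_0$ the address $\Lookup{(\appT{\mAppn n k}{[a,\App{d_1}{e_1},\dots,\App{d_k}{e_1}]})}$, using repeatedly the identity $\App{\Lookup{\mM}}{b}=\Lookup{(\appT{\mM}{[b]})}$; finally $\Call 0$ yields $\appT{\mAppn n k}{[a,\App{d_1}{e_1},\dots,\App{d_k}{e_1},e_2,\dots,e_{n+1}]}$. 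Its argument list has the length of the $n$-case, so the induction hypothesis applies verbatim and, by left-associativity of $\App{}{}$ (so $(\App{d_i}{e_1})\cdot e_2\cdots e_{n+1}=d_i\cdot e_1\cdots e_{n+1}$), delivers the target $\appT{\Lookinv a}{[d_1\cdot e_1\cdots e_{n+1},\dots,d_k\cdot e_1\cdots e_{n+1}]}$.

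I expect the step count in \eqref{lem:existenceofeams2} to be the only delicate point: the inductive step consumes $k+2$ loads, $k$ applications in the first block, $k+1$ in the second, and one $\Call$, i.e.\ $3k+4$ steps, to which the induction hypothesis adds $(3k+4)n+2$, giving exactly $(3k+4)(n+1)+2$. The one thing to be careful about is invoking the application-map identities correctly, so that the tape handed to $\mAppn n k$ after the $\Call$ is \emph{literally} the argument list of the $n$-indexed instance; once this is checked, the induction closes without further effort.
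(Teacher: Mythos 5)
Your proposal is correct and follows essentially the same route as the paper's proof: the paper also treats items \eqref{lem:existenceofeams1}, \eqref{lem:existenceofeams3}--\eqref{lem:existenceofeams5} as routine unfoldings, gives the same five-step trace for \eqref{lem:existenceofeams6}, and proves \eqref{lem:existenceofeams2} by induction on $n$ with exactly your $3k+4$-step reduction of $\mAppn{n+1}{k}$ to the $n$-instance and the same arithmetic $(3k+4)+(3k+4)n+2=(3k+4)(n+1)+2$. Only two cosmetic slips: in $\mProj{k}{i}$ the register $R_1$ does not exist, so the instructions $\Load 1$ discard their argument rather than ``overwrite $R_1$'' (the behaviour and step count are unchanged), and in the base case of \eqref{lem:existenceofeams2} the tape should carry $[a,d_1,\dots,d_k]$, with the $d_i$'s simply riding along into the final $\ins{Call}$, exactly as in the paper.
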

\begin{proof} The only interesting cases are the application and fixed point combinator.

\eqref{lem:existenceofeams2} Concerning $\mAppn{n}{k}$, we proceed by induction on $n$.
\begin{itemize}
\item
Base case. If $n=0$ then $\mAppn{0}{k} = \mach{I}$, and $\appT{\mach{I}}{[a,d_1,\dots,d_k]} \reddh^2 \appT{\Lookinv{a}}{[d_1,\dots,d_k]}$.

\item Induction case. Easy calculations give:\\
\begin{minipage}{\linewidth} 
\[
	\appT{
	\mAppn{n+1}{k}
	}{
	[a,d_1,\dots,d_k,e_1,\dots,e_{n+1}]
	} \reddh^{3k+4} \appT{\mAppn{n}{k}}{[a,d_1\cdot e_1,\dots,d_k\cdot e_1,e_2,\dots,e_{n+1}]}
\]
This case follows from the IH since $3k+4+(3k+4)n + 2 = (3k+4)(n+1) + 2$ .
\end{minipage}

\end{itemize}
\noindent 
\eqref{lem:existenceofeams6} Recall that $\mY$ has been introduced in Definitions \ref{def:AM}\eqref{def:AM6} and \ref{def:bijectivelookup+Y}\eqref{def:ymachine}.
\[\bar{lcl}
	\mY &=& \tuple{R_0,R_1 ,\Load (0, 1);\Apply  0 1 0;\Apply 101; \Call 1,[\Lookup\mY,a]}\\
	&\reddh^2&\tuple{R_0 = \Lookup\mY,R_1 =a,\Apply  0 1 0;\Apply 101; \Call 1,[]}\\
	&\redh&\tuple{R_0 = \Lookup{\mY}\cdot a,R_1 =a,\Apply 101; \Call 1,[]}\\	
	&\redh&\tuple{R_0 = \Lookup{\mY}\cdot a,R_1 =a\cdot(\Lookup{\mY}\cdot a), \Call 1,[]}\\	
	&\redh&\Lookinv{a\cdot(\Lookup{\mY}\cdot a)} = \appT{\Lookinv a}{[\appT{\Lookup\mY}{[a]}]}.\hfill 
\ear
\]
This concludes the proof.
\end{proof}

We show that the above machines are actually typable using the type assignment system of Figure~\ref{fig:eamstyping}.
This property is needed to show that the translation is type-preserving.

\begin{lem}\label{lem:welltypedeams}  
The EAMs introduced in Definition~\ref{def:fixedmachines} can be typed as follows (for all $\alpha,\beta_i,\delta_i\in\Types$, using the notation $\vec \delta \to\beta_i= \delta_1\to\cdots\to\delta_n\to\beta_i$):
\bsub
\item\label{lem:welltypedeams1}
	$ \mProj{k}{i} : \beta_1\to\cdots\to\beta_k\to\beta_i$, for $1\le i \le k>0$;
\item\label{lem:welltypedeams2} 
	$\mAppn{n}{k} : (\beta_1\to\dots\to\beta_k\to\alpha)\to(\vec \delta\to\beta_1)\to\dots\to(\vec \delta\to\beta_k)\to\vec\delta\to\alpha$;
\item\label{lem:welltypedeams3}
	$ \mPred : \tint\to\tint$;
\item\label{lem:welltypedeams4} 
	$\mSucc : \tint\to\tint$;
\item\label{lem:welltypedeams5} 
	$\mIfz : \tint\to\alpha\to\alpha\to\alpha$.
\esub
\end{lem}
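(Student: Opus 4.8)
The plan is to construct, for each machine of Definition~\ref{def:fixedmachines}, an explicit typing derivation in the system of Figure~\ref{fig:eamstyping}. Since all these machines have an empty input tape and (with the sole exception of $R_0$ in $\mAppn{n+1}{k}$) uninitialised registers, the derivations are essentially syntax-directed: after applying $(\vec R)$ one discharges every $\Null$ register with $(R_\Null)$, obtaining a context $\Delta$, and then types the pair $(\mM.P,[])$ by following the program instruction by instruction. The only non-trivial register content, $R_0 = \Lookup{\mAppn{n}{k}}$ in the inductive machine, is handled by $(R_\Types)$, whose premise is supplied by the induction hypothesis.

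For \eqref{lem:welltypedeams1}, \eqref{lem:welltypedeams3}, \eqref{lem:welltypedeams4} and \eqref{lem:welltypedeams5} I would type the program against the empty tape. Because the tape is empty, each $\Load i$ is typed by $(\mathrm{load}_\Null)$, which prepends one arrow to the type and records the loaded type at index $i$ in the context; thus the block of loads produces exactly the arrow prefix $\beta_1\to\cdots\to\beta_k\to(-)$. For $\mProj{k}{i}$ the crucial observation is that only the $i$-th argument is stored in $R_0$ while all others overwrite $R_1$, so when the final $\Call 0$ is reached with an empty tape, the rule $(\mathrm{call})$ (with $n=0$) returns precisely the type of $R_0$, namely $\beta_i$, giving $\beta_1\to\cdots\to\beta_k\to\beta_i$. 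For $\mPred$ and $\mSucc$ one loads $R_0:\tint$, applies $(\mathrm{pred})$, resp.\ $(\mathrm{succ})$, to keep $R_0:\tint$, and closes with $\Call 0$; for $\mIfz$ one loads $R_0:\tint,\,R_1:\alpha,\,R_2:\alpha$, applies $(\mathrm{test})$ to obtain $R_0:\alpha$, and closes with $\Call 0$. In each case the recorded argument types, read in the order they are loaded, assemble into the announced arrow type.

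The substantive case is \eqref{lem:welltypedeams2}, which I would prove by induction on $n$. The base case $n=0$ is immediate, since $\mAppn{0}{k}=\mProj{1}{1}=\mach{I}$ and \eqref{lem:welltypedeams1} yields $\mach{I}:\gamma\to\gamma$ for every $\gamma$; taking $\gamma=\beta_1\to\cdots\to\beta_k\to\alpha$ (and noting that $\vec\delta$ is empty, so $\vec\delta\to\beta_i=\beta_i$) gives the claimed type. For the inductive step, write the new parameters as $\delta_1,\dots,\delta_{n+1}$ and instantiate the induction hypothesis with the \emph{tail} $\delta_2,\dots,\delta_{n+1}$, so that via $(R_\Types)$ the register $R_0=\Lookup{\mAppn{n}{k}}$ receives the type $\gamma_0=(\beta_1\to\cdots\to\beta_k\to\alpha)\to(\delta_2\to\cdots\to\delta_{n+1}\to\beta_1)\to\cdots\to(\delta_2\to\cdots\to\delta_{n+1}\to\beta_k)\to\delta_2\to\cdots\to\delta_{n+1}\to\alpha$. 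The loading block $\Load (1,\dots,k+2)$ then records $R_1:\beta_1\to\cdots\to\beta_k\to\alpha$, the registers $R_{i+1}:\delta_1\to\cdots\to\delta_{n+1}\to\beta_i$ for $1\le i\le k$, and $R_{k+2}:\delta_1$. The heart of the calculation is to track the two blocks of $(\mathrm{app})$ instructions: the inner block $\Apply{2}{k+2}{2};\cdots;\Apply{k+1}{k+2}{k+1}$ applies $R_{k+2}$ (of type $\delta_1$) to each $R_{i+1}$, lowering its type to $\delta_2\to\cdots\to\delta_{n+1}\to\beta_i$; the outer block $\Apply 010;\cdots;\Apply 0{k+1}0$ feeds $R_1,\dots,R_{k+1}$ in turn into $R_0$ of type $\gamma_0$, consuming its prefix and leaving $R_0:\delta_2\to\cdots\to\delta_{n+1}\to\alpha$. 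The closing $\Call 0$ on the empty tape returns this type, and prepending the arrows introduced by the loads yields exactly $(\beta_1\to\cdots\to\beta_k\to\alpha)\to(\vec\delta\to\beta_1)\to\cdots\to(\vec\delta\to\beta_k)\to\vec\delta\to\alpha$ with $\vec\delta=\delta_1,\dots,\delta_{n+1}$.

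I expect the main obstacle to be purely a matter of bookkeeping: aligning the index arithmetic of the two apply-blocks in \eqref{lem:welltypedeams2} with the intended register types, and in particular verifying that instantiating the induction hypothesis with $\delta_2,\dots,\delta_{n+1}$ (rather than $\delta_1,\dots,\delta_{n+1}$) is precisely what makes the inner applies type-check against the $\delta_1$ held in $R_{k+2}$. No conceptual difficulty is anticipated, since the typing rules become syntax-directed once the registers and tape of each machine are fixed.
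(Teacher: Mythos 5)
Your proof is correct and takes essentially the same route as the paper's (much terser) argument: construct the typing derivations directly from Definition~\ref{def:fixedmachines}, with induction on $n$ for $\mAppn{n}{k}$ --- including the key point of instantiating the induction hypothesis at the tail $\delta_2,\dots,\delta_{n+1}$ so that the inner $\ins{App}$ block consumes the $\delta_1$ held in $R_{k+2}$. One cosmetic slip: $\mProj{k}{i}$ has only the register $R_0$, so the non-selected arguments are \emph{discarded} by loads to the non-existent index $1$ rather than ``overwriting $R_1$''; this does not affect your derivation, since $(\mathrm{load_{\Null}})$ records the loaded type at index $i$ in the context $\Delta$ regardless of whether the register exists.
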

\begin{proof}
It follows easily from Definition~\ref{def:fixedmachines}. For $\mAppn{n}{k}$, proceed by induction on $n$.
\end{proof}

\subsection{Translating (E)PCF programs to EAMs}

Using the auxiliary EAM introduced above, we can translate an $\EPCF$ term  $M$ having $x_1,\dots,x_n$ as free variables as an EAM $\mach{M}$ loading their values from the input tape. 

\begin{defi}[Translation]\label{def:trans}
Let $M$ be an \EPCF{} term such that $\FV{M}\subseteq\set{x_1,\dots,x_n}$.
The \emph{translation} of $M$ (w.r.t.\ $\vec x$) is an EAM $\trans[\vec x]{M}$ defined by structural induction on $M$:
\[
	\bar{lcll}
	\trans[\vec x]{x_i} &=& \mProj{n}{i},\textrm{ where }i\in\set{1,\dots,n};\\[4pt]		
	\trans[\vec x]{\lambda y.M}&= & \trans[\vec x,y]{M}, \textrm{ where wlog }y\notin\vec x;\\[4pt]
		\trans[\vec x]{M\langle N/y\rangle} &=&  \appT{\trans[y,\vec x]{M}}{[\Lookup{\trans{N}}]};\\[4pt]		
	\trans[\vec x]{M\cdot N} &=&  \appT{\mAppn{n}{2}}{[\Lookup{\mProj{1}{1}},\Lookup{\trans[\vec x]{M}},\Lookup {\trans[\vec x]{N}}]};\\[4pt]
	\trans[\vec x]{\mathbf{0}} &=& \appT{\mProj{n+1}{1}}{[0]};\\[4pt]
	\trans[\vec x]{\pred M} &=& \appT{\mAppn{n}{1}}{[\Lookup\mPred,\Lookup{\trans[\vec x]{M}}]};\\[4pt]
	\trans[\vec x]{\succ M} &=& \appT{\mAppn{n}{1}}{[\Lookup\mSucc,\Lookup{\trans[\vec x]{M}}]};\\[4pt]
	\trans[\vec x]{\ifterm LMN} &=& \appT{\mAppn{n}{3}}{[\Lookup\mIfz,\Lookup{\trans[\vec x]{L}},\Lookup{\trans[\vec x]{M}},\Lookup{\trans[\vec x]{N}}]};\\[4pt]
	\trans[\vec x]{\fix M} &=& \begin{cases}\appT{\mY}{[\Lookup{\trans[\vec x]{M}}]},&\textrm{if }n = 0,\\
	\appT{\mAppn{n}{1}}{[\Lookup\mY,\Lookup{\trans[\vec x]{M}}]},& \mathrm{otherwise}.\end{cases}
	\ear
\]
\end{defi}

\begin{exas} Recall the \EPCF{} terms introduced in Example~\ref{ex:PCFterms}. 
The translation of said \EPCF{} terms produces the following machines:
\bsub
\item $\trans{\comb{I}} = \trans[x]{x} = \mProj{1}{1}$.
\item $\trans{\Om} = \appT{\mY}{[\Lookup{\trans{\comb{I}}}]} = \appT{\mY}{[\Lookup{\mProj{1}{1}}]}$.
\item $\trans{\mathbf{succ1}} = \trans{\lam x.\succ x} = \trans[x]{\succ x} = \appT{\mAppn{1}{1}}{[\Lookup{\mSucc}, \Lookup{\mProj{1}{1}}]}$,
\item $\!\!\!\bar[t]{{lcl}}
	\trans{\mathbf{succ2}}&=&\appT{\mAppn{2}{0}}{[\Lookup{\trans[s,n]{s\cdot(s\cdot n)}}\cdot \Lookup{\trans{\mathbf{succ1}}}]}\\[4pt]
					 &= &\appT{\mProj{1}{1}}{[\Lookup{\mAppn{2}{2}}\cdot\Lookup{\mProj{2}{1}}\cdot\Lookup{\trans[s,n]{s\cdot n}}, \Lookup{\trans{\mathbf{succ1}}}]}\\[4pt]
					 &= &\appT{\mProj{1}{1}}{[\Lookup{\mAppn{2}{2}}\cdot\Lookup{\mProj{2}{1}}\cdot(\Lookup{\mAppn{2}{2}}\cdot \mProj{2}{1}\cdot \mProj{2}{2}), \Lookup{\trans{\mathbf{succ1}}}]}\\[4pt]
					&= &\appT{\mProj{1}{1}}{[\Lookup{\mAppn{2}{2}}\cdot\Lookup{\mProj{2}{1}}\cdot(\Lookup{\mAppn{2}{2}}\cdot \mProj{2}{1}\cdot \mProj{2}{2}), \Lookup{\mAppn{1}{1}}\cdot \Lookup{\mSucc} \cdot \Lookup{\mProj{1}{1}}]}.
    \ear$
\esub
\end{exas}
\noindent 
In the translation above typing information was deliberately ignored for the sake of generality. We now show that our translation preserves the typings in the following sense.

\begin{thm}\label{thm:transtyping} Let  $M$ be an $\mathsf{(E)PCF}$ term and $\Gamma = x_1:\delta_1,\dots,x_n:\delta_n$. Then
\[  
	\Gamma \vdash M:\alpha \imp \trans[\vec x]{M} : \delta_1\rightarrow\cdots\rightarrow\delta_n\rightarrow\alpha.
\]
\end{thm}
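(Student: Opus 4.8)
The plan is to proceed by structural induction on the term $M$, which — by syntax-directedness of the \EPCF{} type system (Lemma~\ref{lem:epcftyping}\eqref{lem:epcftyping1}) — amounts to an induction on the typing derivation of $\Gamma\vdash M:\alpha$, since each term constructor forces a unique last rule. Throughout, write $\vec\delta\to\alpha$ for $\delta_1\to\cdots\to\delta_n\to\alpha$. The two workhorses are Lemma~\ref{lem:welltypedeams}, which supplies the (schematic) types of the auxiliary machines $\mProj{k}{i}$, $\mAppn{n}{k}$, $\mPred$, $\mSucc$, $\mIfz$, and Proposition~\ref{prop:typing}\eqref{prop:typing1}, which says that feeding a machine of type $\beta\to\alpha$ an argument of type $\beta$ yields a machine of type $\alpha$. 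Every application appearing in the translation is discharged by one use of the latter, after instantiating the type variables of the former appropriately; in each case where $\mAppn{n}{k}$ is used, the parameter $\vec\delta$ of Lemma~\ref{lem:welltypedeams}\eqref{lem:welltypedeams2} is taken to be the context types $\delta_1,\dots,\delta_n$ themselves, which matches because its subscript $n$ is precisely the number of free variables.

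First the base and easy structural cases. For $M=x_i$ the translation is $\mProj{n}{i}$, and Lemma~\ref{lem:welltypedeams}\eqref{lem:welltypedeams1} with $\beta_j:=\delta_j$ gives exactly $\mProj{n}{i}:\vec\delta\to\delta_i$, matching $\alpha=\delta_i$. For $M=\lambda y.M'$ with $\alpha=\beta\to\gamma$ and $\Gamma,y:\beta\vdash M':\gamma$, the translation equals $\trans[\vec x,y]{M'}$, so the induction hypothesis on the extended variable list gives $\trans[\vec x,y]{M'}:\vec\delta\to\beta\to\gamma=\vec\delta\to\alpha$ with no further work. For $M=\mathbf{0}$ we note $0=\Lookup{\mach{0}}$ with $\mach{0}:\tint$ by $(\mathrm{nat})$, and feeding it to $\mProj{n+1}{1}$ (typed with its first argument $\tint$ and the remaining $n$ arguments $\delta_1,\dots,\delta_n$) yields $\vec\delta\to\tint$ by Proposition~\ref{prop:typing}\eqref{prop:typing1}. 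The arithmetic and conditional cases are analogous: one instantiates the type of $\mAppn{n}{k}$ so that its head slot matches $\mPred,\mSucc:\tint\to\tint$ (Lemma~\ref{lem:welltypedeams}\eqref{lem:welltypedeams3},\eqref{lem:welltypedeams4}), respectively $\mIfz:\tint\to\alpha\to\alpha\to\alpha$ (Lemma~\ref{lem:welltypedeams}\eqref{lem:welltypedeams5}), and its remaining slots to the inductively typed translations of the immediate subterms, reading off $\vec\delta\to\tint$ or $\vec\delta\to\alpha$ from the conclusion.

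The two cases requiring care are application and the fixed point. For $M=M'\cdot N$ with $\Gamma\vdash M':\beta\to\alpha$ and $\Gamma\vdash N:\beta$, the translation is built from $\mAppn{n}{2}$, whose head slot expects a type of the form $\beta_1\to\beta_2\to\sigma$; here the head is $\mProj{1}{1}=\mach{I}$, so I read it at the \emph{identity} typing $\mProj{1}{1}:(\beta\to\alpha)\to(\beta\to\alpha)$ (Lemma~\ref{lem:welltypedeams}\eqref{lem:welltypedeams1} with $k=i=1$), forcing $\beta_1:=\beta\to\alpha$, $\beta_2:=\beta$, $\sigma:=\alpha$. The two remaining slots then demand $\vec\delta\to(\beta\to\alpha)$ and $\vec\delta\to\beta$, which are exactly the inductive types of $\trans[\vec x]{M'}$ and $\trans[\vec x]{N}$, so three uses of Proposition~\ref{prop:typing}\eqref{prop:typing1} give $\vec\delta\to\alpha$. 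For $M=\fix M'$ with $\Gamma\vdash M':\alpha\to\alpha$, I split on $n$: when $n=0$ the translation is $\appT{\mY}{[\Lookup{\trans{M'}}]}$ and rule $(\mathrm{fix})$ gives $\mY:(\alpha\to\alpha)\to\alpha$, so feeding $\trans{M'}:\alpha\to\alpha$ yields $\alpha$; when $n>0$ one feeds $\mY$ as the head argument of $\mAppn{n}{1}$, instantiating $\beta_1:=\alpha\to\alpha$, $\sigma:=\alpha$, and then $\trans[\vec x]{M'}:\vec\delta\to(\alpha\to\alpha)$ as the second, producing $\vec\delta\to\alpha$. Finally, the \EPCF{}-specific case $M=M'\esubst{y}{N}$ uses rule $(\sigma)$, giving $\Gamma,y:\beta\vdash M':\alpha$ and $\vdash N:\beta$ with $N$ closed; since the translation prepends $y$ to the variable list, the induction hypothesis yields $\trans[y,\vec x]{M'}:\beta\to\vec\delta\to\alpha$ and $\trans{N}:\beta$, and one application of Proposition~\ref{prop:typing}\eqref{prop:typing1} removes the leading $\beta$ to leave $\vec\delta\to\alpha$.

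The main obstacle is purely bookkeeping: the auxiliary-machine types from Lemma~\ref{lem:welltypedeams} are schematic in several type variables, and in each case one must instantiate them consistently so that the head argument matches a fixed machine ($\mProj{1}{1}$, $\mPred$, $\mSucc$, $\mIfz$, or $\mY$) while the tail arguments match the inductively obtained translations. The application case, where the polymorphic identity $\mProj{1}{1}$ must be read at the arrow type $(\beta\to\alpha)\to(\beta\to\alpha)$, is the one place where this instantiation is least transparent. One must also keep the order of the variable context aligned with the order in which the translation lists $\vec x$ — in particular the prepending of $y$ in the abstraction and explicit-substitution cases — so that the sequence of arrows $\vec\delta$ comes out in the claimed order.
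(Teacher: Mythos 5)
Your proof is correct and follows essentially the same route as the paper's: induction on the typing derivation (equivalently, structural induction, by syntax-directedness), with Lemma~\ref{lem:welltypedeams} supplying the auxiliary machines' schematic types and Proposition~\ref{prop:typing}\eqref{prop:typing1} discharging each application, including the same instantiation $\mProj{1}{1}:(\beta\to\alpha)\to\beta\to\alpha$ in the application case. If anything, you are slightly more careful than the paper in the $\fix$ case, where you explicitly treat the $n=0$ branch of Definition~\ref{def:trans} (using $\mY:(\alpha\to\alpha)\to\alpha$ directly) while the paper's appendix only writes out the $\mAppn{n}{1}$ form.
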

\begin{proof}[Proof (Appendix~\ref{app:Sim})]
Note that the type assignment systems of \EPCF{} and of \PCF{} coincide on \PCF{} terms.
Proceed by induction on a derivation of $\Gamma \vdash M:\alpha$, using Proposition~\ref{prop:typing}\eqref{prop:typing1} and Lemma~\ref{lem:welltypedeams}.
\end{proof}

Ideally, one would like that an \EPCF{} step $M \redwh N$ becomes a reduction $\trans{M} \reddh \trans{N}$ in the corresponding EAMs. Unfortunately, the situation is more complicated---the translation $\trans{N}$ may contain auxiliary EAMs that are not generated by $\trans{M}$ along reduction. 
The property that actually holds is that the two EAMs are interconvertible $\trans{M}\convh \trans{N}$, and $\trans{N}$ is `closer' to their common reduct.
The next definition captures this intuition.

\begin{defi} For $\mM,\mN\in\cM$, define the relation:
\[
	\mM\convg\mN \iff \exists \mach{Z}\in\cM\,.\,[\,(\mM\reddh \mach{Z} {~}_{\mach{c}}\!\!\twoheadleftarrow\mN) \land (\len{\mM\reddh \mach{Z}} > \len{\mN\reddh \mach{Z}})\,].
\]
Note that the reduction path $\mM\reddh \mach{Z}$ above must be non-empty.
Moreover, recall that $\mM\convg\mN$ entails $\mM\convh\mN$.
\end{defi}

\begin{lem}\label{lemma:convgstrictness}
\bsub\item\label{lemma:convgstrictness1} $\len{\mM\reddh \mach{Z}} > \len{\mN\reddh \mach{Z}}$ and $\mach{Z}\reddh \mach{Z}'$ imply $\len{\mM\reddh \mach{Z}'} > \len{\mN\reddh \mach{Z}'}$.
\item\label{lemma:convgstrictness2}
The relation $\convg$ is transitive.
\esub
\end{lem}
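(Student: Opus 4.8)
The plan is to prove part (i) directly from determinism of $\redh$ (Lemma~\ref{lem:redhproperties}\eqref{lem:redhproperties1}), and then to obtain transitivity in part (ii) as a consequence of part (i) together with the observation that any two reducts of a common machine are linearly ordered by $\reddh$. The first thing I would record is that determinism makes reduction lengths \emph{additive along concatenation}: since every EAM has at most one $\redh$-successor, the maximal reduction sequence issuing from a machine is unique, so whenever $\mM\reddh\mach{Z}\reddh\mach{Z}'$ we have $\len{\mM\reddh\mach{Z}'}=\len{\mM\reddh\mach{Z}}+\len{\mach{Z}\reddh\mach{Z}'}$. Under the hypotheses of (i) both $\mM\reddh\mach{Z}$ and $\mN\reddh\mach{Z}$ hold (otherwise the displayed lengths are undefined), and prepending these to $\mach{Z}\reddh\mach{Z}'$ yields $\mM\reddh\mach{Z}'$ and $\mN\reddh\mach{Z}'$. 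Writing $\ell=\len{\mach{Z}\reddh\mach{Z}'}$, additivity gives $\len{\mM\reddh\mach{Z}'}=\len{\mM\reddh\mach{Z}}+\ell$ and $\len{\mN\reddh\mach{Z}'}=\len{\mN\reddh\mach{Z}}+\ell$, so the strict inequality $\len{\mM\reddh\mach{Z}}>\len{\mN\reddh\mach{Z}}$ is preserved after adding the common value $\ell$. This is the whole of (i).

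For part (ii), suppose $\mM\convg\mN$ via a common reduct $\mach{Z}_1$, that is $\mM\reddh\mach{Z}_1\,{}_{\mach{c}}\!\!\twoheadleftarrow\mN$ with $\len{\mM\reddh\mach{Z}_1}>\len{\mN\reddh\mach{Z}_1}$, and $\mN\convg\mP$ via a common reduct $\mach{Z}_2$, that is $\mN\reddh\mach{Z}_2\,{}_{\mach{c}}\!\!\twoheadleftarrow\mP$ with $\len{\mN\reddh\mach{Z}_2}>\len{\mP\reddh\mach{Z}_2}$. The key structural step is that $\mach{Z}_1$ and $\mach{Z}_2$, being both reducts of $\mN$, lie on the \emph{same} reduction path: by determinism the maximal sequence out of $\mN$ is unique, hence every reduct of $\mN$ occurs in it, so one of $\mach{Z}_1\reddh\mach{Z}_2$ or $\mach{Z}_2\reddh\mach{Z}_1$ must hold. (Church--Rosser, Lemma~\ref{lem:redhproperties}\eqref{lem:redhproperties2}, would give a common reduct of $\mach{Z}_1$ and $\mach{Z}_2$, but determinism gives the stronger linear ordering that I actually need.)

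I would then split into the two cases and take the \emph{later} of the two machines as the witness. If $\mach{Z}_1\reddh\mach{Z}_2$, set $\mach{Z}=\mach{Z}_2$: then $\mM\reddh\mach{Z}_1\reddh\mach{Z}_2$ and $\mP\reddh\mach{Z}_2$, and applying part (i) to the pair $(\mM,\mN)$ along $\mach{Z}_1\reddh\mach{Z}_2$ gives $\len{\mM\reddh\mach{Z}_2}>\len{\mN\reddh\mach{Z}_2}$, which chained with $\len{\mN\reddh\mach{Z}_2}>\len{\mP\reddh\mach{Z}_2}$ yields $\mM\convg\mP$. If instead $\mach{Z}_2\reddh\mach{Z}_1$, set $\mach{Z}=\mach{Z}_1$ symmetrically: part (i) applied to $(\mN,\mP)$ along $\mach{Z}_2\reddh\mach{Z}_1$ gives $\len{\mN\reddh\mach{Z}_1}>\len{\mP\reddh\mach{Z}_1}$, which chained with $\len{\mM\reddh\mach{Z}_1}>\len{\mN\reddh\mach{Z}_1}$ again yields the claim. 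In both cases the resulting path $\mM\reddh\mach{Z}$ has strictly positive length (being strictly longer than $\len{\mP\reddh\mach{Z}}\ge 0$), so $\mach{Z}$ is a legitimate witness. I expect the only delicate point to be the linear-ordering argument for $\mach{Z}_1,\mach{Z}_2$; once that is in place the rest reduces mechanically to part (i) and transitivity of $>$ on the naturals.
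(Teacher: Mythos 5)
Your proof is correct and follows essentially the same route as the paper: part (i) is the paper's appeal to determinism of $\redh$ (lengths add along the unique reduction path out of a machine, so adding the common length $\len{\mach{Z}\reddh\mach{Z}'}$ preserves the strict inequality), and part (ii) is obtained as a consequence of part (i) exactly as the paper indicates, with the extra care you take about non-emptiness of the witnessing path being a welcome detail. The only difference is cosmetic: to relate the two witnesses $\mach{Z}_1$ and $\mach{Z}_2$ you invoke the linear ordering of the reducts of $\mN$ given by determinism, whereas the paper's citation of confluence (Lemma~\ref{lem:redhproperties}\eqref{lem:redhproperties2}) suggests instead merging them into a common reduct and applying (i) twice; under determinism the two devices are interchangeable.
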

\begin{proof}
\eqref{lemma:convgstrictness1} By confluence and determinism of $\redh$ (Lemma~\ref{lem:redhproperties}). 

\eqref{lemma:convgstrictness2} follows from \eqref{lemma:convgstrictness1}.
\end{proof}

\begin{prop}\label{prop:smallstep} Given \EPCF{} program $M$ of type $\tint$, we have $M \redwh N \Rightarrow \trans{M}\convg \trans{N}$.
\end{prop}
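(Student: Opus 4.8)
The plan is to prove the statement for \emph{every} closed \EPCF{} term rather than only for programs of type $\tint$: since evaluation contexts never bind variables, whenever a closed $M$ reduces as $M=\Ev[R]\redwh\Ev[R']=N$ the redex $R$ is itself closed, so every subterm encountered during the argument is closed and its translation uses the empty variable list. I would then proceed by induction on the evaluation context $\Ev$ witnessing the step (this decomposition being unique by determinism, Remark~\ref{rem:EPCFprops}\eqref{rem:EPCFprops3}).

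Before the induction I would isolate three auxiliary facts that make the bookkeeping uniform. First, a \emph{commutation lemma}: for $\sigma=\esubst{x_1}{N_1}\cdots\esubst{x_m}{N_m}$, unfolding Definition~\ref{def:trans} by induction on $m$ yields $\trans{M^\sigma}=\appT{\trans[x_1,\dots,x_m]{M}}{[\Lookup{\trans{N_1}},\dots,\Lookup{\trans{N_m}}]}$, and hence the address-level identity $\Lookup{\trans[x_1,\dots,x_m]{M}}\cdot\Lookup{\trans{N_1}}\cdots\Lookup{\trans{N_m}}=\Lookup{\trans{M^\sigma}}$. Second, two \emph{lifting lemmas} for $\convg$: if $\mM\convg\mN$ then $\appT{\mM}{[c]}\convg\appT{\mN}{[c]}$ (from the length-preserving form of Lemma~\ref{lem:redhproperties}\eqref{lem:redhproperties3}), and the arithmetic \emph{driving lemma} stating that $\mM\convg\mN$ entails $\tuple{R_0=\Lookup{\mM},\ins{op};\Call 0,[]}\convg\tuple{R_0=\Lookup{\mN},\ins{op};\Call 0,[]}$ for $\ins{op}\in\{\Pred 00,\Succ 00\}$, and the analogue with $R_1,R_2$ fixed for $\Ifz 0120$. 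Third, the trivial \emph{prefix rule}: if $\mM\reddh^k\mM_0$ and $\mN\reddh^k\mN_0$ with $\mM_0\convg\mN_0$, then $\mM\convg\mN$, by composing the witnessing paths.

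The base case $\Ev=\square$ splits into the computation and percolation redexes. For $\beta$, namely $(\lam x.M)^\sigma N\redcr M^\sigma\esubst{x}{N}$, I would show via Lemma~\ref{lem:existenceofeams}\eqref{lem:existenceofeams2} and the commutation lemma that both $\trans{(\lam x.M)^\sigma N}$ (in four steps through $\mAppn{0}{2}$ and $\mProj{1}{1}$) and $\trans{M^\sigma\esubst{x}{N}}$ (in zero steps) reach the single machine $\appT{\trans[x_1,\dots,x_m,x]{M}}{[\Lookup{\trans{N_1}},\dots,\Lookup{\trans{N_m}},\Lookup{\trans{N}}]}$, so $\convg$ holds. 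The $\fix$ rule reaches $\appT{\trans{M}}{[\Lookup{\trans{\fix M}}]}$ from both sides (five steps via Lemma~\ref{lem:existenceofeams}\eqref{lem:existenceofeams6}, four via the application machine), and the numeric redexes $\pred\mathbf 0\redcr\mathbf 0$, $\pred(\num{n+1})\redcr\num n$, $\ifterm{\mathbf 0}MN\redcr M$, $\ifterm{\num{n+1}}MN\redcr N$ are handled by running $\mAppn 0 k$, then $\mPred$ or $\mIfz$ (Lemma~\ref{lem:existenceofeams}), driving the first register to a numeral using Lemma~\ref{lem:numberred}, and firing the arithmetic instruction: in each case both sides reach a common machine with the redex side strictly longer. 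The percolation redexes all rest on the commutation lemma: e.g.\ $(M\cdot N)^\sigma\redsc M^\sigma\cdot N^\sigma$ reduces by Lemma~\ref{lem:existenceofeams}\eqref{lem:existenceofeams2} to $\appT{\trans{M^\sigma}}{[\Lookup{\trans{N^\sigma}}]}$, since $\mAppn m 2$ threads $\Lookup{\trans{N_1}},\dots,\Lookup{\trans{N_m}}$ onto the subterms and the address identity turns $\Lookup{\trans[\vec x]{M}}\cdot\vec e$ into $\Lookup{\trans{M^\sigma}}$; the contractum reaches the same machine in four steps, and $\convg$ holds because $m\ge 1$. The cases $\mathbf 0^\sigma$, $(\pred M)^\sigma$, $(\succ M)^\sigma$, $(\ifterm LMN)^\sigma$, $(\fix M)^\sigma$ and $x^\sigma\redsc\sigma(x)$ are analogous, while the clause $y^\sigma\redsc y$ with $y\notin\dom(\sigma)$ is vacuous since a closed redex cannot expose a free variable.

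The inductive cases $\Ev=\Ev'\cdot P,\ \pred\Ev',\ \succ\Ev',\ \ifterm{\Ev'}PQ$ are then uniform: the inner step $\Ev'[R]\redwh\Ev'[R']$ gives $\trans{\Ev'[R]}\convg\trans{\Ev'[R']}$ by the induction hypothesis, each context translates (through $\mAppn 0 k$, Lemma~\ref{lem:existenceofeams}) after a fixed number of steps to a machine that either appends $\Lookup{\trans{\Ev'[R]}}$ to a tape or loads it into $R_0$ of a $\Pred$/$\Succ$/$\Ifz$ frame, and the lifting lemmas together with the prefix rule propagate $\convg$. The main obstacle I anticipate is the driving lemma: I must check that while $R_0$ holds the address of the still-reducing $\trans{\Ev'[R]}$, the arithmetic instruction does not fire prematurely. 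This is guaranteed because every machine strictly inside a reduction sequence is non-final and numerals are final, so the sub-machine is never a numeral before the common reduct is reached; each of its steps is then mirrored one-for-one by the driving rules of Figure~\ref{fig:am:small_step}, preserving both the common reduct and the strict length inequality demanded by $\convg$.
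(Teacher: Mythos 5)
Your proposal is correct and follows essentially the same route as the paper's own proof: an induction on the derivation of $M \redwh N$ (equivalently, on the evaluation context), computing common reducts of head redexes via Lemma~\ref{lem:existenceofeams} together with the unfolding of $\trans{M^\sigma}$ into $\appT{\trans[\vec x]{M}}{[\Lookup{\trans{N_1}},\dots,\Lookup{\trans{N_m}}]}$, and propagating $\convg$ through contexts by the length-preserving tape-append property (Lemma~\ref{lem:redhproperties}\eqref{lem:redhproperties3}) and the one-for-one register-driving rules of Figure~\ref{fig:am:small_step}; your auxiliary commutation, lifting, driving and prefix facts are exactly the computations the paper inlines case by case. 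The only slip is that your generalization should be to all \EPCF{} \emph{programs} rather than all closed terms---Definition~\ref{def:trans} is undefined when an explicit substitution has an open body, and $\redwh$ is unsound on such terms---but this is harmless, since redexes in evaluation position inside a program are again programs and the paper's proof likewise never uses the type-$\tint$ assumption.
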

\begin{proof}[Proof (Appendix~\ref{app:Sim})] By induction on a derivation of $M \redwh N$, applying Lemma~\ref{lem:existenceofeams}.
\end{proof}

Since every \PCF{} program $P$ of type $\tint$ is also an \EPCF{} program of the same type, and in this case the operational semantics of \PCF{} and \EPCF{} coincide (Theorem~\ref{thm:eqPCFandEPCF}), we can use the above proposition to prove that the EAM $\trans{P}$ faithfully simulates the behavior of $P$.

\begin{thm}\label{thm:simulation}  For a \PCF{} program $P$ having type $\tint$, the following are equivalent:
\begin{enumerate}\item
 $P \redd[\PCF] \num n $;
 \item
 	$ \trans{P} \reddh \mach{n}$.
 \end{enumerate}
\end{thm}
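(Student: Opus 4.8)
The plan is to transport the whole statement to the intermediate language \EPCF{} and to exploit the fact, recorded in Proposition~\ref{prop:smallstep}, that a single \EPCF{} step refines only to the strict convertibility $\convg$ between the associated machines, not to an honest EAM reduction. First I would observe that, since $P$ is a \PCF{} program of type $\tint$, it is also an \EPCF{} program of type $\tint$; hence by Theorem~\ref{thm:eqPCFandEPCF} the goal $P\redd[\PCF]\num n$ is equivalent to $P\reddwh\num n$, and it suffices to relate the weak-head reduction of $P$ in \EPCF{} with the EAM reduction of $\trans{P}$.

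The core auxiliary fact I would isolate is: for every \EPCF{} program $M$ of type $\tint$, $M\reddwh\num m$ implies $\trans{M}\reddh\mach{m}$. To prove it, note that $M\reddwh\num m$ is a finite chain of $\redwh$-steps, each yielding $\convg$ by Proposition~\ref{prop:smallstep}; transitivity of $\convg$ (Lemma~\ref{lemma:convgstrictness}\eqref{lemma:convgstrictness2}) gives $\trans{M}\convg\trans{\num m}$, so the two machines share a common reduct $\mach{Z}$ with $\trans{M}\reddh\mach{Z}$ and $\trans{\num m}\reddh\mach{Z}$. By Lemma~\ref{lem:numberred} we also have $\trans{\num m}\reddh\mach{m}$, and since $\mach{m}$ is in final state, determinism of $\redh$ (Lemma~\ref{lem:redhproperties}\eqref{lem:redhproperties1}) forces $\mach{Z}\reddh\mach{m}$; chaining yields $\trans{M}\reddh\mach{m}$. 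Applying this with $M=P$, together with $P\reddwh\num n$ from the first paragraph, settles the implication $(1)\Rightarrow(2)$ at once.

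For $(2)\Rightarrow(1)$ I would first rule out divergence of $P$. Suppose $P=M_0\redwh M_1\redwh\cdots$ is an infinite $\redwh$-path; Proposition~\ref{prop:smallstep} gives $\trans{M_i}\convg\trans{M_{i+1}}$ for all $i$. Starting from the hypothesis $\trans{P}=\trans{M_0}\reddh\mach{n}$, I would prove by induction that every $\trans{M_i}$ reduces to $\mach{n}$ and that the lengths $d_i=\len{\trans{M_i}\reddh\mach{n}}$ strictly decrease: given $\trans{M_i}\reddh\mach{n}$ and a common reduct $\mach{Z}$ of $\trans{M_i}$ and $\trans{M_{i+1}}$, determinism and finality of $\mach{n}$ yield $\mach{Z}\reddh\mach{n}$, whence $\trans{M_{i+1}}\reddh\mach{n}$, while Lemma~\ref{lemma:convgstrictness}\eqref{lemma:convgstrictness1}, pushing the common reduct forward to $\mach{n}$, gives $d_i>d_{i+1}$. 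An infinite strictly decreasing sequence in $\nat$ is impossible, so $P$ is $\redwh$-normalizing; by Remark~\ref{rem:EPCFprops}\eqref{rem:EPCFprops4} and subject reduction (Lemma~\ref{lem:epcftyping}\eqref{lem:epcftyping4}) its normal form is a value of type $\tint$, i.e.\ some numeral $\num m$. The auxiliary fact then gives $\trans{P}\reddh\mach{m}$, and comparing with the hypothesis $\trans{P}\reddh\mach{n}$ via determinism and finality forces $m=n$; finally $P\reddwh\num n$ and Theorem~\ref{thm:eqPCFandEPCF} give $P\redd[\PCF]\num n$.

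The hard part is precisely the elimination of divergence in $(2)\Rightarrow(1)$: because an \EPCF{} step does not translate to a genuine EAM reduction but only to the weaker relation $\convg$, one cannot simply turn an infinite \EPCF{} reduction into an infinite EAM reduction. The decisive observation is that each $\convg$-step strictly decreases the number of EAM steps still required to reach the already-known terminal machine $\mach{n}$, so the termination of $\trans{P}$ bounds a descending chain in $\nat$ and thereby contradicts divergence.
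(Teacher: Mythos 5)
Your proposal is correct, and it shares the paper's skeleton exactly: both directions pass through \EPCF{} via Theorem~\ref{thm:eqPCFandEPCF}, convert each weak-head step into a $\convg$-step by Proposition~\ref{prop:smallstep}, and conclude with determinism/confluence of $\redh$ (Lemma~\ref{lem:redhproperties}) and the strictness Lemma~\ref{lemma:convgstrictness}. Your $(1)\Rightarrow(2)$ is the paper's argument with the confluence step unfolded: the paper derives $\trans{P}\convh\mach{n}$ and invokes Church--Rosser, whereas you track the common reduct explicitly via transitivity of $\convg$, Lemma~\ref{lem:numberred} and determinism (note only that the degenerate case $P=\num n$ falls outside transitivity of $\convg$, the chain being empty, and is settled by Lemma~\ref{lem:numberred} alone). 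The one substantive difference is how divergence is eliminated in $(2)\Rightarrow(1)$. The paper proves the contrapositive: an infinite weak-head path yields, by stitching together the common reducts $\mach{Z}'_k$ (each segment non-empty since $i_k>j_k\geq 0$), an infinite $\redh$-run from $\trans{P}$, so $\trans{P}$ cannot terminate and a fortiori cannot reach $\mach{n}$. You instead argue by contradiction anchored at the known terminal machine: pushing each common reduct forward to $\mach{n}$ by determinism and finality, Lemma~\ref{lemma:convgstrictness}\eqref{lemma:convgstrictness1} makes the lengths $\len{\trans{M_i}\reddh\mach{n}}$ descend strictly in $\nat$, which is impossible. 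These are mirror images of the same counting argument, built from identical ingredients. Yours is slightly more self-contained at the finish (a concrete descent in $\nat$, rather than the paper's implicit step that an infinite run precludes reaching a final state), but it costs an extra closing move the contrapositive avoids: establishing $P\reddwh\num m$ for some $m$ and then matching $m=n$ by determinism and finality, which re-uses your auxiliary fact from the first direction. Both your proof and the paper's silently rely on subject reduction (Lemma~\ref{lem:epcftyping}\eqref{lem:epcftyping4}) and closure of $\Prog{\E}$ under $\redwh$ to keep each term along the \EPCF{} chain a program of type $\tint$, so that Proposition~\ref{prop:smallstep} remains applicable at every step; making that explicit would tighten either write-up.
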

\begin{proof} $(1 \Rightarrow 2)$ Let $P \redd[\PCF] \num n $. 
Equivalently, $P \reddwh \num n$ (by Theorem~\ref{thm:eqPCFandEPCF}).
By Proposition~\ref{prop:smallstep}, we get $\trans{P} \convh \mach{n}$. 
Since $\mach{n}$ is in final state, this entails $\trans{P} \reddh \mach{n}$ by confluence (Lemma~\ref{lem:redhproperties}\eqref{lem:redhproperties2}).

 $(2 \Rightarrow 1)$ We prove the contrapositive. Assume that $\vdash P:\tint$, but $P$ does not reduce to a numeral. 
 As \PCF{} enjoys subject reduction~\cite{Ong95}, $P$ must have an infinite $\to_\PCF$ reduction path.
By Theorem~\ref{thm:eqPCFandEPCF}, this generates an infinite w.h.\ reduction path $P=P_0\redwh P_1\reddwh P_k\reddwh\cdots$.
By Proposition~\ref{prop:smallstep}, this translates to an infinite $\convg$-chain $\trans{P_k}\convg \trans{P_{k+1}}$ for common reducts $\trans{P_k}\reddh \mach{Z}_k {~}_{\mach{c}}\!\!\twoheadleftarrow\trans{P_{k+1}}$. By confluence and Lemma~\ref{lemma:convgstrictness}\eqref{lemma:convgstrictness1}, there are $\mach{Z}'_k$ and $i_k> j_k$ such that
\[{}\hspace{20pt}
\begin{tikzpicture}
\node (P1) at (-51pt,20pt) {$\trans{P_1}$};
\node (P2) at (-15pt,20pt) {$\trans{P_2}$};
\node (P3) at (19pt,20pt) {$\trans{P_3}$};
\node (P4) at (70pt,20pt) {$\cdots$};
\draw[->>] (P1) -- node [above,midway] {\tiny $j_0$} ($(P1)+(30pt,-13pt)$);
\draw[->>] (P2) -- node [above,midway] {\tiny $j_1$} ($(P2)+(30pt,-13pt)$);
\draw[->>] (P3) -- node [above,midway] {\tiny $j_2$} ($(P3)+(30pt,-13pt)$);
\node at (5pt,0){$\trans{P} = \trans{P_0}\reddh^{i_0} \mach{Z}'_1\reddh^{i_1} \mach{Z}'_2\reddh^{i_2} \mach{Z}'_3\reddh \cdots$};
\end{tikzpicture}
\]
This is only possible if the EAM $\trans{P}$ does not terminate.
\end{proof}

\section{A Fully Abstract Computational Model}\label{sec:Model}

We construct a model of \PCF{} based on EAMs and prove that it is fully abstract.
The main technical tools used to achieve this result are logical relations~\cite{AmadioC98,Pitts97} and definability~\cite{Curien07}.
We remind the definition of a fully abstract model of \PCF, together with some basic results about its observational equivalence.

Recall that \PCF{} contexts have been defined in Definition~\ref{def:contexts}\eqref{def:contexts1}, the relation $\to_\PCF$ represents a weak head reduction step and generates the interconvertibility relation $\conv[\PCF]$.

\begin{defi}
\bsub
	\item Given a \PCF{} context $\C\square$ an environment $\Gamma$ and two types $\alpha,\beta\in\Types$, we write $\C\square : (\Gamma,\alpha)\to\beta$ if $\vdash \C[P] : \beta$ holds, for all \PCF{} terms $P$ having type $\alpha$ in $\Gamma$, i.e.\ satisfying $\Gamma\vdash P : \alpha$.
\item The \emph{observational equivalence} is defined on \PCF{} terms $P,P'$ having type $\alpha$ in $\Gamma$ by:
\[
P\obseq P'\iff	\forall \C\square : (\Gamma,\alpha) \to\tint\,.\, [\,\C[P] \redd[\PCF]\num n\iff \C[P']\redd[\PCF] \num n\,]
\]
\item  The \emph{applicative equivalence} is defined on \PCF{} programs $P,P'$ having the same type $\alpha_1\to\cdots\to\alpha_k\to\tint$, for some $k\ge 0$, as follows:
\[
	P\appeq P'\iff \forall Q_1 : \alpha_1,\dots,Q_k:\alpha_k \,.\, [\,PQ_1\cdots Q_k\redd[\PCF]\num n\iff P'Q_1\cdots Q_k \redd[\PCF] \num n\,]
\]
\esub
\end{defi}

\begin{rem}
Note that \PCF{} contexts are different from \PCF{} evaluation contexts---unlike evaluation contexts, the `hole' can be found at any location in the term.
\end{rem}

The main difference between observational equivalence in the untyped setting and the one for \PCF{} is that, in the latter, we observe termination at ground type.
The following result shows that applicative and observational equivalence coincide on closed \PCF{} terms.

\begin{propC}[Abramsky's equivalence~\cite{Ong95}]\label{prop:Abramsky}~\\
For all $\PCF$ programs $P,P'$ of type $\alpha$, we have:
\[
	P\obseq P' \iff P\appeq  P'
\]
\end{propC}

\begin{defi}
	A \emph{model of \PCF{}} is a triple $\mathscr{M} = \tuple{(\MM_\alpha)_{\alpha\in\Types},(\ \cdot^{\alpha,\beta})_{\alpha,\beta\in\Types},\int{-}}$ where:
	\begin{itemize}
	\item $(\MM_\alpha)_{\alpha\in\Types}$ is a type-indexed collection of sets;
	\item $(\ \cdot^{\alpha,\beta}) : \MM_{\alpha\to\beta}\times\MM_\alpha\to\MM_\beta$ is a well-typed operation called \emph{application};
	\item $\int{-}$ is an \emph{interpretation function} mapping a derivation of $x_1:\beta_1,\dots,x_n:\beta_n\vdash P : \alpha$ to an element $\int{x_1:\beta_1,\dots,x_n:\beta_n\vdash P : \alpha}\in\MM_{\beta_1\to\cdots\to\beta_n\to\alpha}$.
	\end{itemize}
Due to syntax-directedness of \PCF{} type system, we can simply write $\int[\Gamma,\alpha]{P}$ for $\int{\Gamma\vdash P : \alpha}$.
\end{defi}

\begin{defi}
\bsub
\item A model $\mathscr{M}$ is \emph{sound} if, for all \PCF{} programs $P,P'$ of type $\alpha$, the following holds:
\[
	P\conv[\PCF] P' \imp \int[\alpha]{P} = \int[\alpha]{P'}
\]
\item A model $\mathscr{M}$ of \PCF{} is \emph{fully abstract} if, for all \PCF{} programs $P,P'$ of type $\alpha$:
\[
	\int[\alpha]{P} = \int[\alpha]{P'}\iff P\obseq P'
\]
The left-to-right implication is called \emph{adequacy}, the converse implication \emph{completeness}.
\esub
\end{defi}

\subsection{An adequate model of PCF}  

In this section we are going to construct a model of \PCF{} based on EAMs, and we prove that it enjoys adequacy.
The idea is to focus on the subset $\cD\subseteq\Addrs$ containing addresses of typable EAMs, which is then stratified $(\cD_\alpha)_{\alpha\in\Types}$ following the inductive syntax of simple types.
In other words, the set $\cD_\alpha$ contains the addresses of those EAM having type $\alpha$.
In particular, this means that the sets $\cD_\alpha$ are not pairwise disjoint: e.g., the address $\Lookup{\mach{I}}$ of the identity machine belongs to the set $\cD_{\alpha\to\alpha}$, for all types $\alpha$.
The well-typedness condition allows us to get rid of those EAMs containing addresses of infinite chains of `pointers'---a phenomenon described in Remark~\ref{rem:sillyEAMs}\eqref{rem:sillyEAMs2}---that might exist in $\cM$, but cannot be typed.
Subsequently, we are going to impose that two addresses $a,b\in\cD_\alpha$ are equal in the model whenever the corresponding EAMs exhibit the same applicative behavior:
at ground type ($\alpha=\tint$) this simply means that either $\Lookinv{a}$ and $\Lookinv{b}$ compute the same numeral, or they are both non-terminating.
This equality is then lifted at higher order types following the well-established tradition of logical relations~\cite{AmadioC98,Pitts97}.

The above intuitions are formalized in the following definition.

\begin{defi}\label{def:model}
\bsub
\item For all types $\alpha\in\Types$, define
	$
	\cD_\alpha = \set{ a\in\Addrs \st \Lookinv a : \alpha}.
	$
\item For all EAMs $\mM,\mN$ of type $\alpha$, define $\mM \equiv_\alpha \mN$ by structural induction on $\alpha$:
	\[
	\bar{lcl}
	\mM \equiv_\tint \mN &\iff& \forall n\in\nat\,.\,\big[\,\mM \reddh \mach{n} \iff \mN \reddh \mach{n}\,\big] \\[1ex]
	\mM \equiv_{\alpha\to\beta} \mN&\iff&\forall a,b\in\cD_\alpha\,.\,\big[\,\Lookinv a \equiv_\alpha \Lookinv b \imp\appT{\mM}{[a]} \equiv_\beta \appT{\mN}{[b]}\,\big]\\
	\ear
	\]
\item For $\alpha\in\Types$ and $a,b\in\cD_\alpha$, we write $a\simeq_\alpha b$ whenever $\Lookinv a\equiv_\alpha \Lookinv b$ holds. 
\item We let $\cD_\alpha/_{\simeq_\alpha} {=} \set{ [a]_{\simeq_\alpha}\st a \in\cD_\alpha}$.
\esub
\end{defi}
\noindent 
The model which is shown to be fully abstract is constructed as follows.
\begin{defi}\label{def:modelD} 
Define the model $\DD = \tuple{(\cD_\alpha/_{\simeq_\alpha})_{\alpha\in\Types},(\ \cdot^{\alpha,\beta})_{\alpha,\beta\in\Types},\int{-}}$ where
\[
	\bar{rcl}
	[a]_{\simeq_{\alpha\to\beta}}\cdot^{\alpha,\beta} [b]_{\simeq_\alpha}&=& [a\cdot b]_{\simeq_\beta}\\
	\int{x_1:\beta_1,\dots,x_n:\beta_n\vdash P : \alpha} &=& [\Lookup{\trans[\vec x]{P}}]_{\simeq_{\beta_1\to\cdots\to\beta_n\to\alpha}}
	\ear
\]
\end{defi}

The application $\cdot^{\alpha,\beta}$ is well defined by Proposition~\ref{prop:typing}\eqref{prop:typing1} and the interpretation function $\int{-}$ by Theorem~\ref{thm:transtyping}. 
Note that two \PCF{} programs $P,Q$ of type $\alpha$ have the same interpretation in the model $\DD$, i.e.\ $\int[\alpha]{P} = \int[\alpha]{Q}$, exactly when $\trans{P} \simeq_{\alpha}\trans{Q}$. 
Hence, we mainly work with translations of \PCF{} terms modulo $\simeq_{\alpha}$ and draw conclusions for $\DD$ at the end (Theorem~\ref{thm:FA}).

\begin{lem}\label{lem:equivprops}
\bsub
\item\label{lem:equivbinrel} 
	The relation $\equiv_\alpha$ is an equivalence.
\item\label{lem:equivconvh} 
	Let $\mM,\mN\in\cM$ and $\alpha\in\Types$ be such that $\mM:\alpha $ and $\mN:\alpha$. If $\mM \redh \mN$ then $\mM \equiv_\alpha \mN$.
\item\label{lem:equivtape}
	Assume $\mM : \alpha\to\beta,\mN_1 : \alpha$ and $\mN_2 : \alpha$. If $\mN_1 \redh \mN_2$ then $\appT{\mM}{[\Lookup{\mN_1}]} \equiv_\beta \appT{\mM}{[\Lookup{\mN_2}]}$.
\esub
\begin{proof}
\bsub
\item Reflexivity, symmetry, and transitivity are proven easily by induction on $\alpha$.
\item By induction on $\alpha$, using the reflexivity and transitivity properties from \eqref{lem:equivbinrel}.
\item By \eqref{lem:equivconvh} we get $\mN_1 \equiv_\beta \mN_2$, so this item follows from reflexivity and definition of $\equiv_{\beta\to\alpha}$.\qedhere
\esub
\end{proof}
\end{lem}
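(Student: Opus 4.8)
The plan is to read $\equiv_\alpha$ as a logical relation lifted along the type structure and to prove the fundamental (reflexivity) lemma first, since parts \eqref{lem:equivconvh} and \eqref{lem:equivtape} reduce to it. For part \eqref{lem:equivbinrel}, symmetry and transitivity go by induction on $\alpha$. At the ground type, $\mM\equiv_\tint\mN$ asserts that $\mM$ and $\mN$ reach the same numeral or that neither reaches one --- and by determinism of $\redh$ (Lemma~\ref{lem:redhproperties}\eqref{lem:redhproperties1}) each machine reaches at most one numeral --- so the relation is visibly symmetric and transitive. At an arrow type $\alpha\to\beta$ the steps are the usual ones: symmetry follows by swapping the two argument addresses using symmetry at $\alpha$ and then at $\beta$; transitivity follows by inserting a common address in the middle, which is where I must call on reflexivity at the domain type $\alpha$ together with transitivity at $\beta$.

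The real work, and the main obstacle, is reflexivity: $\mM\equiv_\alpha\mM$ for every typable $\mM:\alpha$. For $\alpha=\gamma_1\to\cdots\to\gamma_k\to\tint$ this unfolds to the statement that feeding $\mM$ two tapes $[a_1,\dots,a_k]$ and $[b_1,\dots,b_k]$ of pointwise $\equiv$-equivalent addresses (i.e.\ $\Lookinv{a_i}\equiv_{\gamma_i}\Lookinv{b_i}$) produces $\equiv_\tint$-equivalent machines, so the two runs reach the same numeral or both diverge. I would prove this by a lockstep simulation of the two executions, carrying the invariant that corresponding registers always hold $\equiv$-equivalent addresses at the types assigned by the typing derivation of $\mM$. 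Each instruction preserves the invariant: $\Load{}$ copies equivalent tape entries into matching registers; $\Apply{}{}{}$ forms $\App{a}{c}$ against $\App{b}{c'}$ out of register values with $\Lookinv a\equiv_{\gamma\to\delta}\Lookinv b$ and $\Lookinv c\equiv_\gamma\Lookinv{c'}$, and these are $\equiv_\delta$-equivalent \emph{by the very definition} of $\equiv$ at the arrow type; and $\Call{}$ transfers control in the same way on both sides. The only instructions that inspect a stored address are $\Pred{}{}$, $\Succ{}{}$, $\Ifz{}{}{}{}$, and they force the inspected register to evaluate to a numeral --- but equivalent addresses are $\equiv_\tint$-equivalent, so the two runs compute the same numeral there and branch identically. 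Two points make this delicate: registers may also hold \emph{fixed} addresses independent of the input (identical on both runs), so maintaining the invariant for them requires reflexivity at their own types, which is why I would run this argument by a mutual induction with symmetry and transitivity; and since EAMs may diverge, the simulation cannot be a plain induction on the length of a run. I would therefore phrase the ground-type conclusion as ``if one run reaches $\mach n$ in $m$ steps then so does the other'', proved by induction on $m$ (and symmetrically), which also accounts correctly for the recursive evaluation of sub-addresses to numerals.

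Granting \eqref{lem:equivbinrel}, part \eqref{lem:equivconvh} is an induction on $\alpha$ after using subject reduction (Proposition~\ref{prop:typing}\eqref{prop:typing2}) to secure $\mN:\alpha$. At $\tint$, determinism makes the first step of any terminating run of $\mM$ be exactly $\mM\redh\mN$, so $\mM\reddh\mach n\iff\mN\reddh\mach n$. At $\alpha\to\beta$, the step $\mM\redh\mN$ lifts to a single step $\appT{\mM}{[a]}\redh\appT{\mN}{[a]}$ (the appended tape entry is untouched by the first instruction of $\mM$), so the induction hypothesis gives $\appT{\mM}{[a]}\equiv_\beta\appT{\mN}{[a]}$; chaining this with $\appT{\mN}{[a]}\equiv_\beta\appT{\mN}{[b]}$ --- obtained from reflexivity of $\mN$ applied to $\Lookinv a\equiv_\alpha\Lookinv b$ --- via transitivity at $\beta$ yields $\appT{\mM}{[a]}\equiv_\beta\appT{\mN}{[b]}$, as required. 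Finally, part \eqref{lem:equivtape} is immediate: \eqref{lem:equivconvh} applied to $\mN_1\redh\mN_2$ gives $\mN_1\equiv_\alpha\mN_2$, i.e.\ $\Lookup{\mN_1}\simeq_\alpha\Lookup{\mN_2}$, and reflexivity of $\mM$ at type $\alpha\to\beta$ applied to these equivalent inputs delivers $\appT{\mM}{[\Lookup{\mN_1}]}\equiv_\beta\appT{\mM}{[\Lookup{\mN_2}]}$.
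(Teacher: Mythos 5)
Your symmetry/transitivity argument and your proofs of parts \eqref{lem:equivconvh} and \eqref{lem:equivtape} coincide with the paper's. The gap is in reflexivity, and it is concentrated in the one case you never mention: the fixed-point machine $\mY$. Recall (Definition~\ref{def:AM}\eqref{def:AM6} and Definition~\ref{def:bijectivelookup+Y}) that $\mY$ carries \emph{its own address} on its tape, so by Lemma~\ref{lem:existenceofeams}\eqref{lem:existenceofeams6} we have $\appT{\mY}{[c]}\reddh \appT{\Lookinv{c}}{[\Lookup{(\appT{\mY}{[c]})}]}$. Run your lockstep simulation on $\appT{\mY}{[c,\vec d\,]}$ versus $\appT{\mY}{[c',\vec d'\,]}$ with $\Lookinv{c}\equiv_{\alpha\to\alpha}\Lookinv{c'}$: already after the two $\ins{Load}$s the invariant asks for $\mY\equiv\mY$ (both registers hold $\Lookup{\mY}$), and after the $\ins{Call}$ the two runs continue in the \emph{different} machines $\Lookinv{c}$ and $\Lookinv{c'}$, so lockstep stops and the only move is to invoke $\Lookinv{c}\equiv_{\alpha\to\alpha}\Lookinv{c'}$ — whose definition demands that the supplied arguments be related, i.e.\ $\appT{\mY}{[c]}\equiv_{\alpha}\appT{\mY}{[c']}$. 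That is precisely an instance of the reflexivity statement you are proving, at the \emph{same} type and with no decrease in your step index: $\equiv_\alpha$ is not step-indexed, so establishing it quantifies over runs of unbounded length, whereas your induction hypothesis only covers runs of length $<m$ (indeed, by determinism the run witnessing the needed premise is the given run itself, of length exactly $m$). Indexing only the top-level ground statement therefore cannot break this circularity; you would have to step-index the relation itself and then relate it back to $\equiv_\alpha$, which is a substantively different construction. The same unclarity affects your side remark on fixed registers: ``mutual induction with symmetry and transitivity'' has no decreasing measure, since a register may hold a machine of a type larger than the result type; the paper instead inducts on the \emph{typing derivation} (register typings are subderivations via $(R_\Types)$), and even that fails for $\mY$ because $(\mathrm{fix})$ is an axiom typing a self-referential machine.

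The paper's way out is a syntactic continuity argument devoted entirely to $\mY$: it defines approximants $\mathsf{fix}_0=\trans{\Om}$ and $\mathsf{fix}_{n+1}=\tuple{\mY.\vec R,\,\mY.P,\,[\Lookup{\mathsf{fix}_n}]}$, so that the self-reference is cut and $\mathsf{fix}_{n+1}$ unfolds to $\mathsf{fix}_n$; it proves monotonicity and continuity of reduction with respect to an approximation order (Lemmas~\ref{lem:monotonicity} and~\ref{lem:continuity}); it deduces that $\appT{\mY}{[\dots]}\reddh\mach{n}$ holds iff $\appT{\mathsf{fix}_m}{[\dots]}\reddh\mach{n}$ for some finite $m$ (Lemma~\ref{lem:yapprox}); and it proves reflexivity of each $\mathsf{fix}_m$ by induction on $m$ (Lemma~\ref{lem:fixequivreflex}), finally transferring the result to $\mY$. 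Some device of this kind — finite approximants of $\mY$, or a genuinely step-indexed logical relation — is indispensable here; without it, your argument covers every instruction of an EAM except the one feature ($\mY$'s access to its own address) that makes the model capable of interpreting $\fix$, and the proof does not go through.
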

\noindent 
Everything is now in place to prove that the model is sound and adequate.

\begin{cor}[Soundness]\label{cor:sound}
For all $\PCF{}$ programs $P_1,P_2$ of type $\alpha$, we have
\[
	\trans{P_1}\convh\trans{P_2}\imp\trans{P_1}\equiv_\alpha\trans{P_2}
\]
\end{cor}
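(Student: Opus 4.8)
The plan is to reduce the interconvertibility hypothesis $\trans{P_1}\convh\trans{P_2}$ to the existence of a common reduct, and then to argue that $\equiv_\alpha$ is stable under $\reddh$ on machines of type $\alpha$. Since $P_1,P_2$ are closed \PCF{} programs of type $\alpha$, Theorem~\ref{thm:transtyping} (instantiated with the empty environment) gives $\trans{P_1} : \alpha$ and $\trans{P_2} : \alpha$. By Church--Rosser (Lemma~\ref{lem:redhproperties}\eqref{lem:redhproperties2}), $\trans{P_1}\convh\trans{P_2}$ yields a machine $\mach{Z}\in\cM$ with $\trans{P_1}\reddh\mach{Z}$ and $\trans{P_2}\reddh\mach{Z}$.

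The core of the argument is the auxiliary claim that, whenever $\mM:\alpha$ and $\mM\reddh\mN$, one has $\mM\equiv_\alpha\mN$. I would prove this by induction on the length of the reduction $\mM\reddh\mN$. The base case (length $0$) is reflexivity of $\equiv_\alpha$ from Lemma~\ref{lem:equivprops}\eqref{lem:equivbinrel}. For the inductive step, write $\mM\redh\mM'\reddh\mN$; by subject reduction for EAMs (Proposition~\ref{prop:typing}\eqref{prop:typing2}) the intermediate machine satisfies $\mM':\alpha$, so Lemma~\ref{lem:equivprops}\eqref{lem:equivconvh} applies to the single step to give $\mM\equiv_\alpha\mM'$, while the induction hypothesis gives $\mM'\equiv_\alpha\mN$. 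Transitivity of $\equiv_\alpha$ (Lemma~\ref{lem:equivprops}\eqref{lem:equivbinrel}) then yields $\mM\equiv_\alpha\mN$.

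With the claim in hand, I would apply it to both reductions towards the common reduct: from $\trans{P_1}:\alpha$ and $\trans{P_1}\reddh\mach{Z}$ we obtain $\trans{P_1}\equiv_\alpha\mach{Z}$, and likewise $\trans{P_2}\equiv_\alpha\mach{Z}$. Combining these via symmetry and transitivity of $\equiv_\alpha$ (Lemma~\ref{lem:equivprops}\eqref{lem:equivbinrel}) gives the desired $\trans{P_1}\equiv_\alpha\trans{P_2}$.

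The only point that requires care---and is really the crux of the whole argument---is that $\equiv_\alpha$ is only meaningful for machines carrying the type $\alpha$, and that Lemma~\ref{lem:equivprops}\eqref{lem:equivconvh} presupposes both endpoints of the single step to be typed with $\alpha$. Hence one cannot simply invoke it step by step without first ensuring that the type $\alpha$ is preserved all along the reduction sequence; this is exactly what subject reduction (Proposition~\ref{prop:typing}\eqref{prop:typing2}) guarantees, and propagating this typing invariant through the induction is the single substantive bookkeeping step. Everything else is a routine consequence of confluence and of the equivalence properties of $\equiv_\alpha$.
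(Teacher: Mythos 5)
Your proof is correct, and it follows the same underlying idea as the paper's: the paper's own proof of this corollary is a one-line appeal to ``the definition of $\convh$ and Lemma~\ref{lem:equivprops}\eqref{lem:equivconvh}'', i.e.\ to single-step preservation of $\equiv_\alpha$. What you add---routing through Church--Rosser (Lemma~\ref{lem:redhproperties}\eqref{lem:redhproperties2}) to get a common reduct $\mach{Z}$, and threading subject reduction (Proposition~\ref{prop:typing}\eqref{prop:typing2}) through the induction on the forward reduction length---is exactly what is needed to make that one-liner rigorous, and it is not mere bookkeeping. A naive unfolding of $\convh$ as a zigzag of $\redh$-steps would not work directly: Lemma~\ref{lem:equivprops}\eqref{lem:equivconvh} requires \emph{both} endpoints of a step to have type $\alpha$, and backward steps in a zigzag need not preserve typability, since subject \emph{expansion} fails for EAMs (an untypable machine can, e.g., load and discard the address of one of the untypable chain machines of Remark~\ref{rem:sillyEAMs}\eqref{rem:sillyEAMs2} and then reduce to a numeral). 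By collapsing the zigzag to two forward reductions into $\mach{Z}$, where typing is preserved step by step, your argument avoids this pitfall; so your proposal is a correct and slightly more careful rendering of the paper's intended argument.
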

\begin{proof} By definition of interconvertibility $\convh$ and Lemma~\ref{lem:equivprops}\eqref{lem:equivconvh}.
\end{proof}

\begin{thm}[Adequacy]\label{thm:adequacy} 
Given two \PCF{} programs $P_1,P_2$ of type $\alpha$, we have 
\[
	 \trans{P_1} \equiv_\alpha \trans{P_2} \imp P_1 \obseq P_2.
\]
\end{thm}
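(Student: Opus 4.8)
The plan is to reduce observational equivalence to applicative equivalence via Abramsky's equivalence (Proposition~\ref{prop:Abramsky}), so that it suffices to establish $P_1\appeq P_2$. Writing $\alpha = \alpha_1\to\cdots\to\alpha_k\to\tint$ (every simple type has this shape, since $\tint$ is the only ground type), I fix arbitrary closed arguments $Q_1:\alpha_1,\dots,Q_k:\alpha_k$ and aim to prove
\[
	P_1\, Q_1\cdots Q_k\redd[\PCF]\num n \iff P_2\, Q_1\cdots Q_k\redd[\PCF]\num n .
\]
Since each $P_i\, Q_1\cdots Q_k$ is a program of type $\tint$, the simulation result (Theorem~\ref{thm:simulation}) turns this into the equivalent statement
$\trans{P_1\, Q_1\cdots Q_k}\reddh\mach n \iff \trans{P_2\, Q_1\cdots Q_k}\reddh\mach n$.

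The key preparatory step is a sub-lemma, proved by induction on $k$, stating that for every closed term $P$,
\[
	\trans{P\, Q_1\cdots Q_k}\reddh \appT{\trans P}{[\Lookup{\trans{Q_1}},\dots,\Lookup{\trans{Q_k}}]}.
\]
Because $P$ and the $Q_j$ are closed, every application in the translation is built with empty free-variable list, so $\mAppn{0}{2}=\mProj{1}{1}=\mach{I}$. Unfolding the definition of $\trans{-}$ on the outermost application and computing with the identity machine $\mProj{1}{1}$ (as in Lemma~\ref{lem:existenceofeams}\eqref{lem:existenceofeams1}) gives $\trans{(P\, Q_1\cdots Q_{k-1})\cdot Q_k}\reddh \appT{\trans{P\, Q_1\cdots Q_{k-1}}}{[\Lookup{\trans{Q_k}}]}$; the induction hypothesis together with the tape-monotonicity of reduction (Lemma~\ref{lem:redhproperties}\eqref{lem:redhproperties3}) then yields the claim.

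Next I instantiate the hypothesis $\trans{P_1}\equiv_\alpha\trans{P_2}$. By Theorem~\ref{thm:transtyping} we have $\trans{Q_j}:\alpha_j$, so $d_j := \Lookup{\trans{Q_j}}\in\cD_{\alpha_j}$, and reflexivity of $\equiv_{\alpha_j}$ (Lemma~\ref{lem:equivprops}\eqref{lem:equivbinrel}) gives $\Lookinv{d_j}\equiv_{\alpha_j}\Lookinv{d_j}$. Unfolding the definition of $\equiv_{\alpha_1\to\cdots\to\alpha_k\to\tint}$ (Definition~\ref{def:model}) $k$ times along these reflexive arguments produces
\[
	\appT{\trans{P_1}}{[d_1,\dots,d_k]}\ \equiv_\tint\ \appT{\trans{P_2}}{[d_1,\dots,d_k]},
\]
where both machines are typable with $\tint$ by iterating Proposition~\ref{prop:typing}\eqref{prop:typing1}; by definition of $\equiv_\tint$ this means $\appT{\trans{P_1}}{[d_1,\dots,d_k]}\reddh\mach n \iff \appT{\trans{P_2}}{[d_1,\dots,d_k]}\reddh\mach n$ for every $n$.

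Finally I chain everything together. Since $\reddh$ is deterministic and confluent (Lemma~\ref{lem:redhproperties}) and $\mach n$ is in final state, the reduction of the sub-lemma gives $\trans{P_i\, Q_1\cdots Q_k}\reddh\mach n \iff \appT{\trans{P_i}}{[d_1,\dots,d_k]}\reddh\mach n$ for $i\in\{1,2\}$; combining this with the $\equiv_\tint$ equivalence above and Theorem~\ref{thm:simulation} closes the chain and yields $P_1\appeq P_2$, hence $P_1\obseq P_2$. I expect the main obstacle to be the sub-lemma: one must verify carefully that the auxiliary $\mAppn{}{}$ machinery introduced by the translation of nested applications dissolves, under $\reddh$, into the iterated machine application $\appT{\trans P}{[d_1,\dots,d_k]}$, which is exactly where Lemma~\ref{lem:existenceofeams} and the monotonicity Lemma~\ref{lem:redhproperties}\eqref{lem:redhproperties3} do the real work.
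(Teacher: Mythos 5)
Your proof is correct, and every step you rely on is available in the paper, but your decomposition is genuinely different from the paper's own proof of Theorem~\ref{thm:adequacy}. The paper also starts from Proposition~\ref{prop:Abramsky}, but then proceeds by induction on the type $\alpha$: the base case $\tint$ is exactly Theorem~\ref{thm:simulation}, and in the arrow case $\alpha=\beta_1\to\beta_2$ it takes a single arbitrary closed argument $Q_1:\beta_1$, shows $\trans{P_1\cdot Q_1}\equiv_{\beta_2}\trans{P_2\cdot Q_1}$ (via the reduction $\trans{P_i\cdot Q_1}\reddh\appT{\trans{P_i}}{[\Lookup{\trans{Q_1}}]}$, reflexivity of $\equiv_{\beta_1}$, and the definition of $\equiv_{\beta_1\to\beta_2}$), and then invokes the induction hypothesis to obtain $P_1 Q_1\appeq P_2 Q_1$. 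You avoid the induction on $\alpha$ altogether: you fix all $k$ arguments at once, unfold the logical relation $k$ times down to $\equiv_\tint$, and absorb the translation overhead into your sub-lemma $\trans{P\,Q_1\cdots Q_k}\reddh\appT{\trans{P}}{[\Lookup{\trans{Q_1}},\dots,\Lookup{\trans{Q_k}}]}$, proved by induction on $k$ from Lemma~\ref{lem:existenceofeams}\eqref{lem:existenceofeams1} and Lemma~\ref{lem:redhproperties}\eqref{lem:redhproperties3}. That sub-lemma is a strengthened form (genuine $\reddh$-reduction instead of $\equiv_\beta$) of the paper's Lemma~\ref{lem:applicative}, which the paper states only \emph{after} this theorem and uses for completeness, not soundness. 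In effect, your argument is the mirror image of the paper's proof of Theorem~\ref{thm:completeness}, which likewise fixes all arguments, applies Lemma~\ref{lem:applicative}, reflexivity, and Theorem~\ref{thm:simulation}, and chains equivalences at type $\tint$. The trade-off: the paper's type induction keeps each step to a one-argument computation and delegates the rest to the induction hypothesis, while your flat argument isolates a reusable applicative lemma and makes the soundness and completeness proofs visibly symmetric; both routes rest on the same hard ingredients, namely Proposition~\ref{prop:Abramsky}, reflexivity of $\equiv_\alpha$ (Lemma~\ref{lem:equivprops}\eqref{lem:equivbinrel}), and Theorem~\ref{thm:simulation}.
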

\begin{proof}
By Proposition~\ref{prop:Abramsky} it is sufficient to show $P_1\appeq P_2$. Proceed by induction on $\alpha$.

Base case $\alpha = \tint$. For $i=1,2$, we have $P_i\redd[\PCF] \num n\iff\trans{P_i}\reddh\mach n$, by Theorem~\ref{thm:simulation}.
Then, this case follows from the assumption $\trans{P_1} \equiv_\tint \trans{P_2}$, that is $\trans{P_1}\reddh\mach n\iff\trans{P_2}\reddh\mach n$.

Case $\alpha = \beta_1\to\beta_2$. Take any \PCF{} program $\vdash Q_1 : \beta_1$.
By Definitions~\ref{def:trans} and~\ref{def:fixedmachines}, we get $\trans{P_1Q_1} = \appT{\mProj{1}{1}}{[\Lookup{\mProj{1}{1}},\Lookup{\trans{P_1}},\Lookup{\trans{Q_1}}]}\reddh \appT{\trans{P_1}}{[\Lookup{\trans{Q_1}}]}$
whence $\trans{P_1Q_1}\equiv_{\beta_2}\appT{\trans{P_1}}{[\Lookup{\trans{Q_1}}]}$.
Similarly, $\trans{P_2Q_1}\equiv_{\beta_2}\appT{\trans{P_2}}{[\Lookup{\trans{Q_1}}]}$.
From $\trans{P_1} \equiv_{\beta_1\to\beta_2} \trans{P_2}$ and $\trans{Q_1} \equiv_{\beta_1} \trans{Q_1}$ (reflexivity), we get $\appT{\trans{P_1}}{[\Lookup{\trans{Q_1}}]} \equiv_{\beta_2} \appT{\trans{P_2}}{[\Lookup{\trans{Q_1}}]}$. By transitivity of $\equiv_{\beta_2}$, we get $\trans{P_1Q_1}\equiv_{\beta_2}\trans{P_2Q_1}$ and by IH $P_1Q_1 \appeq P_2Q_1$. 
Thus, for all $Q_2,\dots,Q_k$ of the appropriate types, we get $P_1 Q_1\cdots Q_n\redd[\PCF] \num n {\iff} P_2 Q_1\cdots Q_n\redd[\PCF] \num n$. 
As $Q_1$ is arbitrary, conclude $P_1 \appeq P_2$.
\end{proof}

The following technical lemmas turn out to be useful in the remainder of the article.

\begin{lem}\label{lem:applicative} 
For \emph{($\E$)}\PCF{} programs $M,N_1,\dots,N_n$ such that $\vdash M:\alpha_1\to\cdots\to\alpha_n\to\beta$ and $\vdash N_i:\alpha_i$ for all $i\,(1\le i\le n)$, we have 
\[
	\appT{\trans{M}}{[\Lookup{\trans{N_1}},\dots,\Lookup{\trans{N_n}}]} \equiv_\beta \trans{M\cdot N_1\cdots N_n}. 
\]
\end{lem}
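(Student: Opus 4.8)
The plan is to argue by induction on $n$, after first isolating the single-application case as a reusable computational fact. The crucial observation is that, since $M$ and all the $N_i$ are closed programs, every application occurring in $M\cdot N_1\cdots N_n$ is translated through $\mAppn{0}{2} = \mProj{1}{1}$. Concretely, for any $(\E)$\PCF{} programs $P:\gamma\to\delta$ and $Q:\gamma$ one computes directly from Definition~\ref{def:trans} and Lemma~\ref{lem:existenceofeams}\eqref{lem:existenceofeams1}, exactly as in the proof of Theorem~\ref{thm:adequacy}, that
\[
	\trans{P\cdot Q} = \appT{\mProj{1}{1}}{[\Lookup{\mProj{1}{1}},\Lookup{\trans{P}},\Lookup{\trans{Q}}]} \reddh \appT{\trans{P}}{[\Lookup{\trans{Q}}]}.
\]
Both machines have type $\delta$ by Theorem~\ref{thm:transtyping} and Proposition~\ref{prop:typing}\eqref{prop:typing1}, and each intermediate reduct stays typable by subject reduction (Proposition~\ref{prop:typing}\eqref{prop:typing2}); hence, applying Lemma~\ref{lem:equivprops}\eqref{lem:equivconvh} to every step and chaining the results with transitivity of $\equiv_\delta$ (Lemma~\ref{lem:equivprops}\eqref{lem:equivbinrel}), I obtain the \emph{single-application identity} $\trans{P\cdot Q}\equiv_\delta \appT{\trans{P}}{[\Lookup{\trans{Q}}]}$.

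With this in hand I would induct on $n$. The base case $n=0$ is immediate, since $\appT{\trans{M}}{[]} = \trans{M}$ and $\trans{M}\equiv_\beta\trans{M}$ holds by reflexivity of $\equiv_\beta$. For the inductive step, suppose the statement holds for $n$ and consider $M:\alpha_1\to\cdots\to\alpha_{n+1}\to\beta$. Writing $M\cdot N_1\cdots N_{n+1} = (M\cdot N_1\cdots N_n)\cdot N_{n+1}$, the induction hypothesis applied to the program $M\cdot N_1\cdots N_n$ of type $\alpha_{n+1}\to\beta$ gives
\[
	\appT{\trans{M}}{[\Lookup{\trans{N_1}},\dots,\Lookup{\trans{N_n}}]} \equiv_{\alpha_{n+1}\to\beta} \trans{M\cdot N_1\cdots N_n}.
\]
Since $\trans{N_{n+1}}:\alpha_{n+1}$ (Theorem~\ref{thm:transtyping}), so that $\Lookup{\trans{N_{n+1}}}\in\cD_{\alpha_{n+1}}$, and $\trans{N_{n+1}}\equiv_{\alpha_{n+1}}\trans{N_{n+1}}$ by reflexivity, unfolding the definition of $\equiv_{\alpha_{n+1}\to\beta}$ with this single argument on both sides yields
\[
	\appT{\trans{M}}{[\Lookup{\trans{N_1}},\dots,\Lookup{\trans{N_{n+1}}}]} \equiv_\beta \appT{\trans{M\cdot N_1\cdots N_n}}{[\Lookup{\trans{N_{n+1}}}]},
\]
where the left-hand side is recognised using the fact that appending tapes composes, i.e.\ $\appT{(\appT{\mM}{T})}{T'} = \appT{\mM}{(\appT{T}{T'})}$. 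Finally, the single-application identity with $P = M\cdot N_1\cdots N_n$ and $Q = N_{n+1}$ gives $\appT{\trans{M\cdot N_1\cdots N_n}}{[\Lookup{\trans{N_{n+1}}}]} \equiv_\beta \trans{M\cdot N_1\cdots N_{n+1}}$, and transitivity of $\equiv_\beta$ closes the induction.

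I expect the only delicate points to be bookkeeping rather than conceptual. First, Lemma~\ref{lem:equivprops}\eqref{lem:equivconvh} is stated for a single $\redh$-step, so the single-application fact needs the multistep upgrade via transitivity of $\equiv_\delta$, which in turn forces checking that every intermediate machine remains typable---this is where subject reduction (Proposition~\ref{prop:typing}\eqref{prop:typing2}) is invoked. Second, one must ensure that the argument fed to the two sides in the arrow-type unfolding is \emph{literally} the same address $\Lookup{\trans{N_{n+1}}}$, which is legitimate precisely because $\equiv_{\alpha_{n+1}}$ is reflexive, and that the tape-concatenation identity is used to identify $\appT{(\appT{\trans{M}}{[\dots]})}{[\Lookup{\trans{N_{n+1}}}]}$ with the full application $\appT{\trans{M}}{[\Lookup{\trans{N_1}},\dots,\Lookup{\trans{N_{n+1}}}]}$. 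No step poses a real obstacle once the single-application reduction has been established.
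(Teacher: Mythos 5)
Your proof is correct and takes essentially the same route as the paper's: induction on $n$, peeling off the last argument via the reduction $\trans{P\cdot Q}\reddh \appT{\trans{P}}{[\Lookup{\trans{Q}}]}$ together with Lemma~\ref{lem:equivprops}\eqref{lem:equivconvh}, then unfolding $\equiv$ at arrow type with reflexivity and closing by transitivity (Lemma~\ref{lem:equivprops}\eqref{lem:equivbinrel}). The bookkeeping you flag---the multistep upgrade of Lemma~\ref{lem:equivprops}\eqref{lem:equivconvh} via subject reduction, and the tape-concatenation identity---is exactly what the paper's terser argument leaves implicit.
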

\begin{proof} By induction on $n$, the case $n=0$ being trivial.

Case $n>0$. By Definitions~\ref{def:trans} and~\ref{def:fixedmachines}, we get 
\[
	\trans{MN_1\cdots N_{n}} = \appT{\mProj{1}{1}}{[\Lookup{\mProj{1}{1}},\Lookup{\trans{MN_1\cdots N_{n-1}}},\Lookup{\trans{N_n}}]}\reddh \appT{\trans{MN_1\cdots N_{n-1}}}{[\Lookup{\trans{N_n}}]},
\]
whence $\trans{MN_1\cdots N_{n}}\equiv_{\beta}\appT{\trans{MN_1\cdots N_{n-1}}}{[\Lookup{\trans{N_n}}]}$ by Lemma~\ref{lem:equivprops}\eqref{lem:equivconvh}. 
By IH, we have $\appT{\trans{M}}{[\Lookup{\trans{N_1}},\dots,\Lookup{\trans{N_{n-1}}}]} \equiv_\beta \trans{MN_1\cdots N_{n-1}}$, so we conclude by Lemma~\ref{lem:equivprops}\eqref{lem:equivbinrel}.
\end{proof}

\begin{lem}[Strengthening for EAMs]\label{lem:eamstr}~\\
Let $M\in\Lam^\E$ be such that  $x_1:\alpha_1,\dots,x_n : \alpha_n \vdash M : \beta$. 
Assume that $x_i \notin\FV{M}$ for some $i\,(1\le i \le n)$. 
Then for all $a_1 \in \cD_{\alpha_1}, \dots, a_n \in \cD_{\alpha_n}$:
\[
	\appT{\trans[x_1,\dots,x_n]{M}}{[a_1,\dots,a_n]} \equiv_{\beta} 
	\appT{\trans[x_1,\dots,x_{i-1},x_{i+1},\dots,x_n]{M}}{[a_1,\dots,a_{i-1},a_{i+1},\dots,a_n]}
\]
\end{lem}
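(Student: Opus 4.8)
The plan is to proceed by structural induction on $M$, relating the translation in the context $\vec x = x_1,\dots,x_n$ to the one in the shortened context $\vec x' = x_1,\dots,x_{i-1},x_{i+1},\dots,x_n$, and writing $\vec a' = a_1,\dots,a_{i-1},a_{i+1},\dots,a_n$. Throughout I use that syntax-directedness of the \EPCF{} type system (Lemma~\ref{lem:epcftyping}\eqref{lem:epcftyping1}) lets me read off typings of the immediate subterms, and that by Strengthening (Lemma~\ref{lem:epcftyping}\eqref{lem:epcftyping2}) dropping $x_i$ preserves typability since $x_i\notin\FV{M}$; both sides are then typable at $\beta$ by Theorem~\ref{thm:transtyping} and Proposition~\ref{prop:typing}\eqref{prop:typing1}, so $\equiv_\beta$ is meaningful. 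The two base cases are immediate. For $M = x_j$ with $j\neq i$, both $\appT{\trans[\vec x]{x_j}}{[a_1,\dots,a_n]} = \appT{\mProj{n}{j}}{[a_1,\dots,a_n]}$ and $\appT{\trans[\vec x']{x_j}}{[\vec a']} = \appT{\mProj{n-1}{j'}}{[\vec a']}$ (where $j' = j$ if $j<i$ and $j'=j-1$ if $j>i$) reduce by Lemma~\ref{lem:existenceofeams}\eqref{lem:existenceofeams1} to $\Lookinv{a_j}$, so they are $\equiv_\beta$ by reflexivity (Lemma~\ref{lem:equivprops}\eqref{lem:equivbinrel}) together with reduction-invariance (Lemma~\ref{lem:equivprops}\eqref{lem:equivconvh}); the case $M=\mathbf{0}$ is analogous, both sides reducing to $\mach{0}$.

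For the inductive cases the argument follows a uniform shape. Feeding the tape $[a_1,\dots,a_n]$ and applying Lemma~\ref{lem:existenceofeams}\eqref{lem:existenceofeams2} peels off the auxiliary machine and leaves an application of fixed machines to the addresses of the subterm translations: e.g.\ for $M = M_1\cdot M_2$ one reaches $\appT{(\appT{\trans[\vec x]{M_1}}{[\vec a]})}{[\Lookup{(\appT{\trans[\vec x]{M_2}}{[\vec a]})}]}$, and for $M=\pred M'$ (resp.\ $\succ M'$, $\ifterm{L}{M'}{N'}$, $\fix M'$) one reaches $\appT{\mPred}{[\Lookup{(\appT{\trans[\vec x]{M'}}{[\vec a]})}]}$ (resp.\ the analogues with $\mSucc$, $\mIfz$, $\mY$). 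The induction hypothesis applies to each immediate subterm (which still omits $x_i$), giving $\appT{\trans[\vec x]{M_j}}{[\vec a]}\equiv \appT{\trans[\vec x']{M_j}}{[\vec a']}$, i.e.\ $\simeq$-related addresses. I then close each case via the definition of $\equiv_{\alpha\to\beta}$: in the application case directly, and in the arithmetic and fixed-point cases by first invoking reflexivity of $\equiv$ for the closed, typable auxiliary machines $\mPred,\mSucc,\mIfz,\mY$ (Lemma~\ref{lem:welltypedeams}) and feeding them the $\simeq$-related arguments. Reduction-invariance and transitivity of $\equiv$ transport the resulting equivalence back to $\appT{\trans[\vec x']{M}}{[\vec a']}$. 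The explicit-substitution case $M = L\esubst{y}{N}$ is the most direct: since $N$ is closed and $x_i\neq y$, applying the induction hypothesis to $L$ in the extended context $y,\vec x$ (where $x_i$ now sits at position $i+1$) with the tape prefixed by $\Lookup{\trans{N}}$ yields exactly the claim.

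The delicate case, and the one I expect to be the main obstacle, is the abstraction $M = \lambda y.M'$ of type $\beta = \gamma\to\delta$, where $\trans[\vec x]{\lambda y.M'} = \trans[\vec x,y]{M'}$. Here the goal $\appT{\trans[\vec x,y]{M'}}{[\vec a]}\equiv_{\gamma\to\delta}\appT{\trans[\vec x',y]{M'}}{[\vec a']}$ requires, by the definition of $\equiv_{\gamma\to\delta}$, handling \emph{arbitrary} related arguments $c\simeq_\gamma c'$, whereas the induction hypothesis on $M'$ only supplies the equivalence for the \emph{same} added argument. I bridge this in two steps: the induction hypothesis gives $\appT{\trans[\vec x,y]{M'}}{[\vec a,c]}\equiv_\delta\appT{\trans[\vec x',y]{M'}}{[\vec a',c]}$, and then reflexivity of $\equiv_{\gamma\to\delta}$ for the well-typed machine $\appT{\trans[\vec x',y]{M'}}{[\vec a']}$ applied to $c\simeq_\gamma c'$ gives $\appT{\trans[\vec x',y]{M'}}{[\vec a',c]}\equiv_\delta\appT{\trans[\vec x',y]{M'}}{[\vec a',c']}$, so transitivity of $\equiv_\delta$ concludes. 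A last point of care is the $\fix$ case when $n=1$: removing the only variable lands on the $n=0$ clause of Definition~\ref{def:trans}, so the two sides invoke different clauses; but Lemma~\ref{lem:existenceofeams}\eqref{lem:existenceofeams2} shows both reduce to $\mY$ applied to the address of the (inductively related) translation of $M'$, so the argument goes through unchanged.
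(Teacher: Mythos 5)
Your proposal is correct and follows essentially the same route as the paper's proof: induction on the term (equivalently, on the syntax-directed typing derivation), peeling off the auxiliary machines with Lemma~\ref{lem:existenceofeams}, then combining the induction hypothesis with reflexivity, transitivity and reduction-invariance of $\equiv$ (Lemma~\ref{lem:equivprops}) to transport the equivalence back through Definition~\ref{def:trans}. The one divergence is in your favor: where the paper dismisses the abstraction case as following ``straightforwardly from the IH'', your bridging argument --- IH for the \emph{same} added argument $c$, then reflexivity of $\equiv_{\gamma\to\delta}$ for the shorter-context machine to pass from $c$ to a related $c'$, then transitivity of $\equiv_\delta$ --- supplies exactly the step that unfolding the logical-relation definition actually requires there.
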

 
\begin{proof}[Proof (Appendix~\ref{app:model})] By induction on a derivation of $x_1:\alpha_1,\dots,x_n : \alpha_n \vdash M : \beta$.
\end{proof}

\begin{cor}\label{cor:closedargs} Let $M \in \Prog{\E}$ and $\alpha_1,\dots,\alpha_n,\beta\in\Types$.
If $~\vdash M : \beta$ then, for all $x_1,\dots,x_n\in\Var$ and $a_i\in\cD_{\alpha_i}$ $(1\le i \le n)$, we have \[\appT{\trans[x_1,\dots,x_n]{M}}{[a_1,\dots,a_n]\equiv_\beta \trans{M}}.\]
\end{cor}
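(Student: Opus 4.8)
The plan is to derive this as an immediate consequence of the Strengthening Lemma for EAMs (Lemma~\ref{lem:eamstr}), by a straightforward induction on $n$. The key observation is that, since $M\in\Prog{\E}$ is closed, we have $\FV{M}=\emptyset$, so \emph{every} one of $x_1,\dots,x_n$ is absent from $M$ and can therefore be stripped away one at a time, along with its corresponding address.

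First I would dispatch the base case $n=0$: here $\trans[x_1,\dots,x_n]{M}$ is just $\trans{M}$ and the tape is empty, so $\appT{\trans{M}}{[]}=\trans{M}$ and the claim holds by reflexivity of $\equiv_\beta$ (Lemma~\ref{lem:equivprops}\eqref{lem:equivbinrel}). For the inductive step $n>0$, I would first observe that the typing premise required by Lemma~\ref{lem:eamstr}, namely $x_1:\alpha_1,\dots,x_n:\alpha_n\vdash M:\beta$, follows from the given $\vdash M:\beta$ by repeated weakening (the right-to-left direction of typing Strengthening, Lemma~\ref{lem:epcftyping}\eqref{lem:epcftyping2}), which applies precisely because each $x_i\notin\FV{M}$. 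Invoking Lemma~\ref{lem:eamstr} with $i=n$ then yields
\[
	\appT{\trans[x_1,\dots,x_n]{M}}{[a_1,\dots,a_n]} \equiv_\beta \appT{\trans[x_1,\dots,x_{n-1}]{M}}{[a_1,\dots,a_{n-1}]}.
\]
The induction hypothesis (applied to the closed program $M$ with the $n-1$ variables $x_1,\dots,x_{n-1}$ of types $\alpha_1,\dots,\alpha_{n-1}$ and addresses $a_1,\dots,a_{n-1}$) gives $\appT{\trans[x_1,\dots,x_{n-1}]{M}}{[a_1,\dots,a_{n-1}]}\equiv_\beta\trans{M}$, and transitivity of $\equiv_\beta$ (again Lemma~\ref{lem:equivprops}\eqref{lem:equivbinrel}) closes the case.

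There is no real obstacle here: the corollary is essentially a packaged iteration of Lemma~\ref{lem:eamstr}, and all the genuine content sits in that lemma (whose proof is deferred to the appendix). The only points needing a little care are that the typing premise of Lemma~\ref{lem:eamstr} must be re-established at each step --- which is exactly where closedness of $M$ together with typing weakening is used --- and that $\equiv_\beta$ is known to be an equivalence relation, so that transitivity is at our disposal.
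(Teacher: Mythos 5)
Your proof is correct and matches the paper's intent exactly: the paper states this as an immediate corollary of Lemma~\ref{lem:eamstr} (with no separate proof given), and your argument---iterating that lemma to strip the unused variables one at a time, re-establishing the typing premise via the weakening direction of Lemma~\ref{lem:epcftyping}\eqref{lem:epcftyping2}, and closing with reflexivity and transitivity of $\equiv_\beta$---is precisely the intended packaging. Note only that reflexivity of $\equiv_\beta$ applies because $\trans{M} : \beta$ holds by Theorem~\ref{thm:transtyping}, a point worth making explicit but not a gap.
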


\subsection{Definability and full abstraction}

The adequacy result established above gives one implication of the Full Abstraction property.
In order to prove the converse implication, namely completeness, we need to show that the model does not contain any undefinable element (\emph{junk}). 
This amounts to associate with any typable EAM, a \PCF{} program of the same type exhibiting the same observational behavior (\emph{mutatis mutandi}). 
The problem is that an EAM might perform useless computations (e.g., updating the value of a register that is subsequently never used) that could however prevent the machine from terminating.
To simulate the same behavior in the corresponding \PCF{} term, we use a `convergency' test $\ifc{x}{y}$ (\emph{if-converges-then}) defined by: $\ifc{x}{y} = \ifterm{x}{y}{y}$. Indeed, given \PCF{} programs $P$ and $Q$, the program $\ifc{P}{Q}$ is observationally indistinguishable from $Q$ exactly when $P$ is terminating, independently from its result.

Recall that an EAM typing context $\Delta = i_1:\beta_{i_1},\dots,i_k:\beta_{i_k}$ is a list of associations between indices of registers and types.
Moreover, the judgments $\mM : \alpha$, $\Delta\Vdash^r (P,T) : \alpha$ and $\vec R\models\Delta$ have been defined in Figure~\ref{fig:eamstyping}.

\begin{defi}[Reverse Translation] \label{def:revtrans}
Let $\mM\in\cM$, $P$ be an EAM program, $T\in\Tapes$, $\alpha\in\Types.$
Given $\mM : \alpha$ (resp.\ $\Delta\Vdash^r (P,T) : \alpha$), we associate a \PCF{} term $\rtrans[]{\mM}{\alpha}$ (resp.\ $\rtrans{P,T}{\alpha}$) defined by induction on the type-derivation as follows:
\[
\bar{l}
\rtrans[]{\mach n}{\tint} = \num n;\\[3pt]
\rtrans[]{\mY{}}{(\alpha\to\alpha)\to\alpha} = \lam x.\fix x;\\[3pt]
\rtrans[]{\tuple{ \vec R, P, T }}{\alpha} =(\lambda x_{i_1}\dots x_{i_k}.\rtrans[i_1:\beta_{i_1},\dots,i_k:\beta_{i_k}]{P,T}{\alpha})\cdot\rtrans[]{\Lookinv{\val R_{i_1}}}{\beta_{i_1}}\cdots\rtrans[]{\Lookinv{\val R_{i_k}}}{\beta_{i_k}},\\[3pt]
\hfill\textrm{where $\vec R \models i_1:\beta_{i_1},\dots,i_k:\beta_{i_k}$, for $1\le k \le r$;}\\[3pt]
\rtrans{\Load i;P,[]}{\beta\to\alpha} = \lambda x_i. \rtrans[{\Delta[i:\beta]}]{ P,[]}{\alpha};\\[3pt]
\rtrans{\Load i;P,a::T}{\alpha} = \big(\lambda x_i.\rtrans[\Delta{[i:\beta]}]{ P,T}{\alpha}\big)\cdot\rtrans[]{\Lookinv{a}}{\beta};\\[3pt]
\rtrans[\Delta, i:\tint]{\Pred ij;P,T}{\alpha} = \ifc{x_i}{(\lambda x_j.
\rtrans[{(\Delta,i:\tint)[j:\tint]}]{ P,T}{\alpha})\cdot (\pred x_i)};\\[3pt]
\rtrans[\Delta, i:\tint]{\Succ ij;P,T}{\alpha} = \ifc{x_i}{(\lambda x_j.\rtrans[{(\Delta,i:\tint)[j:\tint]}]{ P,T}{\alpha})\cdot (\succ x_i)};  \\[3pt]
\rtrans[\Delta, i:\tint,j:\beta,k:\beta]{\Ifz ijkl;P,T}{\alpha} = \ifc{x_i}{(\lambda x_l.\rtrans[{(\Delta,i:\tint,j:\beta,k:\beta)[l:\beta]}]{ P,T}{\alpha})\cdot \ifterm {x_i}{x_j}{x_k}};\\[3pt]
\rtrans[\Delta, i:\beta\to\alpha,j:\beta]{\Apply i j k ;P,T}{\gamma} = (\lambda x_k.\rtrans[{(\Delta,i:\beta\to\alpha,j:\beta)[k:\alpha]}]{ P,T}{\gamma})\cdot(x_i \cdot x_j);\\[3pt]
\rtrans[\Delta, i:\alpha_1\to\cdots\to\alpha_n\to\alpha]{\Call i,[a_1,\dots,a_n]}{\alpha} = x_i\cdot \rtrans []{\Lookinv{a_1}}{\alpha_1} \cdots \rtrans []{\Lookinv{a_n}}{\alpha_n}.
\ear
\]
\end{defi}

It is easy to check that the \PCF{} term associated with $\mM:\alpha$ is actually a \PCF{} program.

\begin{exas} Consider the EAMs introduced in Example~\ref{ex:extabsmach}. The reverse translation applied to said machines produces the following \PCF{} programs: 
\bsub
\item $\rtrans[]{\mach{I}}{\alpha\to\alpha} =\rtrans[]{\Load 0; \Call 0, []}{\alpha\to\alpha} = \lambda x_0.\rtrans[0:\alpha]{\Call 0, []}{\alpha} = \lambda x_0.x_0$.
\item $\!\!\!\bar[t]{l}\rtrans[]{\mach{\appT{Y}{[\Lookup {\mach I}]}}}{\alpha} = \rtrans[]{\Load 0;\Load 1; \Apply 0 1 0; \Apply 1 0 1; \Call 1; [\Lookup{\mY}, \Lookup{\mach I}]}{\alpha}\\= (\lam x_0.(\lam x_1. \rtrans[]{\Apply 0 1 0; \Apply 1 0 1; \Call 1; []}{\alpha})\cdot \rtrans[]{\mach I}{\alpha\to\alpha})\cdot \rtrans[]{\mY}{(\alpha\to\alpha)\to\alpha}\\
=(\lam x_0.(\lam x_1. (\lambda x'_0.(\lambda x'_1.x'_1) \cdot (x_1 \cdot x'_0))\cdot (x_0 \cdot x_1))\cdot (\lambda y_0.y_0))\cdot (\lam x. \fix x).\ear$
\item $\rtrans[]{\mach{Succ1}}{\tint\to\tint} = \lambda x_0.\rtrans[0:\tint]{\Succ 0 0; \Call 0, []}{\tint} = \lambda x_0.\ifc{x_0}{(\lambda x'_0.x'_0)\cdot (\succ x_0)}$.
\item $\rtrans[]{\mach{Succ2}}{\tint\to\tint} = \big(\lambda x_0.(\lambda x_1.(\lambda x'_1.(\lambda x''_1.x''_1) \cdot (x_0 \cdot x'_1)) \cdot (x_0 \cdot x_1))\big) \cdot \rtrans[]{\mach{Succ1}}{\tint\to\tint}$.
\esub
Note that, for all \PCF{} programs $P$ of type $\tint$, the program $\rtrans[]{\mach{Succ1}}{\tint\to\tint}\cdot P$ is well typed and converges to a natural number exactly when $P$ does.
\end{exas}

\begin{prop}\label{prop:revtranstyping}
\bsub\item
Let $\mM \in \cM$ and $\alpha\in\Types$. If $\mM:\alpha$ then $\vdash \rtrans[]{\mM}{\alpha} : \alpha$. 
\item Let $P$ be an EAM program, $T \in \mathcal{T}_\Addrs$, $\Delta = {i_1}:\alpha_{i_1},\dots,{i_k}:\alpha_{i_k}$ be a type environment and $\alpha\in\Types.$ Then, 
\[
	\Delta\Vdash^r (P,T) : \alpha\imp x_{i_1}:\alpha_{i_1},\dots,x_{i_k}:\alpha_{i_k}\vdash \rtrans[{i_1}:\alpha_{i_1},\dots,{i_k}:\alpha_{i_k}]{P,T}{\alpha} : \alpha
\]
\esub
\end{prop}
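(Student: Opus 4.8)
The plan is to establish both items simultaneously by induction on the EAM typing derivation, exploiting the fact that $\rtrans{-}{-}$ and the judgments $\mM:\alpha$, $\Delta\Vdash^r(P,T):\alpha$ are generated by one and the same mutual induction (Figure~\ref{fig:eamstyping} and Definition~\ref{def:revtrans}). Item~(i) covers the rules with conclusion $\mM:\alpha$, i.e.\ $(\mathrm{nat})$, $(\mathrm{fix})$, $(\vec R)$; item~(ii) covers the rules with conclusion $\Delta\Vdash^r(P,T):\alpha$. In each case I build a \PCF{} derivation whose shape mirrors the defining clause of $\rtrans{-}{-}$, using the rules of Figure~\ref{fig:typing}. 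Since the whole EAM typing derivation is a single finite tree, every recursive appeal (e.g.\ to a register content $\Lookinv{\val{R_{i_j}}}:\beta_{i_j}$ or to a callee $\mM_j:\alpha_j$) lands on a proper subderivation, so the induction is well founded.

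For the base cases of item~(i): under $(\mathrm{nat})$ the term $\rtrans[]{\mach{n}}{\tint}=\num n$ is typed with $\tint$ by iterating $(+)$ over $(0)$, and under $(\mathrm{fix})$ the term $\rtrans[]{\mY}{(\alpha\to\alpha)\to\alpha}=\lam x.\fix x$ is typed by $(\mathrm{ax})$, $(\mathrm Y)$ and $(\to_\mathrm I)$. The linking case is $(\vec R)$, where $\mM=\tuple{\vec R,P,T}$, $\vec R\models\Delta$ with $\Delta=i_1:\beta_{i_1},\dots,i_k:\beta_{i_k}$, and $\Delta\Vdash^r(P,T):\alpha$. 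Here the induction hypothesis for item~(ii) gives $x_{i_1}:\beta_{i_1},\dots,x_{i_k}:\beta_{i_k}\vdash\rtrans[\Delta]{P,T}{\alpha}:\alpha$, so $k$ uses of $(\to_\mathrm I)$ yield $\vdash\lam x_{i_1}\dots x_{i_k}.\rtrans[\Delta]{P,T}{\alpha}:\beta_{i_1}\to\cdots\to\beta_{i_k}\to\alpha$; each register type comes from a $(R_\Types)$-premise $\Lookinv{\val{R_{i_j}}}:\beta_{i_j}$, to which the item~(i) hypothesis applies, yielding $\vdash\rtrans[]{\Lookinv{\val{R_{i_j}}}}{\beta_{i_j}}:\beta_{i_j}$; finally $k$ uses of $(\to_\mathrm E)$ close the case.

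For item~(ii) each rule matches exactly one \PCF{} rule. The two load rules use $(\to_\mathrm I)$, with $(\mathrm{load}_\Types)$ additionally applying $(\to_\mathrm E)$ to the argument $\rtrans[]{\Lookinv a}{\beta}$ supplied by item~(i). The arithmetic rules $(\mathrm{pred})$, $(\mathrm{succ})$, $(\mathrm{test})$ produce a term $\ifc{x_i}{\cdots}=\ifterm{x_i}{\cdots}{\cdots}$; I type it by checking $x_i:\tint$, verifying that the inner argument ($\pred x_i$, $\succ x_i$, resp.\ $\ifterm{x_i}{x_j}{x_k}$) has the domain type of the abstraction built from the premise, applying $(\to_\mathrm E)$, and closing with $(\mathrm{ifz})$ on the two equal branches. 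The $(\mathrm{app})$ rule combines $(\to_\mathrm I)$ with the $(\to_\mathrm E)$-typing of $x_i\cdot x_j$, and the $(\mathrm{call})$ rule is the second place item~(i) is used: from each $\mM_j:\alpha_j$ one gets $\vdash\rtrans[]{\Lookinv{a_j}}{\alpha_j}:\alpha_j$ (since $\Lookinv{a_j}=\mM_j$), and iterating $(\to_\mathrm E)$ against $x_i:\alpha_1\to\cdots\to\alpha_n\to\alpha$ gives type $\alpha$.

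The only genuinely delicate point is the environment bookkeeping around the update notation $\Delta[i:\beta]$. When the written index $i$ already occurs in the ambient context, the corresponding clause abstracts $x_i$ and hence shadows its earlier typing, so the environment read off the premise need not literally coincide with the one read off $\Delta$. This is reconciled by noting that $x_i$ does not occur free under its own binder $\lam x_i$: by the Strengthening/weakening property of the type system (cf.\ Lemma~\ref{lem:epcftyping}\eqref{lem:epcftyping2}) one may freely add or drop such an unused hypothesis to realign the context with $\Delta$, while Barendregt's variable convention keeps the \PCF{} variables $x_{i_1},\dots,x_{i_k}$ pairwise distinct. Once this is in place, every remaining step is a routine match between an EAM typing rule and its \PCF{} counterpart.
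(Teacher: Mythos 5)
Your proposal is correct and follows essentially the same route as the paper: a simultaneous (mutual) induction on the EAM typing derivation, matching each rule of Figure~\ref{fig:eamstyping} with the corresponding \PCF{} rules of Figure~\ref{fig:typing}, with the item~(i)/item~(ii) hypotheses exchanged exactly at the $(\vec R)$, $(\mathrm{load}_\Types)$ and $(\mathrm{call})$ cases. Your treatment of the delicate point—using Lemma~\ref{lem:epcftyping}\eqref{lem:epcftyping2} to realign environments when $\Delta[i:\beta]$ overwrites an existing index, and to weaken closed-term typings into the ambient context—is precisely how the paper resolves the same bookkeeping issue.
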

\begin{proof}[Proof (Appendix~\ref{app:model})]
Both items are proved by mutual induction on the height of a derivation of $\mM:\alpha$ and of $\Delta\Vdash^r (P,T) : \alpha$.
\end{proof}

\begin{thm}\label{thm:transrevtrans}
For all EAMs $\mM : \alpha$, we have $\trans{\rtrans[]{\mM}{\alpha}}{}{}\equiv_\alpha \mM$.
\end{thm}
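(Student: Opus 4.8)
The plan is to strengthen the statement to a simultaneous induction on the two mutually-defined typing judgements that drive the reverse translation, $\mM:\alpha$ and $\Delta\Vdash^r(P,T):\alpha$. For the machine judgement we target the stated equivalence; for the program judgement, writing $\Delta=i_1:\beta_{i_1},\dots,i_k:\beta_{i_k}$, the companion invariant is that for all $a_{i_1},\dots,a_{i_k}$ with $a_{i_j}\in\cD_{\beta_{i_j}}$,
\[
\appT{\trans[x_{i_1},\dots,x_{i_k}]{\rtrans[\Delta]{P,T}{\alpha}}}{[a_{i_1},\dots,a_{i_k}]}\ \equiv_\alpha\ \tuple{\vec R,P,T},
\]
where $\vec R$ is the register list with $\val{R_{i_j}}=a_{i_j}$ and every other register uninitialised; this list satisfies $\vec R\models\Delta$ exactly because each $a_{i_j}\in\cD_{\beta_{i_j}}$, so the right-hand machine is typable with $\alpha$. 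The $(\vec R)$ case of the machine judgement reduces to this invariant: since $\rtrans[]{\mM}{\alpha}$ feeds the register contents as arguments to $\lambda x_{i_1}\dots x_{i_k}.\rtrans[\Delta]{P,T}{\alpha}$, applying Lemma~\ref{lem:applicative} and the translation of abstractions rewrites $\trans{\rtrans[]{\mM}{\alpha}}$ up to $\equiv_\alpha$ as $\appT{\trans[x_{i_1},\dots,x_{i_k}]{\rtrans[\Delta]{P,T}{\alpha}}}{[\Lookup{\trans{N_1}},\dots,\Lookup{\trans{N_k}}]}$, where $N_j=\rtrans[]{\Lookinv{\val{R_{i_j}}}}{\beta_{i_j}}$. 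The inner induction hypothesis gives $\trans{N_j}\equiv_{\beta_{i_j}}\Lookinv{\val{R_{i_j}}}$, i.e.\ $\Lookup{\trans{N_j}}\simeq_{\beta_{i_j}}\val{R_{i_j}}$, so feeding these $\simeq$-equivalent addresses coincides up to $\equiv_\alpha$ with feeding $\val{R_{i_j}}$ (the arrow clause of Definition~\ref{def:model} together with reflexivity from Lemma~\ref{lem:equivprops}\eqref{lem:equivbinrel}); the invariant with $a_{i_j}=\val{R_{i_j}}$ then closes the case.

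The two computational base cases are $\mach n$ and $\mY$. For $\mach n$ we have $\rtrans[]{\mach n}{\tint}=\num n$, and Lemma~\ref{lem:numberred} gives a reduction $\trans{\num n}\reddh\mach n$, which Lemma~\ref{lem:equivprops}\eqref{lem:equivconvh} turns into $\trans{\num n}\equiv_\tint\mach n$. For $\mY$ we have $\rtrans[]{\mY}{(\alpha\to\alpha)\to\alpha}=\lambda x.\fix x$, and unfolding the translation gives $\trans{\lambda x.\fix x}=\appT{\mAppn{1}{1}}{[\Lookup\mY,\Lookup{\mProj{1}{1}}]}$. Feeding an arbitrary $a\in\cD_{\alpha\to\alpha}$ and using Lemma~\ref{lem:existenceofeams}\eqref{lem:existenceofeams2}, this reduces to $\appT{\mY}{[\App{\Lookup{\mProj{1}{1}}}{a}]}$; since $\mProj{1}{1}$ is the identity, Lemma~\ref{lem:existenceofeams}\eqref{lem:existenceofeams1} gives $\App{\Lookup{\mProj{1}{1}}}{a}\simeq_{\alpha\to\alpha}a$, and congruence (reflexivity of $\equiv_{(\alpha\to\alpha)\to\alpha}$ on $\mY$) yields $\appT{\mY}{[\App{\Lookup{\mProj{1}{1}}}{a}]}\equiv_\alpha\appT{\mY}{[a]}$, as required.

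For the inductive cases of the companion invariant I would, in each case, reduce the machine $\tuple{\vec R,P,T}$ by its leading instruction—this step is absorbed up to $\equiv_\alpha$ by Lemma~\ref{lem:equivprops}\eqref{lem:equivconvh}—and match the result with the forward translation of the reverse-translated term, whose leading application is unfolded via Lemma~\ref{lem:existenceofeams}. The $(\mathrm{load})$ cases use the abstraction and explicit-substitution clauses of the translation, which turn feeding an address into loading it into the designated register. The $(\mathrm{call})$ case identifies the tape $[\Lookup{\mM_1},\dots,\Lookup{\mM_n}]$ with the forward translation of $x_i\cdot\rtrans[]{\Lookinv{a_1}}{\alpha_1}\cdots\rtrans[]{\Lookinv{a_n}}{\alpha_n}$, invoking the inner hypothesis on each $\mM_j$. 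The $(\mathrm{app})$ case applies the inner hypothesis to the continuation with register $R_k$ holding $\App{\val{R_i}}{\val{R_j}}$, observing that the forward translation of $x_i\cdot x_j$, once the arguments $\vec a$ are supplied, reduces to an address $\simeq$ to $\App{a_i}{a_j}$ by Lemma~\ref{lem:existenceofeams}. In all these cases congruence (the arrow clause plus reflexivity) lets us replace a fed address by any $\simeq$-equivalent one, which is precisely how the inner hypotheses are imported.

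The main obstacle will be the arithmetic cases $(\mathrm{pred})$, $(\mathrm{succ})$ and $(\mathrm{test})$, where the reverse translation wraps the continuation in the convergency test $\ifc{x_i}{-}$. Here I must reconcile the PCF-level guard ``$x_i$ converges'' with the machine behaviour of Figure~\ref{fig:am:small_step} that repeatedly runs $\Lookinv{\val{R_i}}$ until it produces a numeral. Since $i:\tint$, Proposition~\ref{prop:typing}\eqref{prop:typing4} yields a dichotomy: either $\Lookinv{\val{R_i}}$ does not terminate, in which case both the machine and $\ifc{x_i}{-}$ diverge under every tuple of further arguments, so they are $\equiv_\alpha$-related by unfolding the logical relation down to $\tint$ (where neither reaches a numeral); or $\Lookinv{\val{R_i}}\reddh\mach m$, in which case Theorem~\ref{thm:simulation} together with the inner hypothesis $\trans{\rtrans[]{\Lookinv{\val{R_i}}}{\tint}}\equiv_\tint\Lookinv{\val{R_i}}$ shows that the PCF subterm denoted by $x_i$ also evaluates to $\num m$. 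Then $\pred x_i$ (resp.\ $\succ x_i$, resp.\ the $\ifterm{x_i}{x_j}{x_k}$ branch) evaluates to exactly the address the machine writes into $R_j$ (resp.\ $R_j$, resp.\ $R_l$), so the inner hypothesis on the continuation, fed with this address, concludes. Carefully threading the two evaluations of $x_i$ so that it is forced exactly once, and matching the $\ominus$/successor arithmetic with the $\pred$/$\succ$ reductions, is where the bulk of the technical care resides.
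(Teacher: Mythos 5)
Your proposal is correct and follows essentially the same route as the paper's own proof: the same strengthened companion invariant $\appT{\trans[\vec x]{\rtrans[\Delta]{P,T}{\alpha}}}{[\vec a]}\equiv_\alpha\tuple{\vec R^r_{\vec a},P,T}$ proved by mutual induction with the machine-level statement, the same treatment of the $(\vec R)$, $\mach n$ and $\mY$ cases via reflexivity/transitivity of $\equiv_\alpha$ and Lemma~\ref{lem:existenceofeams}, and the same termination dichotomy in the arithmetic cases (the paper argues these purely at the machine level, so your detour through Theorem~\ref{thm:simulation} is sound but unnecessary). The only detail the paper handles that your sketch glosses over is the subcase of $\ins{Load}$ (and the register-update instructions) where the target index already lies in $\dom(\Delta)$, which requires the strengthening lemma for EAMs (Lemma~\ref{lem:eamstr}) to discard the shadowed argument.
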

\begin{proof}
One needs to consider the additional statement: 

``For all $i_1:\alpha_{i_1},\dots,i_n:\alpha_{i_n} \Vdash^r (P,T):\alpha$, $a_{i_1}\in \cD_{\alpha_{i_1}},\dots,a_{i_n}\in \cD_{\alpha_{i_n}},$ 
\[ 
	\appT{\trans[x_{i_1},\dots,x_{i_n}]{\rtrans[i_1:\alpha_{i_1},\dots,i_n:\alpha_{i_n}]{P,T}{\alpha}}}{[a_{i_1},\dots,a_{i_n}]} \equiv_\alpha \tuple{\vec{R}^r_{a_{i_1},\dots,a_{i_n}},P,T},
\]
where $\vec{R}^r_{a_{i_1},\dots,a_{i_n}}$ denotes the list of registers $R_0,\dots,R_{r}$ such that, for all $j\,(0 \leq j \leq r)$, $\val R_j = a_j$ if $j \in \set{i_1,\dots,i_n}$, and $\val R_j = \Null$ otherwise.''

The two statements are proven by mutual induction on a derivation of $\mM : \alpha$ and $\Delta \Vdash^r (P,T):\alpha$, respectively.
(The details are worked out in the Technical Appendix~\ref{app:model}.)
\end{proof}

\begin{cor}\label{corr:machinepcfequiv}
For all $a \in \cD_\alpha$, there is a \PCF{} program $\vdash P_a:\alpha$ such that $\trans{P_a} \equiv_\alpha \Lookinv{a}$.
\end{cor}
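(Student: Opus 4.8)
The plan is to realise the desired program as the reverse translation of the machine sitting at address $a$. First I would observe that, by the very definition of $\cD_\alpha$ in Definition~\ref{def:model}, membership $a\in\cD_\alpha$ means exactly that the typing judgement $\Lookinv{a} : \alpha$ is derivable. Setting $\mM := \Lookinv{a}$, this puts me in a position to apply the reverse translation of Definition~\ref{def:revtrans} to a fixed type derivation of $\mM : \alpha$ (recall that, per the remark after Definition~\ref{def:EAMtypealg}, ``$\mM : \alpha$'' is understood as the derivation produced by the typing algorithm).

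Concretely, I would define $P_a := \rtrans[]{\Lookinv{a}}{\alpha}$. Two facts then need to be invoked. For well-typedness, the first item of Proposition~\ref{prop:revtranstyping} gives $\vdash \rtrans[]{\Lookinv{a}}{\alpha} : \alpha$; since the derivation lives in the empty environment, $P_a$ is closed, i.e.\ an actual \PCF{} program of type $\alpha$. The only point worth checking here is closedness, and it is immediate from the shape of the translation: in the clause for $\tuple{\vec R, P, T}$ the variables $x_{i_1},\dots,x_{i_k}$ that are free in the body are all captured by the leading abstractions and then saturated by the (inductively closed) reverse translations of the register contents, so no free variable escapes.

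Finally, the behavioural heart of the statement is already packaged in Theorem~\ref{thm:transrevtrans}: instantiated at $\mM = \Lookinv{a}$ with $\Lookinv{a} : \alpha$, it yields $\trans{\rtrans[]{\Lookinv{a}}{\alpha}}\equiv_\alpha \Lookinv{a}$, which is precisely $\trans{P_a}\equiv_\alpha \Lookinv{a}$. Thus the corollary follows by simply combining the typing output of Proposition~\ref{prop:revtranstyping} with the equivalence supplied by Theorem~\ref{thm:transrevtrans}. I do not expect a genuine obstacle at this stage: all the real work---the mutual induction establishing both the typing of the reverse translation and its $\equiv_\alpha$-calibration against the source machine---has been carried out in those two results, so the corollary is a direct specialisation of the ``$\mM\mapsto\rtrans[]{\mM}{\alpha}$ is a right inverse of $\trans{-}$ up to $\equiv_\alpha$'' statement to the machine $\Lookinv{a}$.
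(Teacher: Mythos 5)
Your proposal is correct and matches the paper's intended argument exactly: the corollary is stated without proof precisely because it is the instantiation of Theorem~\ref{thm:transrevtrans} at $\mM = \Lookinv{a}$ (which is typable by definition of $\cD_\alpha$), with $P_a := \rtrans[]{\Lookinv{a}}{\alpha}$ well-typed and closed by Proposition~\ref{prop:revtranstyping}. Your extra check that the reverse translation yields a closed term is the same observation the paper makes just after Definition~\ref{def:revtrans}.
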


\begin{thm}[Completeness]\label{thm:completeness} 
Given two \PCF{} programs $P_1,P_2$ of type $\alpha$, we have
\[
	P_1 \obseq P_2 \Rightarrow \trans{P_1} \equiv_\alpha \trans{P_2}
\]
\end{thm}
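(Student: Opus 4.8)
The plan is to derive completeness as a consequence of \emph{definability} (Corollary~\ref{corr:machinepcfequiv}), \emph{soundness} (Theorem~\ref{thm:adequacy} together with Proposition~\ref{prop:Abramsky}), and the \emph{simulation} theorem (Theorem~\ref{thm:simulation}), using that $\equiv_\alpha$ is an equivalence (Lemma~\ref{lem:equivprops}\eqref{lem:equivbinrel}). Assume $P_1\obseq P_2$; by Abramsky's equivalence (Proposition~\ref{prop:Abramsky}) we also have $P_1\appeq P_2$. Writing $\alpha=\alpha_1\to\cdots\to\alpha_k\to\tint$ (every type over $\tint$ has this shape, $k\ge 0$), iterating the defining clause of $\equiv_\alpha$ shows that $\trans{P_1}\equiv_\alpha\trans{P_2}$ amounts to: for all $a_1,b_1\in\cD_{\alpha_1},\dots,a_k,b_k\in\cD_{\alpha_k}$ with $\Lookinv{a_i}\equiv_{\alpha_i}\Lookinv{b_i}$, one has $\appT{\trans{P_1}}{[a_1,\dots,a_k]}\equiv_\tint\appT{\trans{P_2}}{[b_1,\dots,b_k]}$.

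First I would reduce this to the \emph{diagonal} case $\vec a=\vec b$. Indeed, by reflexivity we have $\trans{P_2}\equiv_\alpha\trans{P_2}$; applying its defining clause to the related pairs $a_i\simeq_{\alpha_i}b_i$ yields $\appT{\trans{P_2}}{[\vec a]}\equiv_\tint\appT{\trans{P_2}}{[\vec b]}$. Hence, by transitivity, it suffices to establish $\appT{\trans{P_1}}{[\vec a]}\equiv_\tint\appT{\trans{P_2}}{[\vec a]}$ for an arbitrary tuple with $a_i\in\cD_{\alpha_i}$.

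The core step uses definability to pass from raw addresses to translations of \PCF{} programs. For each $i$, Corollary~\ref{corr:machinepcfequiv} provides a \PCF{} program $\vdash Q_i:\alpha_i$ with $\trans{Q_i}\equiv_{\alpha_i}\Lookinv{a_i}$, i.e.\ $a_i\simeq_{\alpha_i}\Lookup{\trans{Q_i}}$. Applying reflexivity $\trans{P_1}\equiv_\alpha\trans{P_1}$ to these related pairs and then Lemma~\ref{lem:applicative}, I obtain $\appT{\trans{P_1}}{[a_1,\dots,a_k]}\equiv_\tint\appT{\trans{P_1}}{[\Lookup{\trans{Q_1}},\dots,\Lookup{\trans{Q_k}}]}\equiv_\tint\trans{P_1\cdot Q_1\cdots Q_k}$, and symmetrically $\appT{\trans{P_2}}{[\vec a]}\equiv_\tint\trans{P_2\cdot Q_1\cdots Q_k}$ with the \emph{same} programs $Q_i$. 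It then remains to show $\trans{P_1\cdot Q_1\cdots Q_k}\equiv_\tint\trans{P_2\cdot Q_1\cdots Q_k}$; unfolding $\equiv_\tint$ and invoking Theorem~\ref{thm:simulation} turns this into $P_1\cdot Q_1\cdots Q_k\redd[\PCF]\num n\iff P_2\cdot Q_1\cdots Q_k\redd[\PCF]\num n$, which is precisely the instance of $P_1\appeq P_2$ at the arguments $Q_1,\dots,Q_k$. Chaining the three $\equiv_\tint$-steps by transitivity closes the diagonal case, and hence the theorem.

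The main obstacle is the reduction to the diagonal: the logical relation $\equiv_\alpha$ quantifies over \emph{distinct} related arguments $\vec a,\vec b$, whereas applicative equivalence only compares a program against another on one \emph{fixed} argument tuple. Keeping the same definability witnesses $Q_i$ on both sides---made possible by first absorbing the $\vec a$-versus-$\vec b$ discrepancy into reflexivity of $\trans{P_2}$---is exactly what lets us invoke $P_1\appeq P_2$ and avoids any appeal to a congruence property of $\obseq$. The base case $k=0$ is immediate: there $\equiv_\tint$ unfolds, via Theorem~\ref{thm:simulation}, to the ground-type observation that $P_1\obseq P_2$ already supplies through the empty context.
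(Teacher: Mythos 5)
Your proof is correct and follows essentially the same route as the paper's: both rest on definability (Corollary~\ref{corr:machinepcfequiv}), reflexivity/transitivity of $\equiv_\alpha$ (Lemma~\ref{lem:equivprops}\eqref{lem:equivbinrel}), Lemma~\ref{lem:applicative}, and Theorem~\ref{thm:simulation}, chained by transitivity. The only cosmetic differences are that you absorb the $\vec a$-versus-$\vec b$ discrepancy up front via reflexivity of $\trans{P_2}$ (the paper does it at the end of its chain) and that you pass through Abramsky's applicative equivalence explicitly, where the paper applies $\obseq$ directly to the context $\square\cdot Q_1\cdots Q_n$.
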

\begin{proof} Assume $P_1 \obseq P_2$.
Let $\alpha {=} \alpha_1\to\cdots\to\alpha_n\to\tint$ and $a_1,b_1 \in \cD_{\alpha_1},\dots,$ $a_n,b_n\in \cD_{\alpha_n}$ be such that $\Lookinv{a_i} \equiv_{\alpha_i} \Lookinv {b_i}$. 
By Corollary~\ref{corr:machinepcfequiv}, there are \PCF{} programs $Q_1,\dots,Q_n$, and $Q'_1,\dots,Q'_n$ such that $Q_i,Q'_i$ have type $\alpha_i$ and both $\Lookinv{a_i} \equiv_{\alpha_i} \trans{Q_i}$ and $\Lookinv{b_i} \equiv_{\alpha_i} \trans{Q'_i}$ hold, for every such $i$. We get
\[
\bar{rcll}
\appT{\trans{P_1}}{[a_1,\dots,a_n]}&\equiv_\tint&\appT{\trans{P_1}}{[\Lookup{\trans{Q_1}},\dots,\Lookup{\trans{Q_n}}]},&\textrm{as $\trans{P_1}\equiv_\alpha \trans{P_1}$ by Lemma~\ref{lem:equivprops}\eqref{lem:equivbinrel},}\\
&\equiv_\tint&\trans{P_1\cdot Q_1\cdots Q_n},&\textrm{by Lemma~\ref{lem:applicative}},\\
\trans{P_1\cdot Q_1\cdots Q_n}\reddh \mach k &\Leftrightarrow&P_1\cdot Q_1\cdots Q_n \redd[\PCF] \num k,&\textrm{by Theorem~\ref{thm:simulation},}\\
&\Leftrightarrow& P_2\cdot Q_1\cdots Q_n \redd[\PCF] \num k,&\textrm{as }P_1 \obseq P_2,\\
&\Leftrightarrow& \trans{P_2\cdot Q_1\cdots Q_n}\reddh \mach k,&\textrm{by Theorem~\ref{thm:simulation},}\\
\trans{P_2\cdot Q_1\cdots Q_n}&\equiv_\tint&\appT{\trans{P_2}}{[\Lookup{\trans{Q_1}},\dots,\Lookup{\trans{Q_n}}]},&\textrm{by Lemma~\ref{lem:applicative}},\\
&\equiv_\tint&\appT{\trans{P_2}}{[b_1,\dots,b_n]},&\textrm{as $\trans{P_2}\equiv_\alpha \trans{P_2}$ by Lemma~\ref{lem:equivprops}\eqref{lem:equivbinrel}.}\\
\ear
\]
Conclude by transitivity (Lemma~\ref{lem:equivprops}\eqref{lem:equivbinrel}).
\end{proof}

Adequacy and completeness together yield the main result of the paper: full abstraction.

\begin{thm}[Full Abstraction]\label{thm:FA} The model $\DD$ is fully abstract for \PCF.
\end{thm}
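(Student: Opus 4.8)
The plan is to obtain full abstraction as a direct corollary of the soundness (Theorem~\ref{thm:adequacy}) and completeness (Theorem~\ref{thm:completeness}) results already established, the only genuine work being the bridge between equality in the quotient model $\DD$ and the logical relation $\equiv_\alpha$ on translations. Unfolding the definition of full abstraction, for \PCF{} programs $P_1,P_2$ of type $\alpha$ I must establish the biconditional $\int[\alpha]{P_1} = \int[\alpha]{P_2} \iff P_1 \obseq P_2$.

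First I would reconcile the two notions of equality. By Definition~\ref{def:modelD} the interpretation of a closed program of type $\alpha$ is $\int[\alpha]{P_i} = [\Lookup{\trans{P_i}}]_{\simeq_\alpha}$, so equality of interpretations amounts to $\Lookup{\trans{P_1}} \simeq_\alpha \Lookup{\trans{P_2}}$. Since $\Lookinv{\Lookup{\mM}} = \mM$ for every EAM $\mM$, this is by the definition of $\simeq_\alpha$ (Definition~\ref{def:model}) equivalent to $\trans{P_1} \equiv_\alpha \trans{P_2}$. Thus equality in $\DD$ coincides with the logical relation applied to the translated machines, exactly as already noted after Definition~\ref{def:modelD}.

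With this identification in place the two directions are immediate. For soundness (left-to-right), $\int[\alpha]{P_1} = \int[\alpha]{P_2}$ gives $\trans{P_1} \equiv_\alpha \trans{P_2}$, whence $P_1 \obseq P_2$ by Theorem~\ref{thm:adequacy}. For completeness (right-to-left), $P_1 \obseq P_2$ yields $\trans{P_1} \equiv_\alpha \trans{P_2}$ by Theorem~\ref{thm:completeness}, and reading the identification backwards gives $\int[\alpha]{P_1} = \int[\alpha]{P_2}$.

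I expect no real obstacle in this final assembly: the substantive difficulties have already been discharged upstream---the operational correspondence between a program and its translation (Theorem~\ref{thm:simulation}), which underpins soundness, and the definability argument realised by the reverse translation (Theorem~\ref{thm:transrevtrans} and Corollary~\ref{corr:machinepcfequiv}), which feeds completeness. Consequently the proof of Theorem~\ref{thm:FA} is little more than the conjunction of Theorems~\ref{thm:adequacy} and~\ref{thm:completeness}, modulo the routine unfolding of $\int{-}$ in terms of $\equiv_\alpha$.
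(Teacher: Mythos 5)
Your proposal is correct and follows exactly the paper's own proof: unfold $\int[\alpha]{P} = [\Lookup{\trans{P}}]_{\simeq_\alpha}$ so that equality in $\DD$ coincides with $\trans{P_1}\equiv_\alpha\trans{P_2}$, then apply Theorem~\ref{thm:adequacy} for soundness and Theorem~\ref{thm:completeness} for completeness. The paper's proof is precisely this two-line assembly, so there is nothing to add.
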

\begin{proof} By Definition~\ref{def:modelD}, the interpretation in $\DD$ is defined by $\int[\alpha]{P} = [\Lookup{\trans{P}}]_{\simeq_{\alpha}}$.
Therefore, the full abstraction property follows directly from Theorems~\ref{thm:adequacy} and \ref{thm:completeness}.
\end{proof}

\section*{Acknowledgment}
  \noindent The authors wish to acknowledge fruitful discussions with Flavien Breuvart and Damiano Mazza.
\bibliographystyle{alphaurl}
\bibliography{biblio}

\newcommand{\online}[1]{Available at \url{#1}}
\begin{thebibliography}{AFFM07}

\bibitem[AB17]{Accattoli17}
Beniamino Accattoli and Bruno Barras.
\newblock Environments and the complexity of abstract machines.
\newblock In {\em Proceedings of the 19th International Symposium on Principles
  and Practice of Declarative Programming}, PPDP '17, page 4–16, New York,
  NY, USA, 2017. Association for Computing Machinery.
\newblock \href {https://doi.org/10.1145/3131851.3131855}
  {\path{doi:10.1145/3131851.3131855}}.

\bibitem[ABM14]{Accattoli14}
Beniamino Accattoli, Pablo Barenbaum, and Damiano Mazza.
\newblock Distilling abstract machines.
\newblock {\em SIGPLAN Not.}, 49(9):363–376, aug 2014.
\newblock \href {https://doi.org/10.1145/2692915.2628154}
  {\path{doi:10.1145/2692915.2628154}}.

\bibitem[AC98]{AmadioC98}
Roberto~M. Amadio and Pierre{-}Louis Curien.
\newblock {\em Domains and lambda-calculi}, volume~46 of {\em Cambridge tracts
  in theoretical computer science}.
\newblock Cambridge University Press, 1998.

\bibitem[Acc18]{Accattoli18}
Beniamino Accattoli.
\newblock Proof nets and the linear substitution calculus.
\newblock In Bernd Fischer and Tarmo Uustalu, editors, {\em Theoretical Aspects
  of Computing - {ICTAC} 2018 - 15th International Colloquium, Stellenbosch,
  South Africa, October 16-19, 2018, Proceedings}, volume 11187 of {\em Lecture
  Notes in Computer Science}, pages 37--61. Springer, 2018.
\newblock \href {https://doi.org/10.1007/978-3-030-02508-3\_3}
  {\path{doi:10.1007/978-3-030-02508-3\_3}}.

\bibitem[ACCL90]{AbadiCCL90}
Mart{\'{\i}}n Abadi, Luca Cardelli, Pierre-Louis Curien, and Jean-Jacques
  L{\'{e}}vy.
\newblock Explicit substitutions.
\newblock In Frances~E. Allen, editor, {\em Conference Record of the
  Seventeenth Annual {ACM} Symposium on Principles of Programming Languages,
  San Francisco, California, USA, January 1990}, pages 31--46. {ACM} Press,
  1990.
\newblock \href {https://doi.org/10.1145/96709.96712}
  {\path{doi:10.1145/96709.96712}}.

\bibitem[ACCL91]{AbadiCCL91}
Mart{\'{\i}}n Abadi, Luca Cardelli, Pierre-Louis Curien, and Jean-Jacques
  L{\'{e}}vy.
\newblock Explicit substitutions.
\newblock {\em J. Funct. Program.}, 1(4):375--416, 1991.
\newblock \href {https://doi.org/10.1017/S0956796800000186}
  {\path{doi:10.1017/S0956796800000186}}.

\bibitem[AFFM07]{AlvesFFM07}
Sandra Alves, Maribel Fern{\'{a}}ndez, M{\'{a}}rio Florido, and Ian Mackie.
\newblock The power of closed reduction strategies.
\newblock {\em Electron. Notes Theor. Comput. Sci.}, 174(10):57--74, 2007.
\newblock \href {https://doi.org/10.1016/j.entcs.2007.02.047}
  {\path{doi:10.1016/j.entcs.2007.02.047}}.

\bibitem[AK12]{AccattoliK12}
Beniamino Accattoli and Delia Kesner.
\newblock The permutative {\(\lambda\)}-calculus.
\newblock In Nikolaj~S. Bj{\o}rner and Andrei Voronkov, editors, {\em Logic for
  Programming, Artificial Intelligence, and Reasoning - 18th International
  Conference, LPAR-18, M{\'{e}}rida, Venezuela, March 11-15, 2012.
  Proceedings}, volume 7180 of {\em Lecture Notes in Computer Science}, pages
  23--36. Springer, 2012.
\newblock \href {https://doi.org/10.1007/978-3-642-28717-6\_5}
  {\path{doi:10.1007/978-3-642-28717-6\_5}}.

\bibitem[ALV23]{Accattoli23}
Beniamino Accattoli, Ugo~Dal Lago, and Gabriele Vanoni.
\newblock A log-sensitive encoding of {Turing Machines} in the
  $\lambda$-calculus.
\newblock {\em CoRR}, abs/2301.12556, 2023.
\newblock \href {https://arxiv.org/abs/2301.12556} {\path{arXiv:2301.12556}},
  \href {https://doi.org/10.48550/ARXIV.2301.12556}
  {\path{doi:10.48550/ARXIV.2301.12556}}.

\bibitem[AMJ94]{AbramskyMJ94}
Samson Abramsky, Pasquale Malacaria, and Radha Jagadeesan.
\newblock Full abstraction for {PCF}.
\newblock In Masami Hagiya and John~C. Mitchell, editors, {\em Theoretical
  Aspects of Computer Software, International Conference {TACS} '94, Sendai,
  Japan, April 19-22, 1994, Proceedings}, volume 789 of {\em Lecture Notes in
  Computer Science}, pages 1--15. Springer, 1994.
\newblock \href {https://doi.org/10.1007/3-540-57887-0\_87}
  {\path{doi:10.1007/3-540-57887-0\_87}}.

\bibitem[BCL85]{BerryCL85}
G\'erard Berry, Pierre-Louis Curien, and Jean-Jacques L\'evy.
\newblock Full abstraction for sequential languages: state of the art.
\newblock In M.~Nivat and J.~Reynolds, editors, {\em Algebraic methods in
  semantics}, pages 89--132. Cambridge University Press, 1985.

\bibitem[Can13]{AdvancedITM}
Ran Canetti.
\newblock Universally composable security: A new paradigm for cryptographic
  protocols.
\newblock Technical report, 2013.

\bibitem[CHL96]{CurienHL96}
Pierre-Louis Curien, Th{\'{e}}r{\`{e}}se Hardin, and Jean-Jacques L{\'{e}}vy.
\newblock Confluence properties of weak and strong calculi of explicit
  substitutions.
\newblock {\em J. {ACM}}, 43(2):362--397, 1996.
\newblock \href {https://doi.org/10.1145/226643.226675}
  {\path{doi:10.1145/226643.226675}}.

\bibitem[Cre91]{Cregu91}
Pierre Cregut.
\newblock {\em Machines a environnement pour la reduction symbolique et
  l'evaluation partielle}.
\newblock PhD thesis, Universit\'e Paris-Diderot (Paris VII), 1991.
\newblock In French.
\newblock URL: \url{http://www.theses.fr/1991PA077152}.

\bibitem[Cur92]{Curien92}
Pierre-Louis Curien.
\newblock Observable algorithms on concrete data structures.
\newblock In {\em Proceedings of the 7th Annual Symposium on Logic in Computer
  Science {(LICS} '92), Santa Cruz, California, USA, June 22-25, 1992}, pages
  432--443. {IEEE} Computer Society, 1992.
\newblock \href {https://doi.org/10.1109/LICS.1992.185554}
  {\path{doi:10.1109/LICS.1992.185554}}.

\bibitem[Cur07]{Curien07}
Pierre{-}Louis Curien.
\newblock Definability and full abstraction.
\newblock {\em Electron. Notes Theor. Comput. Sci.}, 172:301--310, 2007.
\newblock \href {https://doi.org/10.1016/j.entcs.2007.02.011}
  {\path{doi:10.1016/j.entcs.2007.02.011}}.

\bibitem[Dav58]{Davis1958}
Martin Davis.
\newblock {\em Computability \& Unsolvability}.
\newblock Dover Publications, New York, 1958.

\bibitem[{Del}97]{DellaPennaTh}
Giuseppe {Della Penna}.
\newblock Una semantica operazionale per il network computing: le macchine di
  {Turing} virtuali.
\newblock Master's thesis, Universit\`a degli Studi di L'{Aquila}, 1996-97.
\newblock In {I}talian.

\bibitem[DIM22]{DellaPennaIM21}
Giuseppe {Della Penna}, Benedetto Intrigila, and Giulio Manzonetto.
\newblock Addressing machines as models of $\lambda$-calculus.
\newblock {\em Log. Methods Comput. Sci.}, 18(3), 2022.

\bibitem[FW87]{FairbairnW87}
Jon Fairbairn and Stuart Wray.
\newblock Tim: A simple, lazy abstract machine to execute supercombinators.
\newblock In Gilles Kahn, editor, {\em Functional Programming Languages and
  Computer Architecture}, pages 34--45, Berlin, Heidelberg, 1987. Springer
  Berlin Heidelberg.

\bibitem[GMR89]{ITM}
Shafi Goldwasser, Silvio Micali, and Charles Rackoff.
\newblock The knowledge complexity of interactive proof systems.
\newblock {\em SIAM Journal on Computing}, 18(1):186--208, 1989.
\newblock \href {https://doi.org/10.1137/0218012} {\path{doi:10.1137/0218012}}.

\bibitem[HO00]{HylandO00}
J.~Martin~E. Hyland and C.{-}H.~Luke Ong.
\newblock On full abstraction for {PCF: I, II, and III}.
\newblock {\em Inf. Comput.}, 163(2):285--408, 2000.
\newblock \href {https://doi.org/10.1006/inco.2000.2917}
  {\path{doi:10.1006/inco.2000.2917}}.

\bibitem[IMM23]{IntrigilaMM22}
Benedetto Intrigila, Giulio Manzonetto, and Nicolas M\"{u}nnich.
\newblock Extended addressing machines for {PCF}, with explicit substitutions.
\newblock volume {1 - Proceedings of MFPS XXXVIII}, 2023.
\newblock URL: \url{https://entics.episciences.org/10533}.

\bibitem[Kri07]{Krivine07}
Jean-Louis Krivine.
\newblock A call-by-name lambda-calculus machine.
\newblock {\em Higher Order Symbol. Comput.}, 20(3):199–207, sep 2007.
\newblock \href {https://doi.org/10.1007/s10990-007-9018-9}
  {\path{doi:10.1007/s10990-007-9018-9}}.

\bibitem[Lan64]{Landin64}
Peter~J. Landin.
\newblock {The Mechanical Evaluation of Expressions}.
\newblock {\em The Computer Journal}, 6(4):308--320, 01 1964.
\newblock \href {https://doi.org/10.1093/comjnl/6.4.308}
  {\path{doi:10.1093/comjnl/6.4.308}}.

\bibitem[Lan07]{Lang07}
Frederic Lang.
\newblock {Explaining the lazy Krivine machine using explicit substitution and
  addresses}.
\newblock {\em {Higher-Order and Symbolic Computation}}, 2007.
\newblock URL: \url{https://hal.inria.fr/inria-00198756}.

\bibitem[Ler90]{Leroy90}
Xavier Leroy.
\newblock The {ZINC} experiment: an economical implementation of the {ML}
  language.
\newblock Technical report 117, INRIA, 1990.

\bibitem[LM99]{LevyM99}
Jean-Jacques L{\'{e}}vy and Luc Maranget.
\newblock Explicit substitutions and programming languages.
\newblock In C.~Pandu Rangan, Venkatesh Raman, and Ramaswamy Ramanujam,
  editors, {\em Foundations of Software Technology and Theoretical Computer
  Science, 19th Conference, Chennai, India, December 13-15, 1999, Proceedings},
  volume 1738 of {\em Lecture Notes in Computer Science}, pages 181--200.
  Springer, 1999.
\newblock \href {https://doi.org/10.1007/3-540-46691-6\_14}
  {\path{doi:10.1007/3-540-46691-6\_14}}.

\bibitem[Lon95]{LongleyTh}
John Longley.
\newblock {\em Realizability toposes and language semantics}.
\newblock {Ph.D.} thesis, University of Edinburgh, 1995.

\bibitem[Mel95]{Mellies95}
Paul-Andr\'{e} Melli\`{e}s.
\newblock Typed lambda-calculi with explicit substitutions may not terminate.
\newblock In {\em Proceedings of the Second International Conference on Typed
  Lambda Calculi and Applications}, TLCA '95, page 328–334, Berlin,
  Heidelberg, 1995. Springer-Verlag.

\bibitem[Mil77]{Milner77}
Robin Milner.
\newblock Fully abstract models of typed $\lambda$-calculi.
\newblock {\em Theor. Comput. Sci.}, 4(1):1--22, 1977.
\newblock \href {https://doi.org/10.1016/0304-3975(77)90053-6}
  {\path{doi:10.1016/0304-3975(77)90053-6}}.

\bibitem[Mil99]{Milner99}
Robin Milner.
\newblock {\em Communicating and mobile systems - the Pi-calculus}.
\newblock Cambridge University Press, 1999.

\bibitem[MRS99]{MarzRS99}
Michael Marz, Alexander Rohr, and Thomas Streicher.
\newblock Full abstraction and universality via realisability.
\newblock In {\em 14th Annual {IEEE} Symposium on Logic in Computer Science,
  Trento, Italy, July 2-5, 1999}, pages 174--182. {IEEE} Computer Society,
  1999.
\newblock \href {https://doi.org/10.1109/LICS.1999.782612}
  {\path{doi:10.1109/LICS.1999.782612}}.

\bibitem[Nic94]{Nickau94}
Hanno Nickau.
\newblock Hereditarily sequential functionals.
\newblock In Anil Nerode and Yuri~V. Matiyasevich, editors, {\em Logical
  Foundations of Computer Science, Third International Symposium, LFCS'94, St.
  Petersburg, Russia, July 11-14, 1994, Proceedings}, volume 813 of {\em
  Lecture Notes in Computer Science}, pages 253--264. Springer, 1994.
\newblock \href {https://doi.org/10.1007/3-540-58140-5\_25}
  {\path{doi:10.1007/3-540-58140-5\_25}}.

\bibitem[Ong95]{Ong95}
C.-H.~Luke Ong.
\newblock Correspondence between operational and denotational semantics of
  {PCF}.
\newblock In S.~Abramsky, D.~Gabbay, and T.~S.~E. Maibaum, editors, {\em
  Handbook of Logic in Computer Science}, volume~4, pages 269--356. Oxford
  University Press, 1995.

\bibitem[Pit97]{Pitts97}
Andrew~M. Pitts.
\newblock A note on logical relations between semantics and syntax.
\newblock {\em Log. J. {IGPL}}, 5(4):589--601, 1997.
\newblock \href {https://doi.org/10.1093/jigpal/5.4.589}
  {\path{doi:10.1093/jigpal/5.4.589}}.

\bibitem[Plo77]{Plotkin77}
Gordon~D. Plotkin.
\newblock {LCF} considered as a programming language.
\newblock {\em Theor. Comput. Sci.}, 5(3):223--255, 1977.
\newblock \href {https://doi.org/10.1016/0304-3975(77)90044-5}
  {\path{doi:10.1016/0304-3975(77)90044-5}}.

\bibitem[SI96]{SeamanI96}
Jill Seaman and S.~Purushothaman Iyer.
\newblock An operational semantics of sharing in lazy evaluation.
\newblock {\em Sci. Comput. Program.}, 27(3):289--322, 1996.
\newblock \href {https://doi.org/10.1016/0167-6423(96)00012-3}
  {\path{doi:10.1016/0167-6423(96)00012-3}}.

\bibitem[SW01]{Sangiorgi01}
Davide Sangiorgi and David Walker.
\newblock {\em The Pi-Calculus - a theory of mobile processes}.
\newblock Cambridge University Press, 2001.

\bibitem[Tur36]{Turing1936}
Alan~M. Turing.
\newblock On computable numbers, with an application to the
  {E}ntscheidungsproblem.
\newblock {\em Proceedings of the London Mathematical Society}, 2(42):230--265,
  1936.

\bibitem[Vis80]{Visser80}
Albert Visser.
\newblock Numerations, $\lambda$-calculus, and arithmetic.
\newblock In Hindley and Seldin, editors, {\em Essays on Combinatory Logic,
  Lambda-Calculus, and Formalism}, pages 259--284. Academic Press, San Diego,
  1980.

\end{thebibliography}
\newpage
\appendix
\section{Technical Appendix}
This technical appendix is devoted to providing some omitted proofs.
We often abbreviate `induction hypothesis' as IH.

\subsection{PCF and omitted proofs from Section~\ref{sec:PCF}}\label{app:regPCF}


In order to prove Proposition~\ref{prop:epcfbigsmallstep}, we need the following auxiliary lemma concerning w.h.-normalizing $\EPCF$ programs, that is, $\EPCF$ programs reducing to some value (by Remark~\ref{rem:EPCFprops}\eqref{rem:EPCFprops4}). 

\begin{lem}\label{lem:epcffactorization} Let $M,N,N_1,N_2\in\Prog{\E}$
\begin{enumerate}
\item\label{lem:epcffactorization1} 
	If $M\cdot N \reddwh V$, then there exist $M', \sigma$ such that $M\reddwh (\lam x.M')^\sigma$ is a shorter reduction;
\item\label{lem:epcffactorization2} 
	If $\pred M \reddwh V$ then there is a shorter reduction $M\reddwh \num n$, for some $\num n \in \nat$; 
\item\label{lem:epcffactorization3} 
	If $\succ M \reddwh V$ then there is a shorter reduction $M\reddwh \num n$, for some $\num n \in \nat$; 
\item\label{lem:epcffactorization4} 
	If $\ifterm{M}{N_1}{N_2} \reddwh V$, then there is a shorter reduction $M\reddwh \num n$, for some $\num n \in \nat$.
\end{enumerate}
\end{lem}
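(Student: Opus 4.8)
The plan is to establish all four items by a uniform \emph{weak-head factorization} argument, each by induction on the length $\ell = \len{M\cdot N\reddwh V}$ (respectively $\len{\pred M\reddwh V}$, $\len{\succ M\reddwh V}$, $\len{\ifterm{M}{N_1}{N_2}\reddwh V}$) of the given reduction. Since the four elimination forms are structurally analogous, I would spell out item~\eqref{lem:epcffactorization1} in detail and treat the others by the same recipe, replacing the surrounding context $-\cdot N$ by $\pred{-}$, $\succ{-}$ and $\ifterm{-}{N_1}{N_2}$, and the target shape $(\lam x.M')^\sigma$ by a numeral $\num n$.

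The heart of the matter is a \emph{position-of-the-redex} analysis based on the grammar of evaluation contexts (Definition~\ref{def:contexts}\eqref{def:contexts3}). The key observation is that the root of $M\cdot N$ is an application, so it cannot match the left-hand side of any percolation rule of Definition~\ref{def:rewriting}\eqref{def:rewriting12} (those all carry an explicit substitution at the root); and among the computation rules of Definition~\ref{def:rewriting}\eqref{def:rewriting11}, the only one whose left-hand side is an application is the $\beta$-rule $\Ev[(\lam x.M')^\sigma N]\redcr\Ev[M'^\sigma\esubst{x}{N}]$. Moreover an evaluation context only descends into the function part of an application (the clause $\Ev\cdot P$ freezes its argument $P$). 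Writing the first step of the given reduction as $M\cdot N=\Ev[R]\redwh\Ev[R']$ — unique by determinism, Remark~\ref{rem:EPCFprops}\eqref{rem:EPCFprops3} — I would conclude that either $\Ev=\square$ and $M=(\lam x.M')^\sigma$, the step being the root $\beta$-redex, or $\Ev=\Ev'\cdot N$ and the redex sits strictly inside $M$, so the step has the shape $M\cdot N\redwh M_1\cdot N$ with $M\redwh M_1$ witnessed by $M=\Ev'[R]\redwh\Ev'[R']=M_1$.

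The induction is then immediate. The base case $\ell=0$ is vacuous, since an application is never a value of $\Val[\E]$ (a numeral or an abstraction-under-substitution), forcing $\ell\ge 1$. In the root-$\beta$ case, $M=(\lam x.M')^\sigma$ already, witnessed by the empty reduction, which is strictly shorter. In the inside-$M$ case we have $M_1\cdot N\reddwh V$ in $\ell-1$ steps, with $M_1, M_1\cdot N\in\Prog{\E}$ by closure of $\Prog{\E}$ under $\redwh$ (Remark~\ref{rem:EPCFprops}); the induction hypothesis yields $M_1\reddwh(\lam x.M')^\sigma$ in fewer than $\ell-1$ steps, whence $M\redwh M_1\reddwh(\lam x.M')^\sigma$ in fewer than $\ell$ steps. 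Items~\eqref{lem:epcffactorization2} and~\eqref{lem:epcffactorization4} follow the same pattern, observing that as soon as $M$ reaches a numeral a genuine head computation step fires ($\pred(\num{n+1})\redcr\num n$ or $\pred\mathbf{0}\redcr\mathbf{0}$; resp.\ $\ifterm{\num{n+1}}{M}{N}\redcr N$ or $\ifterm{\mathbf{0}}{M}{N}\redcr M$), so the extracted reduction $M\reddwh\num n$ is again strictly shorter.

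The one point demanding care — and the main obstacle — is item~\eqref{lem:epcffactorization3}. There is \emph{no} computation rule with root $\succ$, because $\succ\,\num n$ \emph{is} the numeral $\num{n+1}$ and hence already a value; thus $\succ M$ reduces only through $M$, and the reduction $\succ M\reddwh\num{n+1}$ ends precisely when $M$ reaches $\num n$, with no additional head step. Consequently the factorization delivers $M\reddwh\num n$ as exactly the sub-reduction obtained by stripping the context $\succ{-}$, whose length equals that of $\succ M\reddwh V$ rather than being strictly smaller (in the extreme case $M=\num n$ already, both have length $0$). So here ``shorter'' must be read as ``no longer'': the genuine content is that $M\reddwh\num n$ is a factor of the given reduction, which is what is needed to feed the induction of Proposition~\ref{prop:epcfbigsmallstep}.
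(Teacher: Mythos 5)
Your proof is correct and follows essentially the paper's own route: the paper likewise observes that $V$ cannot be an application (so the reduction is non-empty), splits on whether the contracted redex is at the root---where only the $\beta$-rule can fire---or strictly inside $M$, and then ``repeats the reasoning'' along the finite reduction, which is precisely your induction on the length.

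Where you genuinely depart from the paper is item~\eqref{lem:epcffactorization3}, and your observation there is a correct catch of something the paper hides under ``analogous''. Indeed, no computation or percolation rule has a left-hand side rooted in $\succ$ (the percolation rule for $\succ$ is rooted in an explicit substitution), and $\succ \num n$ literally \emph{is} the value $\num{n+1}$; hence every step of $\succ M\reddwh V$ occurs inside $M$, and stripping the context $\succ\square$ yields $M\reddwh \num n$ of \emph{exactly} the same length---strict decrease fails already for $M=\mathbf{0}$, where both reductions are empty. So item~\eqref{lem:epcffactorization3} only holds with ``no longer'' in place of ``shorter'', as you say, in contrast with items~\eqref{lem:epcffactorization1}, \eqref{lem:epcffactorization2} and~\eqref{lem:epcffactorization4}, where the mandatory root step guarantees strictness. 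One caveat on your closing claim that this weaker form ``is what is needed to feed the induction of Proposition~\ref{prop:epcfbigsmallstep}'': that proposition's $(\Leftarrow)$ direction inducts on the length alone, so in its subcase $\E\square=\succ(\E')$ an extracted reduction of \emph{equal} length is not covered by the induction hypothesis as literally stated. One must strengthen the induction, e.g.\ lexicographically on the length and then on the size of the term (the stripped term $\E'[M']$ being a proper subterm of $\succ(\E'[M'])$), or argue by inversion of the big-step rules. This is a small repair your analysis forces on the paper's proofs, not a gap in yours.
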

\begin{proof} (1) Since $V$ cannot be an application, the reduction $M\cdot N \reddwh V$ is non-empty. 
If $M\cdot N$ is a redex then $M$ must have shape $(\lam x.M')^\sigma$ for some $M',\sigma$ and we are done.
Otherwise, the reduction has shape $M\cdot N \redwh M'\cdot N\reddwh V$ for some $M\redwh M'$.
By repeating this reasoning we must obtain $M\reddwh V'$, as the global reduction is finite.
We conclude since $V'\cdot N$ must be a redex, and this is only possible if $V'  = (\lam x.M')^\sigma$.

(1-3) Analogous.
\end{proof}

\begin{proof}[Proof of Proposition~\ref{prop:epcfbigsmallstep}] 
$(\Rightarrow)$ By induction on the height of a derivation of $M\ToE V$:
\begin{itemize}
\item $\num n \ToE \num n$: Base case, by reflexivity $\num n \reddwh \num n$ holds.
\item $(\lambda x.M)^\sigma \ToE (\lambda x.M)^\sigma$: Base case, as above.
\item $x^\sigma \ToE V$ with $\sigma(x) = N$: Then $x^\sigma\redwh N$. Conclude by applying the IH to $N\ToE V$.
\item $(\pred M)^\sigma\ToE \mathbf{0}$: We have $(\pred M)^\sigma\redwh \pred(M^\sigma)$. 
By IH we obtain $M^\sigma\reddwh \mathbf{0}$, from which it follows $\pred(M^\sigma) \reddwh\pred \mathbf{0} \redwh \mathbf{0}$.
\item $(\pred M)^\sigma\ToE\num n$: proceed as above, using $\pred(\num{n+1})\redwh \num n$.
\item $(\succ M)^\sigma\ToE\num {n+1}$: We have $(\succ M)^\sigma \redwh \succ(M^\sigma)$. 
By IH we obtain $M^\sigma\reddwh \num n$, so we conclude, remembering that $\num n = \succ^n(\mathbf{0})$.
\item $(\ifterm LMN)^\sigma\ToE V_1$: Then, we have $(\ifterm LMN)^\sigma \redwh \ifterm{L^\sigma}{M^\sigma}{N^\sigma}$. 
By IH we get $L^\sigma\reddwh \num 0$, so $ \ifterm{L^\sigma}{M^\sigma}{N^\sigma}\reddwh M^\sigma$. 
Conclude since $M^\sigma \reddwh V_1$ holds by IH.
\item $(\ifterm LMN)^\sigma\ToE V_2$: Proceed as above.
\item $(\fix M)^\sigma\ToE V$: We have $(\fix M)^\sigma\redwh \fix (M^\sigma)\redwh M^\sigma \cdot (\fix (M^\sigma))$. Conclude by IH.
\item $(M\cdot N)^\sigma\ToE V$: We have $(M\cdot N)^\sigma\redwh M^\sigma\cdot N^\sigma$. 
From the IH, we obtain $M^\sigma\reddwh (\lam x.M')^{\sigma'}$. 
Since $ (\lam x.M')^{\sigma'}\cdot N^\sigma\redwh (M')^{\sigma'}\esubst{x}{N^\sigma}$, we may conclude using the IH.
\end{itemize}

$(\Leftarrow)$ We proceed by induction on the length $k = \len{M\reddwh V}$.

Case $k=0$. Then $M = V$. Apply either $\Enumrule$ or $\Elamrule$, depending on the shape of $V$.

Case $k > 0$. Then $M \redwh N \reddwh V$, for some $N\in\Lame$ with $\len{N \reddwh V} = k-1 < k$. By Remark~\ref{rem:EPCFprops}\eqref{rem:EPCFprops3}, there exist a unique evaluation context $\E\square$ and $M',N'\in\Lame$ such that $M = \E[M']$, $N = \E[N']$ and $M' \to N'$, where $\to$ represents the contraction of  a computation or a percolation redex.
Proceed by cases on the structure of $\E\square$.

Subcase $\E\square = \square$ ($M=M',N = N'$). First, consider the computation reduction cases:
\begin{itemize}
\item $M = (\lam x.M_1)^\sigma M_2$ and $N = M_1^\sigma\esubst{x}{M_2}$:  
By $\Elamrule$ we obtain $(\lam x.M_1)^\sigma\ToE (\lam x.M_1)^\sigma$ and, from the IH, $M_1^\sigma\esubst{x}{M_2}\ToE V$.
We apply $\Ebetarule$ to infer that $(\lam x.M_1)^\sigma M_2 \ToE V$.
\item $M = \pred \mathbf{0}$ and $N = \mathbf{0}$: By $\Enumrule$ we obtain $\mathbf{0}\ToE\mathbf{0}$. We apply $\Epredzrule$ to infer $\pred\mathbf{0}\ToE\mathbf{0}$.
\item The other computation cases for $\pred$ and $\succ$ are analogous to the above case.
\item $M=\ifterm{\mathbf{0}}{L_1}{L_2} $ and $N =  L_1$: By $\Enumrule$ we obtain $\mathbf{0}\ToE\mathbf{0}$, and by the IH $L_1 \ToE V$ holds. We apply $\Eifzzrule$ to infer that $\ifterm{\mathbf{0}}{L_1}{L_2}\ToE V$.
\item $M=\ifterm{\num {n+1}}{L_1}{L_2} $ and $N =  L_2$: Analogous to the above case.
\item $M = \fix L$ and $N = L\cdot (\fix L) $: By IH we obtain $L\cdot (\fix L) \ToE V$. We apply $\Efixrule$ to infer $\fix L \ToE V$.
\end{itemize}
Second, the percolation reduction steps:
\begin{itemize}
\item $M = x^\sigma $ and $\sigma(x) = N$: By IH we obtain $N\ToE V$. We apply $\Evarrule$ to infer $x^\sigma\ToE V$.
\item $M = \mathbf{0}^\sigma$ and $N = \mathbf{0}$: We apply $\Enumrule$ to infer $\mathbf{0}^\sigma\ToE \mathbf{0}$.
\item $M = (\pred M_1)^\sigma$ and $N = \pred (M_1^\sigma)$: By Lemma~\ref{lem:epcffactorization}\eqref{lem:epcffactorization2}, $\pred (M_1^\sigma)\reddwh V$ entails that there is a shorter reduction $M_1^\sigma\reddwh \num n$, for some $n\in\nat$. 
By IH, we get $M_1^\sigma\ToE \num n$ and we conclude by applying $\Epredzrule$ or $\Epredrule$ depending on whether $n=0$ or not.
\item Proceed as in the above case for all other percolation reduction steps.
\end{itemize}
\noindent 
Subcase $\E\square = \pred (\E')$, for some context $\E'\square$. As $M = \pred (\E'[M'])$ reduces to a value, by Lemma~\ref{lem:epcffactorization}\eqref{lem:epcffactorization2} we must have a shorter reduction $\E'[M']\reddwh \num n$, for some $n\in\nat$. By IH we obtain $\E'[M']\ToE \num n$, so we conclude by applying either $\Epredzrule$ or $\Epredrule$.

Subcase $\E\square = \succ (\E')$, for some $\E'\square$. It follows analogously from IH and $\Esuccrule$.

Subcase $\E\square = \E'L$, for some $\E'\square$. Since $M\reddwh V$, the reduction must factorize as 
\[
	M = (\E'[M'])L\reddwh (\lam x.M_1)^\sigma L \redwh M_1^\sigma\subst{x}{L}\reddwh V
\]
with both $\E'[M']\reddwh (\lam x.M_1)^\sigma$ and $M_1^\sigma\subst{x}{L}\reddwh V$ shorter than $k$ (see Lemma~\ref{lem:epcffactorization}\eqref{lem:epcffactorization1}).
By IH, we get $\E'[M']\ToE (\lam x.M_1)^\sigma$ and $M_1^\sigma\subst{x}{L}\ToE V$, so we conclude by $\Ebetarule$.

Subcase $\E\square = \ifterm{\E'}{M_1}{M_2}$, for some $\E'\square$. Since $M\reddwh V$, Lemma~\ref{lem:epcffactorization}\eqref{lem:epcffactorization4} entails that the reduction must factorize as
\[
	M = \ifterm{\E'[L]}{M_1}{M_2}\reddwh \ifterm{\num n}{M_1}{M_2}\redwh
	\begin{cases}
	M_1 \reddwh V, & \textrm{if } n = 0,\\ 
	M_2\reddwh V, &\textrm{otherwise},\\
	\end{cases}
\]
with $\E'[L]\reddwh \num n$ and $\len{M_i\reddwh V} < k$. By IH, we get either [$\E'[L]\ToE \mathbf{0}$ and $M_1\ToE V$] or [$\E'[L]\ToE \num n+1$ and $M_2\ToE V$] (for some $n \in \nat$). We conclude by either $\ifzzrule$ or $\ifzrule$.
\end{proof}

\begin{proof}[Proof of Lemma~\ref{lem:epcftyping}]
\bsub
\item (Syntax directedness) Trivial proof by inspection.
\item (Strengthening) An easy proof by induction on the shape of $M$ proves this in both directions - the statement holds for $\Gamma\vdash \mathbf{0}:\tint$ and $\Gamma, x:\alpha\vdash x:\alpha$, and all other cases are derived from those two. 
\item (Subject reduction) We will prove this by induction on the shape of $M = \E[M']$ where $M'$ is the contracted redex. 
The only interesting cases are when $M'$ is in head position, i.e. $\E\square = \square$. For cases where $\sigma$ is present, we will assume that $\dom(\sigma) = \{y_1,\dots, y_n\}$.
\begin{itemize}
\item Cases $M = x$ and $M=\mathbf{0}$ do not apply, as neither can reduce.
\item Case $M = (\lam x.M_1)^\sigma\cdot M_2$ and $N = M_1^\sigma\esubst{x}{M_2}$. 
From $\vdash (\lam x.M_1)^\sigma\cdot M_2 : \alpha$ we obtain the judgements 
$\vdash \sigma(y_1) : \gamma_1,\dots,\vdash \sigma(y_n):\gamma_n$, 
$y_1 : \gamma_1,\dots, y_n:\gamma_n, x:\beta \vdash M_1 : \alpha$ 
and $\vdash M_2 : \beta$, via a single application of the $(\to_\mathrm{E})$ rule followed by repeated applications of the $(\sigma)$ rule. 
Using all of these judgements we can then derive the judgement $\vdash M_1\esubst{x}{M_2} : \alpha$, through $n+1$ applications of the $(\sigma)$ rule.
\item All other computation reduction cases are trivial, as they are identical to $\PCF$ where subject reduction holds.
\item Case $M = x^\sigma$, where $\sigma(x) = N$: Using $(\sigma)$ we obtain the judgement $\vdash N : \alpha$.
\item Case $M = y^\sigma$, where $y \notin \dom(\sigma)$: Does not apply, as this term is not closed.
\item Case $M = \mathbf{0}^\sigma$: Trivial, as this has the type $\tint$ and $\mathbf{0}$ has the type $\tint$.
\item Case $M = (\pred M_1)^\sigma$: We obtain the judgements $\vdash \sigma(y_1) : \gamma_1,\dots,\vdash \sigma(y_n):\gamma_n$, $y_1:\gamma_1,\dots,y_n:\gamma_n\vdash M_1 : \tint$. We can reassemble these to form the judgement $\vdash M_1^\sigma:\tint$, from which we derive $\vdash \pred(M_1^\sigma):\tint$. 
\item All other percolation reduction cases proceed identically to $(\pred M_1)^\sigma$.
\end{itemize}
\noindent 
The cases where $\E\square\neq \square$ are trivial applications of the induction hypothesis. \qedhere
\esub
\end{proof}

\begin{proof}[Proof of Proposition~\ref{prop:MreddNthenMreddN}] 
\bsub
\item 
By induction on a derivation of $M \redcr N$.
\begin{itemize}\item Base cases.
	\begin{itemize}
	\item Case $M = (\lam x.L_1)^\sigma\cdot L_2$ and $N = L_1^\sigma \esubst{x}{L_2}$ where, say, $\sigma = \esubst{\vec y\,}{\vec Y}$ with $x\notin \vec y$. Then, we have:
	\[
		\bar{lclcl}
		\uf M &=& (\lam x.\uf L_1)\subst{\vec y\,}{\uf{\vec Y}}\cdot \uf L_2
		= (\lam x.\uf L_1\subst{\vec y\,}{\uf{\vec Y}})\cdot \uf L_2&\to_\PCF& \uf L_1\subst{\vec y\,}{\uf{\vec Y}}\subst{x}{\uf L_2}\\
		&=& \uf{(L_1^\sigma\esubst{x}{L_2})}= \uf N
		\ear
	\]

	\item Case $M = \ifterm{\num 0}{N}{L}$. Then $\uf L = \ifterm{\num 0}{\uf{N}}{\uf L}\to_\PCF \uf{N}$.
	\item Case $M = \ifterm{\num{n+1}}{L}{N}$. Then $\uf L = \ifterm{\num{n+1}}{\uf L}{\uf{N}}\to_\PCF\uf{N}$.
	\item All other base cases hold trivially.
	\end{itemize}
\item Case $M = M_1\cdot M_2$ and $N = N_1\cdot M_2$ with $M_1\redcr N_1$. By induction hypothesis, we have $\uf M_1 \to_\PCF \uf N_1$. Therefore $\uf{M} = \uf{M_1}\cdot \uf{M_2}\to_\PCF \uf{N_1}\cdot \uf{M_2} = \uf{(N_1\cdot M_2)} = \uf N$.
\item Case $M = \ifterm{M_1}{L_1}{L_2}$ and $N = \ifterm{N_1}{L_1}{L_2}$ with $M_1\redwh N_1$. 
By induction hypothesis, we have $\uf{M_1}\to_\PCF \uf{N_1}$. Thus $\uf{M} = \ifterm{\uf M_1}{\uf L_1}{\uf L_2} \to_\PCF \ifterm{\uf N_1}{\uf L_1}{\uf L_2} = \uf{(\ifterm{N_1}{L_1}{L_2})} = \uf N$.
\item Case $M =\pred M_1$ and $N = \pred N_1$ with $M_1\redcr N_1$. By induction hypothesis, we have $\uf{M_1} \to_\PCF \uf{N_1}$. Thus $\uf M = \pred (\uf{M_1}) \to_\PCF \pred (\uf{N_1}) = \uf{(\pred N_1)} = \uf N$.
\item Case $M =\succ M_1$ and $N = \succ N_1$ with $M_1\redcr N_1$.  Analogous.
\end{itemize}

\item By induction on a derivation of $M \redsc N$.
\begin{itemize}\item Base cases.
	\begin{itemize}
	\item Case $M = (x\esubst{x}{N})^\sigma$. Since $N\in\Prog{\E}$, we have $\FV{N} = \emptyset$. Thus $\uf M = x\subst{x}{\uf{N}} = \uf{N}$.
	\item Case $M = (y\esubst{x}{L})^\sigma$ and $N = y^\sigma$. Since $M\in\Prog{\E}$ we must have $y\in\dom(\sigma)$ with, say, $\sigma(y) = L'\in\Prog{\E}$. Conclude since $\uf M = \uf{L'} = \uf{(y^\sigma)} = \uf{N}$.
	\item All other base cases hold trivially.
	\end{itemize}
\item Case $M = M_1\cdot M_2$ and $N = N_1\cdot M_2$ with $M_1\redsc N_1$. By induction hypothesis, we have $\uf M_1 = \uf N_1$. Therefore $\uf{M} = \uf{M_1}\cdot \uf{M_2}= \uf{N_1}\cdot \uf{M_2} = \uf{(N_1\cdot M_2)} = \uf N$.
\item Case $M = \ifterm{M_1}{L_1}{L_2}$ and $N = \ifterm{N_1}{L_1}{L_2}$ with $M_1\redsc N_1$. 
By induction hypothesis, we have $\uf{M_1}= \uf{N_1}$. Thus $\uf{M} = \ifterm{\uf M_1}{\uf L_1}{\uf L_2} = \ifterm{\uf N_1}{\uf L_1}{\uf L_2} = \uf{(\ifterm{N_1}{L_1}{L_2})} = \uf N$.
\item Case $M =\pred M_1$ and $N = \pred N_1$ with $M_1\redsc N_1$. By induction hypothesis, we have $\uf{M_1} = \uf{N_1}$. Thus $\uf M = \pred (\uf{M_1}) = \pred (\uf{N_1}) = \uf{(\pred N_1)} = \uf N$.
\item Case $M =\succ M_1$ and $N = \succ N_1$ with $M_1\redsc N_1$.  Analogous.\qedhere
\end{itemize}
\esub
\end{proof}

\subsection{Omitted proofs from Section~\ref{sec:Simulation}}\label{app:Sim}

\begin{proof}[Proof of Theorem~\ref{thm:transtyping}.]
Proceed by induction on a derivation of $\Gamma \vdash M:\alpha$. We split into cases depending on the last applied rule.

\begin{itemize}
\item Case $(0)$. In this case $M = \mathbf{0}$ and $\alpha = \tint$. 
By Definition~\ref{def:trans}, we have $\trans[\vec x]{\mathbf{0}} = \appT{\mProj{n+1}{1}}{[0]}$.
By Lemma~\ref{lem:welltypedeams}\eqref{lem:welltypedeams1}, we know that $\mProj{n+1}{1} : \tint\to \vec\delta\to\tint$.
By rule $(\mathrm{nat})$, we get $ \mach{0} : \tint$.
By Proposition~\ref{prop:typing}\eqref{prop:typing1}, we conclude $ \appT{\mProj{n+1}{1}}{[0]} : \vec\delta\rightarrow\tint$;

\item Case $\mathrm{(ax)}$. Then $M = x_i$ for some $x_i\in\vec x$ and $\trans[\vec x]{x_i} = \mProj{n}{i}$. By Lemma~\ref{lem:welltypedeams}\eqref{lem:welltypedeams1}.
\item Case $(\to_\mathrm{E})$. Then $M = M_1\cdot M_2$ and there is $\beta\in\Types$ such that $\Gamma\vdash M_1 : \beta\rightarrow\alpha, \Gamma\vdash M_2:\beta$.
From the induction hypothesis, we obtain that $\trans[\vec x]{M_1} : \vec\delta\to\beta\to\alpha$ and $\trans[\vec x]{M_2} : \vec\delta\to\beta$.
By Lemma~\ref{lem:welltypedeams}\eqref{lem:welltypedeams1}, we get $\mProj{1}{1} : (\beta\to\alpha)\to\beta\to\alpha$. 
By Lemma~\ref{lem:welltypedeams}\eqref{lem:welltypedeams2}, we know that $\mAppn{n}{2} : ((\beta\to\alpha)\to\beta\to\alpha)\to(\vec\delta\to\beta\to\alpha)\to\vec\delta\to\beta$. 
By Definition~\ref{def:trans}, we have $\trans[\vec x]{M_1\cdot M_2} = \appT{\mAppn{n}{2}}{[\Lookup{\mProj{1}{1}}, \Lookup{\trans[\vec x]{M_1}},\Lookup{\trans[\vec x]{M_2}}]}$ so we conclude by Proposition~\ref{prop:typing}\eqref{prop:typing1}.
\item Case $(\sigma)$. Then $M = M'\esubst{y}{N}$ with $\Gamma,y:\beta\vdash M' : \alpha$ and $\vdash N : \beta$, for some $\beta\in\Types$.
By induction hypothesis, we get $\trans[\vec x,y]{M'} : \beta\to\vec \delta\to\alpha$ and $ \trans{N} : \beta$.
By Definition~\ref{def:trans}, we have $\trans[\vec x]{M'\esubst{y}{N}} = \appT{\trans[y,\vec x]{M'}}{[\Lookup{\trans{N}}]}$. Conclude by Proposition~\ref{prop:typing}\eqref{prop:typing1}.
\item Case $(\to_\mathrm{I})$. Then $M = \lam y.N$ and $\alpha = \alpha_1\to\alpha_2$, with $\Gamma,y:\alpha_1\vdash N : \alpha_2$.
By Definition~\ref{def:trans}, we have $\trans[\vec x]{\lambda y.N} = \trans[\vec x,y]{N}$ so the case follows from the induction hypothesis.
\item Case $(+)$. Then $M = \succ N$ and $\alpha = \tint$, with $\Gamma\vdash N : \tint$. 
By induction hypothesis $ \trans[\vec x]{N} : \vec\delta\to\tint$ and, by Lemma~\ref{lem:welltypedeams}\eqref{lem:welltypedeams4}, we have $ \mSucc : \tint\to\tint$.
By Definition~\ref{def:trans}, $\trans[\vec x]{\succ N} = \appT{\mAppn{n}{1}}{[\Lookup{\mSucc},\Lookup{\trans[\vec x]{N}}]}$. Conclude by Lemma~\ref{lem:welltypedeams}\eqref{lem:welltypedeams2} and Proposition~\ref{prop:typing}\eqref{prop:typing1}.

\item Case $(-)$. Analogous to the previous case, using Lemma~\ref{lem:welltypedeams}\eqref{lem:welltypedeams3} instead of Lemma~\ref{lem:welltypedeams}\eqref{lem:welltypedeams2}.
\item Case $(\mathrm{ifz})$. Then $M = \ifterm L{N_1}{N_2}$ with $\Gamma\vdash L : \tint$ and $\Gamma\vdash N_i : \alpha$, for each $i=1,2$.
By induction hypothesis, we get $ \trans[\vec x]{L} : \vec\delta\to \tint$ and $ \trans[\vec x]{N_i} : \vec\delta\to\alpha$, for every such $i$.
By Lemma~\ref{lem:welltypedeams}\eqref{lem:welltypedeams5}, we have $ \mIfz : \tint\to\alpha\to\alpha\to\alpha$.
Since, by Definition~\ref{def:trans}, $\trans[\vec x]{\ifterm L{N_1}{N_2}} = \appT{\mAppn{n}{3}}{[\Lookup{\mIfz},\Lookup{\trans[\vec x]{L},\Lookup{\trans[\vec x]{N_1}},\Lookup{\trans[\vec x]{N_2}}}]}$ we conclude by applying Lemma~\ref{lem:welltypedeams}\eqref{lem:welltypedeams2} and Proposition~\ref{prop:typing}\eqref{prop:typing1}.
\item Case $(\mathrm{Y})$. Then $M = \fix N$ with $\Gamma\vdash N : \alpha\to\alpha$.
By induction hypothesis, we have $\trans[\vec x]{N} : \vec\gamma\to\alpha\to\alpha$. 
By rule $(\mathrm{fix})$ we know that $\mY : (\alpha \to \alpha) \to \alpha$.
The result follows from $\trans[\vec x]{\fix N} = \appT{\mAppn{n}{1}}{[\Lookup{\mY}, \Lookup{\trans[\vec x]{N}}]}$ using Lemma~\ref{lem:welltypedeams}\eqref{lem:welltypedeams2} and Proposition~\ref{prop:typing}\eqref{prop:typing1}.\qedhere
\end{itemize}
\end{proof}

\begin{proof}[Proof of Proposition~\ref{prop:smallstep}.]

By induction on a derivation of $M \redwh N$.
\begin{itemize}
\item Base cases. In the following $\sigma = \esubst{x_1}{N_1}\cdots\esubst{x_n}{N_n}$ for some $n>0$. 
As a matter of notation, we introduce the abbreviations $\vec{x} = x_1,\dots,x_n$ and ${\Lookup{\vec\sigma}} = \Lookup{\trans{N_1}{}{}},\dots,\Lookup{\trans{N_n}{}{}}$.
	\begin{itemize}
	\item Case $M = ((\lambda y.M_1)^\sigma)M_2$ and $N = M_1^\sigma\esubst{x}{M_2}$. Wlog $y\notin\dom(\sigma)$ and since $M_1\in\Prog{\E}$, we must have $\FV{M_1}\subseteq\set{y}$. Therefore
	\[
		\bar{lcll}
		\trans{M}\!&=&\trans{((\lambda y.M_1)^\sigma)M_2}\\
		&=&\appT{\mAppn{0}{2}}{[\Lookup{\trans[y]{M_1^\sigma}}, \Lookup{\trans{M_2}}]},&\textrm{by Definition~\ref{def:trans}},\\
		&=&\appT{\mProj{1}{1}}{[\Lookup{\trans[y]{M_1^\sigma}}, \Lookup{\trans{M_2}}]},&\textrm{since $\mAppn{0}{2} = \mProj{1}{1}$ by Definition~\ref{def:fixedmachines}},\\
		&\reddh^{2}& \appT{\trans[y]{M_1^\sigma}}{[\Lookup{\trans{M_2}}]},&\textrm{by Lemma~\ref{lem:existenceofeams}\eqref{lem:existenceofeams1}},\\
		&=& \trans{M_1^\sigma\esubst{y}{M_2}},&\textrm{by Definition~\ref{def:trans}}.\\
		\ear
	\]
	\item Case $M = \mathbf{0}^\sigma$ and $N = \mathbf{0}$. 
	On the one side, $\trans{\mathbf{0}^\sigma} = \trans[\vec x]{\mathbf{0}} = \appT{\mProj{n+1}{1}}{[0,\Lookup{\vec\sigma}]}$ whence, by Lemma~\ref{lem:existenceofeams}\eqref{lem:existenceofeams1}, we get $\appT{\mProj{n+1}{1}}{[0,\Lookup{\vec\sigma}]}\reddh^{n + 2} \Lookinv{0} =  \mach{0}$. 
	On the other side, by Lemma~\ref{lem:existenceofeams}\eqref{lem:existenceofeams1}, we obtain $\trans{\mathbf{0}} = \appT{\mProj{1}{1}}{[0]}\reddh^{2} \mach{0}$. Since $n>0$, we conclude $\trans{\mathbf{0}^\sigma}\convg\trans{\mathbf{0}}$.
	\item Case $M = \ifterm{\mathbf{0}}{M_1}{M_2}$ and $N = M_1$. Therefore $\trans{M}$ is equal to
	\[
		\bar{lcll}
		 \trans{\ifterm{\mathbf{0}}{M_1}{M_2}}&=&\appT{\mAppn{0}{3}}{[\Lookup{\mIfz},0,\Lookup{\trans{M_1}},\Lookup{\trans{M_2}}]},&\textrm{by Definition~\ref{def:trans}},\\
			&=&\appT{\mProj{1}{1}}{[\Lookup{\mIfz},0,\Lookup{\trans{M_1}},\Lookup{\trans{M_2}}]},&\textrm{by Definition~\ref{def:fixedmachines}},\\
			&\reddh^{2}&\appT{\mIfz}{[0,\Lookup{\trans{M_1}},\Lookup{\trans{M_2}}]},&\textrm{by Lemma~\ref{lem:existenceofeams}\eqref{lem:existenceofeams1}},\\
			&\reddh&\trans{M_1}.
		\ear
	\]
	\item Case $M = \ifterm{\num{k+1}}{M_1}{M_2}$, for some $k\ge0$, and $N = M_2$. Therefore $\trans{M}$ is equal to
	\[
		\bar{llll}
			&\mathrlap{\trans{\ifterm{\num{k+1}}{M_1}{M_2}}}\\&=&\appT{\mAppn{0}{3}}{[\Lookup{\mIfz},k+1,\Lookup{\trans{M_1}},\Lookup{\trans{M_2}}]},&\textrm{by Definition~\ref{def:trans}},\\
			&=&\appT{\mProj{1}{1}}{[\Lookup{\mIfz},k+1,\Lookup{\trans{M_1}},\Lookup{\trans{M_2}}]},&\textrm{by Definition~\ref{def:fixedmachines}},\\
			&\reddh^{2}&\appT{\mIfz}{[k+1,\Lookup{\trans{M_1}},\Lookup{\trans{M_2}}]},&\textrm{by Lemma~\ref{lem:existenceofeams}\eqref{lem:existenceofeams1}},\\
			&\reddh&\trans{M_2}.
		\ear
	\]	
	\item Case $M = y^\sigma$ for $y\notin\dom(\sigma)$. Vacuous, since $M\in\Prog{\E}$.
	\item Case $M = x_i^\sigma$ and $N = \sigma(x_i) = N_i$. So $\trans{x^\sigma} = \appT{\mProj{n}{i}}{[\Lookup{\vec\sigma}]}\reddh^{n+1}\trans{N_i}$ by Lemma~\ref{lem:existenceofeams}\eqref{lem:existenceofeams1}.
	\item Case $M = (M_1\cdot M_2)^\sigma$ and $N = M_1^\sigma\cdot M_2^\sigma$. Since $M\in\Prog\E$, each $M_i^\sigma$ is closed, whence $\FV{M_i}\subseteq\set{x_1,\dots,x_n}$. On the one side, we get
	\[
	\bar{llll}
		&\mathrlap{\trans{(M_1\cdot M_2)^\sigma} = \mAppn{n}{2}@[\Lookup{\mProj{1}{1}},\Lookup{\trans[\vec x]{M_1}},\Lookup{\trans[\vec x]{M_2}},\Lookup{\vec\sigma}]}\\
		&\reddh^{10n + 2}& \appT{\mProj{1}{1}}{[\Lookup{(\trans[\vec x]{M_1}@[\Lookup{\vec\sigma}])},\Lookup{(\trans[\vec x]{M_2}@[\Lookup{\vec\sigma}])}]},&\textrm{by Lemma~\ref{lem:existenceofeams}\eqref{lem:existenceofeams2}},\\
		&\reddh^{2}& \trans{M_1^\sigma} @[\Lookup{\trans{M_2^\sigma}}],&\textrm{by Lemma~\ref{lem:existenceofeams}\eqref{lem:existenceofeams1}}.\\
	\ear
	\]
	On the other side, we get $\trans{M_1^\sigma\cdot M_2^\sigma} = \mProj{1}{1}@[\Lookup{\mProj{1}{1}},\Lookup{\trans{M_1^\sigma}},\Lookup{\trans{M_2^\sigma}}]\reddh^{4} \trans{M_1^\sigma}@[\Lookup{\trans{M_2^\sigma}}]$ by applying Lemma~\ref{lem:existenceofeams}\eqref{lem:existenceofeams1} (twice). 
	Conclude since $10n+4 > 4$ when $n>0$.
	\item Case $M = \pred \mathbf{0}$ and $N=\mathbf{0}$. 
	By Lemma~\ref{lem:existenceofeams}\eqref{lem:existenceofeams1} and \eqref{lem:existenceofeams3}, we have
	\[
	\bar{rl}
	\trans{\pred \mathbf{0}} =&\appT{\mProj{1}{1}}{[\Lookup{\mPred},0]}\reddh^2\appT{\mPred}{[0]}\\\redh& \Tuple{R_0 = 0, \Pred 00;\Call 0,[]}\\
	\redh&\Tuple{R_0 = 0, \Call 0,[]} \redh \mach{0}=\trans{ \mathbf{0}}
	\ear
	\]
	\item Case $M = \pred(\succ \num n)$ and $N=\num n$.
	By applying Lemma~\ref{lem:existenceofeams}\eqref{lem:existenceofeams1} and \eqref{lem:existenceofeams4}, we obtain $\trans{\succ \num n} = \appT{\mProj{1}{1}}{[\Lookup{\mSucc},n]} \reddh^2 \appT{\mSucc}{[n]}\redh$\\$\Tuple{R_0 = n, \Succ 00;\Call 0,[]}\redh\Tuple{R_0 = n+1,\Call 0,[]}\redh \mach{n+1}$.\\
	Using Lemma~\ref{lem:existenceofeams}\eqref{lem:existenceofeams3} as well, we conclude
	\[
	\bar{rl}	
	\trans{\pred(\succ \num n)} = &\appT{\mProj{1}{1}}{[\Lookup{\mPred},\Lookup{\trans{\succ \num n}}]} \reddh^2 \appT{\mPred}{[\Lookup{\trans{\succ \num n}}]}\\
	\redh&\Tuple{R_0 = \Lookup{\trans{\succ \num n}, \Pred 00;\Call 0,[]}}\\
	 \reddh& \Tuple{R_0 = n+1, \Pred 00;\Call 0,[]}\\
	 \reddh& \Tuple{R_0 = n, \Call 0,[]}	\redh \mach{n}=\trans{\num n}
	\ear
	\]
	\item Case $M = \fix M'$ and $N = M'(\fix M')$. In this case, we get
	\[
	\bar{lcll}
	\trans{\fix M'} &=&\appT{\mY}{[\Lookup{\trans{M}}]},&\textrm{by Definition~\ref{def:trans}},\\
	&\reddh^5&\appT{\trans{M'}}{[\Lookup{\appT{\mY}{[\Lookup{\trans{M'}}]}}]},&\textrm{by Lemma~\ref{lem:existenceofeams}\eqref{lem:existenceofeams6}},\\
	&=&\appT{\trans{M'}}{[\Lookup{\trans{\fix M'}}]},&\textrm{by Definition~\ref{def:trans}},\\
	&{}^2_\mach{c}\!\!\twoheadleftarrow&\appT{\mProj{1}{1}}{[\Lookup{\trans{M'}},\Lookup{\trans{\fix M'}}]},&\textrm{by Lemma~\ref{lem:existenceofeams}\eqref{lem:existenceofeams1}},\\
	&{}^2_\mach{c}\!\!\twoheadleftarrow&\appT{\mProj{1}{1}}{[\Lookup{\mProj{1}{1}},\Lookup{\trans{M'}},\Lookup{\trans{\fix M'}}]},&\textrm{by Lemma~\ref{lem:existenceofeams}\eqref{lem:existenceofeams1}},\\	
	&=&\trans{M'(\fix M')},&\textrm{by Definition~\ref{def:trans}}.\\
	\ear
	\]
	\item Case $M = (\ifterm{L}{M_1}{M_2})^\sigma$ and $N = \ifterm{L^\sigma}{M_1^\sigma}{M_2^\sigma}$. Note that $M\in\Prog{\E}$ entails $L^\sigma,M_1^\sigma,M_2^\sigma$ closed, thus $\FV{L}\subseteq\set{\vec x}$ and $\FV{M_i}\subseteq\set{\vec x}$.  By Lemma~\ref{lem:existenceofeams}\eqref{lem:existenceofeams2}, we get
	\[
	\bar{rcl}
	\trans{(\ifterm{L}{M_1}{M_2})^\sigma}&=&\mAppn{n}{3}@[\Lookup{\mIfz},\Lookup{\trans[\vec x]{L}},\Lookup{\trans[\vec x]{M_1}},\Lookup{\trans[\vec x]{M_2}},\Lookup{\vec\sigma}]\\
	&\reddh^{13n+2}& \appT{\mIfz}{[\Lookup{\trans{L^\sigma}},\Lookup{\trans{M_1^\sigma}},\Lookup{\trans{M_2^\sigma}}]}\\
	&{}^2_\mach{c}\!\!\twoheadleftarrow&\mProj{1}{1}@[\Lookup{\mIfz},\Lookup{\trans{L^\sigma}},\Lookup{\trans{M_1^\sigma}},\Lookup{\trans{M_2^\sigma}}]\\&=&\trans{\ifterm {L^\sigma}{M_1^\sigma}{M_1^\sigma}}
	\ear
	\]
	Conclude since $13n+2 > 2$ when $n>0$.
	\item Case $M = (\pred M')^\sigma$ and $N = \pred{((M')^\sigma)}$. Analogous to the above.
\item Case $M = (\succ M')^\sigma$ and $N = \succ{((M')^\sigma)}$. Analogous to the above.
\item Case $M = (\fix M')^\sigma$ and $N = \trans{\fix ((M')^\sigma)}$. By Lemma~\ref{lem:existenceofeams}\eqref{lem:existenceofeams2}, we obtain $\trans{(\fix M')^\sigma} =  \appT{\mAppn{n}{1}}{[\mY,\Lookup{\trans[\vec x]{M'}},\Lookup{\vec\sigma}]}\reddh^{7n + 2} \appT{\mY}{\left[\Lookup{\trans{(M')^\sigma}}\right]} = \trans{\fix (M'^\sigma)}$.
	\end{itemize}	
\item Case $M = M_1 \cdot M_2$ and $N = M_1' \cdot M_2$, where $M_1 \redwh M'_1$. By IH $\trans{M_1}\convg \trans{M'_1}$, i.e.\ there exists an EAM $\mach{X}$ such that $\trans{M_1} \reddh^k \mach X$ and $\trans{M_1'}\reddh^{k'} \mach X$, for some $k > k'$. Then we get 
\[
\bar{rcll}
\trans{M_1\cdot M_2}& =& \mProj{1}{1}@[\Lookup{\mProj{1}{1}},\Lookup{\trans{M_1}},\Lookup{\trans{M_2}}],&\textrm{by Definitions~\ref{def:trans} and~\ref{def:fixedmachines}},\\
&\reddh^{4}& \trans{M_1}@[\Lookup{\trans{M_2}}],&\textrm{by Lemma~\ref{lem:existenceofeams}\eqref{lem:existenceofeams2}}\\
&\reddh^{k}& \mach {X}@[\Lookup{\trans{M_2}}]\\
&{}^{k'}_\mach{c}\!\!\twoheadleftarrow&\trans{M_1'}@[\Lookup{\trans{M_2}}]\\
&{}^{4}_\mach{c}\!\!\twoheadleftarrow&\mProj{1}{1}@[\Lookup{\mProj{1}{1}},\Lookup{\trans{M_1'}},\Lookup{\trans{M_2}}],&\textrm{by Lemma~\ref{lem:existenceofeams}\eqref{lem:existenceofeams2}},\\
&=&\trans{M_1\cdot M_2},&\textrm{by Definition~\ref{def:trans}}.
\ear
\]
\item Case $M = \ifterm L{M_1}{M_2}$ and $N = \ifterm {L'}{M_1}{M_2}$, where $L \redwh L'$. By IH $\trans{L}\convg\trans{L'}$, i.e.\ there is an EAM $\mach{X}$ such that $\trans{L} \reddh^k \mach X$ and $\trans{L'}\reddh^{k'} \mach X$ for some $k > k'$. Then we get 
\[
\bar{rlll}
&\mathrlap{\trans{\ifterm L{M_1}{M_2}}}\\& =& \mProj{1}{1}@[\Lookup{\mIfz},\Lookup{\trans{L}},\Lookup{\trans{M_1}},\Lookup{\trans{M_2}}],&\textrm{by Def.~\ref{def:trans} and~\ref{def:fixedmachines}},\\[1ex]
&\reddh^{2}& \mIfz@[\Lookup{\trans{L}},\Lookup{\trans{M_1}},\Lookup{\trans{M_2}}],&\textrm{by Lemma~\ref{lem:existenceofeams}\eqref{lem:existenceofeams2}}\\[1ex]
&\reddh^{3}& \Tuple{R_0 = \Lookup{\trans{L}{}{}},R_1=\Lookup{\trans{M_1}{}{}},R_2=\Lookup{\trans{M_2}{}{}},\\ \Ifz 0120; \Call 0,[]},&\textrm{by Lemma~\ref{lem:existenceofeams}\eqref{lem:existenceofeams5}}\\[1ex]
&\reddh^{k}& \Tuple{R_0 = \Lookup{\mach X},R_1=\Lookup{\trans{M_1}{}{}},R_2=\Lookup{\trans{M_2}{}{}},\\ \Ifz 0120; \Call 0,[]},&\\[1ex]
&{}^{k'}_\mach{c}\!\!\twoheadleftarrow&\Tuple{R_0 = \Lookup{\trans{L'}{}{}},R_1=\Lookup{\trans{M_1}{}{}},R_2=\Lookup{\trans{M_2}{}{}},\\ \Ifz 0120; \Call 0,[]}\\[1ex]
&{}^{3}_\mach{c}\!\!\twoheadleftarrow&\mIfz@[\Lookup{\trans{L}},\Lookup{\trans{M_1}},\Lookup{\trans{M_2}}],&\textrm{by Lemma~\ref{lem:existenceofeams}\eqref{lem:existenceofeams5}},\\[1ex]
&{}^{2}_\mach{c}\!\!\twoheadleftarrow&\mProj{1}{1}@[\Lookup{\mProj{1}{1}},\Lookup{\trans{M_1'}},\Lookup{\trans{M_2}}],&\textrm{by Lemma~\ref{lem:existenceofeams}\eqref{lem:existenceofeams2}},\\[1ex]
&=&\trans{M_1'\cdot M_2},&\textrm{by Definition~\ref{def:trans}}.
\ear
\]
\item  Case $M = \pred M'$ and $N = \pred N'$, where $M' \redwh N'$. Analogous.
\item  Case $M = \succ M'$ and $N = \succ N'$, where $M' \redwh N'$. Analogous.\qedhere
\end{itemize}
\end{proof}

\subsection{Omitted proofs from Section~\ref{sec:Model}}\label{app:model}

\begin{proof}[Proof of Lemma~\ref{lem:eamstr}.]
By induction on a derivation of $x_1:\alpha_1,\dots,x_n : \alpha_n \vdash M : \beta$. 
As a matter of notation, we let $\Gamma = x_1:\alpha_1,\dots,x_n : \alpha_n$ and introduce the abbreviations 
\[
	\bar{lclclcl}
	\vec{a} &=& a_1,\dots,a_n;&&\vec{x} &=& x_1,\dots,x_n;\\ 
 	\vec a^- &=& a_1,\dots,a_{i-1},a_{i+1},\dots,a_n;&&	\vec x^- &=& x_1,\dots,x_{i-1},x_{i+1},\dots,x_n.\\
	\ear
\]
 
\begin{itemize}
\item Case $\Gamma \vdash \num 0 : \tint$. We get
\[
\bar{rcll}
\trans[\vec x]{\num 0}@[\vec a] &=& \appT{\mProj{n+1}{1}}{[0, \vec{a}]},&\textrm{by Definition~\ref{def:trans}},\\
&\reddh& \Lookinv{0},&\textrm{by Lemma~\ref{lem:existenceofeams}\eqref{lem:existenceofeams1}},\\
&{}_\mach{c}\!\!\twoheadleftarrow&\appT{\mProj{n}{1}}{[0, \vec a^-]},&\textrm{by Lemma~\ref{lem:existenceofeams}\eqref{lem:existenceofeams1}},\\
&=&\trans[\vec x^-]{\num 0}@[\vec a^-],&\textrm{by Definition~\ref{def:trans}}.\\
\ear
\]
Conclude by Lemma~\ref{lem:equivprops}\eqref{lem:equivconvh}.
\item Case $\Gamma, y:\beta\vdash y:\beta$. 
Let $b \in \cD_{\beta}$. 
By Definition~\ref{def:trans} and Lemma~\ref{lem:existenceofeams}\eqref{lem:existenceofeams1}, we have 
$\appT{\trans[\vec x,y]{y}}{[\vec a, b]} = \appT{\mProj{n+1}{n+1}}{[\vec a, b]}\reddh \Lookinv{b}$ and $\trans[\vec x^-,y]{y}@[\vec a^-, b] = \appT{\mProj{n}{n}}{[\vec a^-, b]}\reddh \Lookinv{b}$. Conclude by Lemma~\ref{lem:equivprops}\eqref{lem:equivconvh}.
\item Case $\Gamma \vdash M \cdot N : \beta$ since $\Gamma \vdash M : \alpha\to\beta$ and $\Gamma\vdash N : \alpha$.  Then, we have
\[
\bar{rcll}
\appT{\trans[\vec x]{M\cdot N}}{[\vec a]} &=& \appT{\mAppn{n}{2}}{[\Lookup{\mProj{1}{1}},\Lookup{\trans[\vec{x}]{M}},\Lookup {\trans[\vec  {x}]{N}}, \vec{a}]},&\textrm{by Definition~\ref{def:trans}},\\
&\reddh& \appT{\trans[\vec{x}]{M}}{[\vec{a}, \Lookup {(\appT{\trans[\vec{x}]{N}}{[\vec a]})}]},&\textrm{by Lemma~\ref{lem:existenceofeams}\eqref{lem:existenceofeams2}}.\\
\ear
\]

By IH, we have $\appT{\trans[\vec{x}]{N}}{[\vec a]} \equiv_\alpha\! \appT{\trans[\vec x^-]{N}}{[\vec a^-]}$ and \\
$\appT{\trans[\vec{x}]{M}}{[\vec a]} \equiv_{\alpha\to\beta}\! \appT{\trans[\vec x^-]{M}}{[\vec a^-]}$, so by Lemma~\ref{lem:equivprops}\eqref{lem:equivbinrel} (reflexivity), we derive \\$\appT{\trans[\vec{x}]{M}}{[\vec{a}, \Lookup {(\trans[\vec{x}]{N}@[\vec a])}]} \equiv_\beta\appT{\trans[\vec{x}]{M}}{[\vec{a}, \Lookup {(\trans[\vec x^-]{N}@[\vec a^-])}]}$. Conclude by Lemmas~\ref{lem:equivprops}\eqref{lem:equivbinrel}-\eqref{lem:equivconvh} and Definition~\ref{def:trans}.
\item Case $\Gamma \vdash M\langle N/z \rangle :\beta$ with $\Gamma,z:\alpha\vdash M : \beta$ and $\vdash N : \alpha$. By Definition~\ref{def:trans} we get $\trans[\vec{x}]{M\langle N/z\rangle}@[\vec a]=  \appT{\trans[z,\vec{x}]{M}}{[\Lookup{\trans{N}},\vec a]}$. Conclude by applying the IH.
\item Case $\Gamma \vdash \lam z.M:\gamma\to\delta$ since $\Gamma,z:\gamma\vdash M : \delta$. By Definition~\ref{def:trans} we get $\trans[\vec{x}]{\lambda z.M}=\trans[\vec{x},z]{M}$. 
This case follows straightforwardly from the IH.
\item Case $\Gamma \vdash \succ M : \tint$, since $\Gamma \vdash M : \tint$. We get
\[
\bar{rcll}
\appT{\trans[\vec x]{\succ M}}{[\vec a]} &=& \appT{\mAppn{n}{1}}{[\Lookup\mSucc,\Lookup{\trans[\vec{x}]{M}}, \vec{a}]},&\textrm{by Definition~\ref{def:trans}},\\
&\reddh& \appT{\mSucc}{[\Lookup{(\appT{\trans[\vec{x}]{M}}{[\vec a]})}]},&\textrm{by Lemma~\ref{lem:existenceofeams}\eqref{lem:existenceofeams2}}.\\
\ear
\]
From the IH we obtain $\appT{\trans[\vec{x}]{M}}{[\vec a]} \equiv_{\tint} \appT{\trans[\vec x^-]{M}}{[\vec a^-]}$, so by Lemma~\ref{lem:equivprops}\eqref{lem:equivbinrel} (reflexivity), \[\appT{\mSucc}{[\Lookup{(\trans[\vec{x}]{M}@[\vec a])}]} \equiv_\tint\appT{\mSucc}{[\Lookup{(\trans[\vec x^-]{M}@[\vec a^-])}]}\] 
Conclude by Lemmas~\ref{lem:equivprops}-\eqref{lem:equivbinrel}\eqref{lem:equivconvh} and Definition~\ref{def:trans}.
\item Case $\Gamma \vdash \pred M:\tint$. Analogous.
\item Case $\Gamma \vdash \ifterm LMN:\beta$. Analogous.
\item Case $\Gamma \vdash \fix M:\beta$ since $\Gamma\vdash M : \beta\to\beta$. Then,
\[
\bar{rcll}
\appT{\trans[\vec x]{\fix M}}{[\vec a]} &=& \appT{\mAppn{n}{1}}{[\Lookup\mY,\Lookup{\trans[\vec{x}]{M}}, \vec{a}]},&\textrm{by Definition~\ref{def:trans}},\\
&\reddh& \appT{\mY}{[\Lookup{(\appT{\trans[\vec{x}]{M}}{[\vec a]})}]},&\textrm{by Lemma~\ref{lem:existenceofeams}\eqref{lem:existenceofeams2}}.\\
\ear
\]

By IH, we have $\appT{\trans[\vec{x}]{M}}{[\vec a]}\equiv_{\beta\to\beta}\! \appT{\trans[\vec x^-]{M}}{[\vec a^-]}$, so by Lemma~\ref{lem:equivprops}\eqref{lem:equivbinrel}(reflexivity\!) we get 
\[
	\appT{\mY}{[\Lookup{(\appT{\trans[\vec{x}]{M}}{[\vec a]})}]} \equiv_\beta \appT{\mY}{[\Lookup{(\appT{\trans[\vec x^-]{M}}{[\vec a^-]})}]}
\] 
Conclude by Definition~\ref{def:trans}, applying Lemma~\ref{lem:equivprops}\eqref{lem:equivbinrel}\eqref{lem:equivconvh} in case $n>1$ and Lemma~\ref{lem:equivprops}\eqref{lem:equivbinrel} when $n=1$.\qedhere
\end{itemize}
\end{proof}

\begin{proof}[Proof of Proposition~\ref{prop:revtranstyping}.]
Both (1) and (2) follow by mutual induction on a derivation of $\mM:\alpha$ and $\Delta\Vdash^r (P,T) : \alpha$ and call IH1, IH2 the respective induction hypothesis.

As a matter of notation, if $\Delta = {i_1}:\alpha_{i_1},\dots,{i_k}:\alpha_{i_k}$ we let $\Delta^* = x_{i_1}:\alpha_{i_1},\dots,x_{i_k}:\alpha_{i_k}$.
\bsub
\item Case $(\mathrm{nat})$. Then $\mM = \mach{n}$ and $\alpha = \tint$. Conclude since $\rtrans[]{\mach{n}}{\tint} = \num n$ and $\vdash \num n : \tint$ holds.

 Case $(\mathrm{fix})$. Then $\mM = \mach{Y}$ and $\alpha = (\beta\to\beta)\to\beta$. 
We type $\rtrans{\mY}{\alpha} = \lam x.\fix x$ as follows:
\[
	\infer[(\to_\mathrm{I})]{\vdash \lambda x.\fix x : (\beta\rightarrow\beta)\rightarrow\beta}{\infer[(\mathrm{Y})]{x : \beta \rightarrow \beta \vdash \fix x : \beta}{\infer[(\mathrm{ax})]{x : \beta \rightarrow \beta \vdash x: \beta \rightarrow \beta}{}}}
\]

Case $(\vec R)$. Then $\mM = \tuple{R_0,\dots,R_r,P,T}$ with $\vec R\models \Delta$ and $\Delta\Vdash^r (P,T) : \alpha$, for some $\Delta = {i_1}:\alpha_{i_1},\dots,{i_k}:\alpha_{i_k}$.
From the former condition, by rules $(R_\Null)$ and $(R_\Types)$, we get a derivation of $\Lookinv{\val{R_{i_j}}} : \alpha_{i_j}$ having smaller size, for all $1\le j \le k$. By applying IH1, we obtain a derivation of $\vdash\rtrans[]{\Lookinv{!R_{i_j}}}{\alpha_{i_j}} : \alpha_{i_j}$.
From the latter condition and IH2, we have $\Delta^*\vdash\rtrans[\Delta]{P,T}{\alpha} : \alpha$. Therefore, we construct a derivation
\[
\infer=[]{
	\vdash(\lambda x_{i_1}\dots x_{i_k}.\rtrans[\Delta]{P,T}{\alpha})\cdot\rtrans[]{\Lookinv{!R_{i_1}}}{\beta_{i_1}}\cdots\rtrans[]{\Lookinv{!R_{i_k}}}{\beta_{i_k}}:\alpha
	}{\infer=[]{\vdash \lambda x_{i_1}\dots x_{i_k}.\rtrans[\Delta]{P,T}{\alpha} : \alpha_{i_1}\rightarrow\cdots\rightarrow\alpha_{i_k}\rightarrow\alpha}{\Delta^*\vdash\rtrans[\Delta]{P,T}{\alpha} : \alpha} 
	&
	\vdash \rtrans[]{\Lookinv{!R_{i_j}}}{\alpha_{i_j}} : \alpha_{i_j}
	& 1\!\le\! j \!\le\! k}
\]

\item Case $(\mathrm{load_{\Null}})$. Then $P = \Load j;P'$, $T = []$, $\alpha = \beta_1\to\beta_2$ and $\Delta[j : \beta_1] \Vdash^r (P',[]) : \beta_2$ has a derivation of smaller size. There are two subcases.
	\begin{itemize}
	\item Case $j\notin\dom(\Delta)$, whence $\Delta[j : \beta_1] = \Delta,j : \beta_1$.
	By IH2 we get $\Delta^*\vdash \rtrans[{\Delta,j : \beta_1}]{P',[]}{\beta_2} : \beta_2$. Simply apply rule $(\to_{\mathrm{I}})$.
	\item Case $j\in\dom(\Delta)$, say, $j = i_k$. From the IH2 we obtain $\Gamma,x_{i_k} : \beta_1\vdash \rtrans[{\Delta[i_k : \beta_1]}]{P',[]}{\beta_2} : \beta_2$ for $\Gamma = x_{i_1} : \alpha_{i_1},\dots,x_{i_{k-1}}:\alpha_{i_{k-1}}$.
	By applying the rule $(\to_{\mathrm{I}})$, we obtain a derivation of $\Gamma\vdash \lam x_{i_k}.\rtrans[{\Delta[i_k : \beta_1]}]{P',[]}{\beta_2} : \beta_1\to\beta_2$ whence, by strengthening (Lemma~\ref{lem:epcftyping}\eqref{lem:epcftyping2}), we conclude 
	$\Gamma,x_{i_k} : \alpha_{i_k}\vdash \lam x_{i_k}.\rtrans[{\Delta[i_k : \beta_1]}]{P',[]}{\beta_2} : \beta_1\to\beta_2$.
	\end{itemize}
\noindent 
Case $(\mathrm{load_{\Types}})$. Then $P = \Load j;P'$, $T = \Cons a{T'}$. Moreover, $\Delta[j : \beta] \Vdash^r (P',T') : \alpha$ and $\Lookinv{a} : \beta$ have a derivation of smaller size, for some $\beta$.
From the former, one obtains a derivation of $\Delta^*\vdash \lambda x_j.\rtrans[{\Delta[j:\beta]}]{ P',T'}{\alpha}:\beta\rightarrow\alpha$ proceeding as above.
By the IH1 applied to the latter, we obtain a derivation of $\vdash\rtrans[]{\Lookinv{a}}{\beta}:\beta$, whence $\Delta^*\vdash\rtrans[]{\Lookinv{a}}{\beta}:\beta$ holds by strengthening (Lemma~\ref{lem:epcftyping}\eqref{lem:epcftyping2}). By applying rule $(\to_{\mathrm{E}})$, we conclude that $\Delta^*\vdash\big(\lambda x_j.\rtrans[{\Delta[j:\beta]}]{P',T'}{\alpha}\big)\cdot\rtrans[]{\Lookinv{a}}{\beta}:\alpha$ is derivable.

Case $(\mathrm{pred})$. Then $P = \Pred i j;P'$ and $(\Delta, i : \tint)[j : \tint] \Vdash^r (P',T') : \alpha$ has a smaller derivation. We assume $j\notin\dom(\Delta,i:\tint)$, otherwise proceed as in case $(\mathrm{load_{\Null}})$.
By IH2, we obtain a derivation of $\Delta^*,x_i : \tint,x_j : \tint\vdash \rtrans[{\Delta,i:\tint,j:\tint}]{(P',T')}{\alpha} : \alpha$, thus:
\begin{minipage}{\linewidth} 
\[
\infer{\Delta^*, x_i\! :\! \tint \vdash \ifc{x_i}{(\lam x_j.\rtrans[{\Delta,i:\tint,j:\tint}]{P',T'}{\alpha})\cdot (\pred x_i)} : \alpha}{
	\Delta^*, x_i : \tint \vdash x_i : \tint
	&\!\!\!\!\!\!\!\!\!
	\infer{\Delta^*, x_i \vdash(\lambda x_j.\rtrans[{\Delta,i:\tint,j:\tint}]{P',T'}{\alpha})\cdot (\pred x_i): \alpha}{
		\infer{\Delta^*, x_i : \tint \vdash \lambda x_j.\rtrans[{\Delta,i:\tint,j:\tint}]{P',T'}{\alpha} : \alpha}{
			\Delta^*, x_i:\tint, x_j:\tint\vdash\rtrans[{\Delta,i:\tint,j:\tint}]{P',T'}{\alpha}:\alpha
			}
			&
			\infer{\Delta^*, x_i : \tint \vdash \pred x_i : \tint}{
			\Delta^*, x_i : \tint \vdash x_i : \tint
			}
	}
	}
\]
\end{minipage}
Case $(\mathrm{succ})$. Analogous.

Case $(\mathrm{call})$. Then $P = \Call i$ and $T = [a_1,\dots,a_n]$ with $\Lookinv{a_j} : \beta_j$, for $j\,(1\le j \le n)$. 
Call $\Gamma = \Delta^*,x_i : \beta_1\to\cdots\to\beta_n\to\alpha$.
By IH1, we get $\vdash \rtrans[]{\Lookinv{a_j}}{\beta_j} : \beta_j$ whence $\Gamma\vdash\rtrans[]{\Lookinv{a}}{\beta}:\beta$ holds by strengthening (Lemma~\ref{lem:epcftyping}\eqref{lem:epcftyping2}). Derive
\[
\infer={\Gamma\vdash x_i\cdot \rtrans []{\Lookinv{a_1}}{\beta_1} \cdots \rtrans []{\Lookinv{a_n}}{\beta_n} : \alpha}{
	\Gamma\vdash x_i : \beta_1\!\rightarrow\cdots\rightarrow\beta_n\!\rightarrow\alpha
	&
	\Gamma\vdash \rtrans []{\Lookinv{a_1}}{\beta_1}\!:\beta_1
	\cdots
	\Gamma\vdash \rtrans []{\Lookinv{a_n}}{\beta_n}\!:\beta_n
	}
\]
\item Case $(\mathrm{app})$. Then $P = \Apply i j l;P'$ and $(\Delta,i:\beta_1{\to}\beta_2,j:\beta_1)[l : \beta_2] {\Vdash^r}(P',T) : \alpha$ has a smaller derivation, for some $\beta_1,\beta_2$. Assume that $l\notin\dom(\Delta,i:\beta_1{\to}\beta_2,j:\beta_1)$, otherwise proceed as in case $(\mathrm{load_{\Null}})$.
Setting $\Gamma' = \Delta^*, x_i:\beta_1\to\beta_2,x_j:\beta_1$, we get:
\[
	\infer{\Gamma\vdash (\lambda x_l.\rtrans[{\Delta, i:\beta_1\rightarrow\beta_2,j:\beta_1,l:\beta_2}]{P',T}{\alpha})\cdot(x_i \cdot x_j) : \alpha
	}{
		\infer{\Gamma\vdash \lambda x_l.\rtrans[{\Delta, i:\beta_1\rightarrow\beta_2,j:\beta_1,l:\beta_2}]{P',T}{\alpha} : \beta_2 \rightarrow \alpha}{\Gamma, x_l : \beta_2\vdash \rtrans[{\Delta, i:\beta_1\rightarrow\beta_2,j:\beta_1,l:\beta_2}]{P',T}{\alpha} : \alpha}
		&
			\infer{
	\Gamma\vdash x_i \cdot x_j :\beta_2
	}{
		\Gamma \vdash x_i : \beta_1 \rightarrow\beta_2
		&
		\Gamma\vdash x_j : \beta_1
	}
	}
\]

Case $(\mathrm{test})$. Then $P =\Ifz i j l m;P' $ and $(\Delta,i : \tint,j:\beta,l:\beta)[m:\beta] \Vdash^r (P,T) : \alpha$ has a smaller derivation, for some $\beta$. Assume $m\notin\dom(\Delta,i : \tint,j:\beta,l:\beta)$, otherwise proceed as in case $(\mathrm{load_{\Null}})$. Let $\Gamma = \Delta^*,x_i : \tint,x_j:\beta,x_l:\beta,x_m:\beta$, we get:

\[
\infer{\Gamma\vdash \ifc{x_i}{(\lambda x_m.\rtrans[{\Delta,i:\tint,j:\beta,l:\beta,m:\beta}]{ P',T}{\alpha})\cdot \ifterm {x_i}{x_j}{x_l}}:\alpha
	}{
		\Gamma\vdash x_i:\tint
		&
		\infer{\Gamma\vdash (\lambda x_m.\rtrans[{\Delta,i:\tint,j:\beta,l:\beta,m:\beta}]{ P',T}{\alpha})\cdot \ifterm {x_i}{x_j}{x_l}:\alpha
		}{
			\infer{\Gamma\vdash\lambda x_m.\rtrans[{\Delta,i:\tint,j:\beta,l:\beta,m:\beta}]{ P',T}{\alpha}:\beta\rightarrow\alpha
			}{
				\Gamma, x_m : \beta \vdash \rtrans[{\Delta,i:\tint,j:\beta,l:\beta,m:\beta}]{ P',T}{\alpha}:\alpha
				} 
				& 
				\infer{\Gamma\vdash \ifterm {x_i}{x_j}{x_l} : \beta
				}{
					\Gamma\vdash x_n : \tint, n \in \{i,j,k\}
				}
		}
	}
\]
\esub
This concludes the proof.
\end{proof}


\begin{proof}[Proof of Theorem~\ref{thm:transrevtrans}.]

Consider the following statements: 
\bsub
\item
	Let $\mM\in\cM$. If $\mM : \alpha$ then $\trans{\rtrans[]{\mM}{\alpha}}\equiv_\alpha \mM$.
\item 
	For all $i_1:\alpha_{i_1},\dots,i_n:\alpha_{i_n} \Vdash^r (P,T):\alpha$, $a_{i_1}\in \cD_{\alpha_{i_1}},\dots,a_{i_n}\in \cD_{\alpha_{i_n}},$ 
\[ 
	\appT{\trans[x_{i_1},\dots,x_{i_n}]{\rtrans[i_1:\alpha_{i_1},\dots,i_n:\alpha_{i_n}]{P,T}{\alpha}}}{[a_{i_1},\dots,a_{i_n}]} \equiv_\alpha \tuple{\vec{R}^r_{a_{i_1},\dots,a_{i_n}},P,T},
\]
where $\vec{R}^r_{a_{i_1},\dots,a_{i_n}}$ denotes the list of registers $R_0,\dots,R_{r}$ such that, for all $j\,(0 \leq j \leq r)$, $\val R_j = a_j$ if $j \in \set{i_1,\dots,i_n}$, and $\val R_j = \Null$ otherwise.
\esub
Both statements are proven using simultaneous induction on a derivation of $\mM : \alpha$ and $\Delta \Vdash^r (P,T):\alpha$. We refer to the former induction hypothesis as IH1 and to the latter as IH2. As a matter of notation, we let $\Delta = i_1:\beta_{i_1},\dots,i_n:\beta_{i_n}$, $\vec x = x_{i_1},\dots,x_{i_n}$, and $\vec a = a_{i_1},\dots,a_{i_n}$ such that for all $j \in \set{i_1,\dots,i_n},\,a_j \in \cD_{\beta_j}$.

\begin{itemize}
\item Case $\mach k : \tint$. We prove this case by induction on $k\in\nat$ (and call this $\mathrm{IH1}'$).
\begin{itemize}
\item Case $k = 0$: By Definition~\ref{def:trans} we get $\trans{\rtrans[]{\mach 0}{\tint}} = \trans{\num 0} = \mProj{1}{1}@[0]$.  Conclude by Lemmas~\ref{lem:existenceofeams}\eqref{lem:existenceofeams1} and~\ref{lem:equivprops}\eqref{lem:equivconvh}.
\item Case $k = m + 1$, for some $m\in\nat$. Then by Definitions~\ref{def:trans} and~\ref{def:revtrans}, we have 
\[\bar{rcl}
	\trans{\rtrans[]{\mach k}{\tint}} &=& \trans{\num{m+1}} = \trans{\succ \num m} \\&=& \appT{\mAppn{0}{1}}{[\Lookup{\mSucc}, \Lookup{\trans{\num m}}]} = \mProj{1}{1}@[\Lookup{\mSucc}, \Lookup{\trans{\num m}}].
	\ear
\] 
By $\mathrm{IH1}'$ we have $\trans{\num m} \equiv_\tint \mach m$, so $\mProj{1}{1}@[\Lookup{\mSucc}, \Lookup{\trans{\num m}}] \equiv_\tint \mProj{1}{1}@[\Lookup{\mSucc}, m]$, by reflexivity. Conclude by Lemma~\ref{lem:existenceofeams}\eqref{lem:existenceofeams1},\eqref{lem:existenceofeams4}, Lemma~\ref{lem:equivprops}\eqref{lem:equivconvh}, and transitivity.
\end{itemize}
\noindent 
\item Case $\mY : (\alpha\to\alpha)\to\alpha$. Let $a,b \in \cD_{\alpha\to\alpha}$ such that $a \simeq_{\alpha\to\alpha} b$. 
Since by Lemma~\ref{lem:welltypedeams}\eqref{lem:welltypedeams1} and Lemma~\ref{lem:equivprops}\eqref{lem:equivconvh} we have $\appT{\mProj{1}{1}}{[a]} \equiv_{\alpha\to\alpha} \Lookinv{a}$, we obtain
\[
\bar{rcll}
\appT{\trans{\rtrans[]{\mY{}}{(\alpha\to\alpha)\to\alpha}}}{[a]} &=&\appT{\trans{\lam x.\fix x}}{[a]},&\textrm{by Definition~\ref{def:revtrans}},\\
&=&\appT{\mAppn{1}{1}}{ [\Lookup{\mY},\Lookup{\mProj{1}{1}},a]},&\textrm{by Definition~\ref{def:trans}},\\
&\reddh&\appT{\Lookinv{a}}{[\Lookup{\mY}\cdot (\Lookup{\mProj{1}{1}}\cdot a)]},&\textrm{by Lemma~\ref{lem:existenceofeams}\eqref{lem:existenceofeams6}},\\
&\equiv_\alpha& \appT{\Lookinv{a}}{[\Lookup{\mY}\cdot a]},&\textrm{by Lemma~\ref{lem:equivprops}\eqref{lem:equivtape}},\\
&\equiv_\alpha& \appT{\Lookinv{a}}{[\Lookup{\mY}\cdot b]},&\textrm{by reflexivity},\\
&\equiv_\alpha& \appT{\Lookinv{b}}{[\Lookup{\mY}\cdot b]},&\textrm{by Definition~\ref{def:model}},\\
&{}_\mach{c}\!\!\twoheadleftarrow&\appT{\mY}{[b]},&\textrm{by Lemma~\ref{lem:existenceofeams}\eqref{lem:existenceofeams6}.}
\ear
\]
Conclude by Lemma~\ref{lem:equivprops}\eqref{lem:equivconvh} and transitivity.
\item Case $\tuple{\vec R, P, T}:\alpha$. Let $\vec R \models \Delta$. By Definition~\ref{def:revtrans}, Lemma~\ref{lem:existenceofeams}, and Definition~\ref{def:trans} we get 
\[
\bar[t]{rcl}
\trans{\rtrans[]{\langle \vec R, P, T \rangle}{\alpha}}&=&\trans{(\lambda x_{i_1}\dots x_{i_n}.\rtrans[\Delta]{P,T}{\alpha})\cdot\rtrans[]{\Lookinv{!R_{i_1}}}{\beta_{i_1}}\cdots\rtrans[]{\Lookinv{!R_{i_n}}}{\beta_{i_n}}}  \\
&=& \appT{\trans[\vec x]{\rtrans[\Delta]{P,T}{\alpha}}}{[\Lookup{\trans{\rtrans[]{\Lookinv{!R_{i_1}}}{\beta_{i_1}}}},\dots,\Lookup{\trans{\rtrans[]{\Lookinv{!R_{i_n}}}{\beta_{i_n}}}}]}\\
\ear
\]
By IH1, for all $k\in \dom(\Delta)$, we have $\Lookup{\trans{\rtrans[]{\Lookinv{!R_{k}}}{\beta_{k}}}}\simeq_{\alpha}!R_k$. Then by reflexivity, \[\bar{l}\appT{\trans[\vec x]{\rtrans[\Delta]{P,T}{\alpha}} }{ [\Lookup{\trans{\rtrans[]{\Lookinv{!R_{i_1}}}{\beta_{i_1}}}},\dots,\Lookup{\trans{\rtrans[]{\Lookinv{!R_{i_n}}}{\beta_{i_n}}}}]}\\\equiv_\alpha \appT{\trans[\vec x]{\rtrans[\Delta]{P,T}{\alpha}}}{[!R_{i_1},\dots,!R_{i_n}]}\ear\] Conclude by IH2, Lemma~\ref{lem:equivprops}\eqref{lem:equivconvh}, and transitivity.
\item Case $\Delta\Vdash^r (\Load k;P,[]):\beta\to\alpha$. There are two subcases.
\begin{itemize}
\item Subcase $k \notin \dom(\Delta)$. Let $b,c \in \cD_{\beta}$ such that $b \simeq_\beta c$. We get 
\[
\bar{rlll}
&\mathrlap{\appT{\trans[\vec x]{\rtrans[\Delta]{\Load k;P,[]}{\beta\to\alpha}}}{[\vec a, b]}} \\&=& 
\appT{\trans[\vec x]{\lam x_k.\rtrans[\Delta,k:\beta]{P,[]}{\alpha}}}{[\vec a, b]},&\textrm{by Definition~\ref{def:revtrans}},\\
 &=& 
 \appT{\trans[\vec x,x_k]{\rtrans[{\Delta,k:\beta}]{P,[]}{\alpha}}}{[\vec a, b]},&\textrm{by Definition~\ref{def:trans}},\\
&\equiv_\alpha& 
\tuple{\vec{R}^r_{\vec a,b},P,[]},&\textrm{by IH2},\\
&{}_\mach{c}\!\!\leftarrow&\tuple{\vec{R}^r_{\vec a},\Load k;P,[b]},&\textrm{by Definition~\ref{def:eamred}},\\
&\equiv_\alpha&\tuple{\vec{R}^r_{\vec a},\Load k;P,[c]},&\textrm{by reflexivity}.\\
\ear
\]
Conclude by Lemma~\ref{lem:equivprops}\eqref{lem:equivconvh} and transitivity.
\item Subcase $k \in\dom(\Delta)$. 
	Let $k = i_m$, and let $b,c\in \cD_{\beta}$ such that $b \simeq_\beta c$. 
	We also fix $\vec{x'} = x_{i_1},\dots,x_{i_{m-1}},x_{i_{m+1}},\dots,x_{i_n}$ and $\vec{a'} = a_{i_1},\dots,a_{i_{m-1}},a_{i_{m+1}},\dots,a_{i_n}$. We get 
\[
\bar{rlll}
&\mathrlap{\appT{\trans[\vec x]{\rtrans[\Delta]{\Load k;P,[]}{\beta\to\alpha}}}{[\vec a, b]}} \\&=& \appT{\trans[\vec x]{\lam x_k.\rtrans[\Delta{[k:\beta]}]{ P,[]}{\alpha}}}{[\vec a, b]},&\textrm{by Definition~\ref{def:revtrans}},\\
&\equiv_\alpha&\appT{\trans[\vec{x'}]{\lam x_{k}.\rtrans[\Delta{[k:\beta]}]{ P,[]}{\alpha}}}{[\vec{a'}, b]},&\textrm{by Lemma~\ref{lem:eamstr}},\\
 &=& \appT{\trans[\vec {x'},x_k]{\rtrans[{\Delta,k:\beta}]{ P,[]}{\alpha}}}{[\vec {a'}, b]},&\textrm{by Definition~\ref{def:trans}},\\
&\equiv_\alpha& \tuple{\vec{R}^r_{\vec {a'},b},P,[]},&\textrm{by IH2},\\
&{}_\mach{c}\!\!\twoheadleftarrow&\tuple{\vec{R}^r_{\vec {a}},\Load k;P,[b]},&\textrm{by Definition~\ref{def:eamred}},\\
&\equiv_\alpha&\tuple{\vec{R}^r_{\vec a},\Load k;P,[c]},&\textrm{by reflexivity}.\\
\ear
\]
Conclude by Lemma~\ref{lem:equivprops}\eqref{lem:equivconvh} and transitivity.
\end{itemize} 
\end{itemize}
\noindent 
In the cases following, we assume that $k \notin\dom(\Delta)$. If $k \in\dom(\Delta)$, one proceeds as above.
\begin{itemize}
\item Case $\Delta \Vdash^r (\Load k;P,b::T): \alpha$. By Definition~\ref{def:revtrans}, Lemma~\ref{lem:existenceofeams}, and Definition~\ref{def:trans}, we get 
\[
\bar{rll}
&\mathrlap{\appT{\trans[\vec x]{\rtrans[\Delta]{\Load k;P,b::T}{\alpha}}}{[\vec a]}} \\&=& \appT{\trans[\vec x]{(\lambda x_k.\rtrans[\Delta,k:\beta]{ P,T}{\alpha})\cdot\rtrans[]{\Lookinv{b}}{\beta}}}{[\vec a]}\\
&=&\appT{\mAppn{n}{2}}{[\Lookup{\mProj{1}{1}},\Lookup{\trans[\vec x,x_k]{\rtrans[\Delta,k:\beta]{ P,T}{\alpha}}}, \Lookup{\trans[\vec x]{\rtrans[]{\Lookinv{b}}{\beta}}}, \vec a]}\\
&\reddh&\appT{\trans[\vec x, x_k]{\rtrans[\Delta,k:\beta]{ P,T}{\alpha}}}{[\vec a, \Lookup{(\appT{\trans[\vec x]{\rtrans[]{\Lookinv{b}}{\beta}}}{[\vec a]})}]}.
\ear
\] 
By Proposition~\ref{prop:revtranstyping}, $\vdash\rtrans[]{\Lookinv{b}}{\beta} : \beta$, so by Corollary~\ref{cor:closedargs} and IH1, $\appT{\Lookup{\trans{\rtrans[]{\Lookinv{b}}{\beta}}}}{\vec a}\simeq_{\alpha} b$. By reflexivity, we obtain
\[
	\appT{\trans[\vec x, x_k]{\rtrans[\Delta,k:\beta]{ P,T}{\alpha}}}{[\vec a, \Lookup{(\appT{\trans[\vec x]{\rtrans[]{\Lookinv{b}}{\beta}}}{[\vec a]})}]}\equiv_\alpha \appT{\trans[\vec x, x_k]{\rtrans[\Delta,k:\beta]{ P,T}{\alpha}}}{[\vec a, b]}. 
\]
Conclude by IH2, Lemma~\ref{lem:equivprops}\eqref{lem:equivconvh}, and transitivity.
\item Case $x_{i_1}:\beta_{i_1},\dots,x_{i_m}:\tint,\dots,x_n:\beta_{i_n} \Vdash^r (\Pred {i_m}k;P,T) : \alpha$. Fix the notation $\mM = \trans[\vec x]{{(\lambda x_k.\rtrans[{\Delta,k:\tint}]{ P,T}{\alpha})\cdot (\pred x_{i_m})}}$. By Definition~\ref{def:revtrans} and Lemma~\ref{lem:existenceofeams} we get
\[
\bar{rll}
&\mathrlap{\appT{\trans[\vec x]{\rtrans[\Delta]{\Pred {i_m}k;P,T}{\alpha}}}{[\vec a]}} \\&=& \appT{\trans[\vec x]{\ifc{x_{i_m}}{(\lambda x_k.\rtrans[{\Delta,k:\tint}]{ P,T}{\alpha})\cdot (\pred x_{i_m})}}}{[\vec a]}\\
&=&\appT{\mAppn{n}{3}}{[\Lookup{\mIfz},\Lookup{\mProj{n}{i_m}}, \Lookup{\mM}, \Lookup{\mM},\vec a]}\\
&\reddh&\appT{\mIfz}{[\Lookup{(\appT{\mProj{n}{i_m}}{[\vec a]})}, \Lookup{(\appT{\mM}{[\vec a]})}, \Lookup{(\appT{\mM}{[\vec a]})},\vec a]}\\
&\equiv_\alpha& \appT{\mIfz}{[a_{i_m}, \Lookup{(\appT{\mM}{[\vec a]})}, \Lookup{(\appT{\mM}{[\vec a]})},\vec a]}.\\
\ear
\]
There are two subcases from this point.
\begin{itemize}
\item Subcase $\Lookinv{a_{i_m}}$ does not terminate. 
Then, by Definition~\ref{def:eamred}, we have that the machine $\appT{\mIfz}{[a_{i_m}, \Lookup{(\appT{\mM}{[\vec a]})}, \Lookup{(\appT{\mM}{[\vec a]})},\vec a]}$ and $\tuple{\vec{R}^r_{\vec a},\Pred kl; P,T}$ cannot terminate either. 
By Definition~\ref{def:EAMtypes}, we have $\tuple{\vec{R}^r_{\vec a},\Pred kl; P,T} : \alpha$. Conclude by Lemma~\ref{lem:equivprops}\eqref{lem:equivconvh} and transitivity.
\item Subcase $\Lookinv{a_{i_m}}\reddh \mach{t} = \Lookinv{t}$, for some $t \in \nat$. 
Now, easy calculations give $\appT{\trans[\vec x]{\rtrans{\pred x_{i_m}}{\tint}}}{[\vec a]} \reddh \mach{t}':= \Lookinv{t\ominus 1},$ from which $\appT{\trans[\vec x]{\rtrans{\pred x_{i_m}}{\tint}}}{[\vec a]} \equiv_\tint \mach{t}'$ follows by Lemma~\ref{lem:equivprops}\eqref{lem:equivconvh}. Then we get
\[
\bar{rcll}
&&\appT{\mIfz}{[a_{i_m}, \Lookup{(\appT{\mM}{[\vec a]})}, \Lookup{(\appT{\mM}{[\vec a]})},\vec a]}\\
&\reddh& \appT{\trans[\vec x]{(\lambda x_k.\rtrans[{\Delta, k:\tint}]{ P,T}{\alpha})\cdot (\pred x_{i_m})}}{[\vec a]},&\textrm{by Lem.~\ref{lem:existenceofeams}\eqref{lem:existenceofeams6}},\\
&=&\appT{\mAppn{n}{2}}{\left[\bar{l}\Lookup{\mProj{1}{1}}, \Lookup{\trans[\vec x, x_k]{\rtrans[{\Delta, k:\tint}]{ P,T}{\alpha}}},\\\Lookup{\trans[\vec x]{\rtrans{\pred x_{i_m}}{\tint}}},\vec a\ear\right]},&\textrm{by Def.~\ref{def:trans}},\\
&\reddh&\appT{\trans[\vec x, x_k]{\rtrans[{\Delta, k:\tint}]{P,T}{\alpha}}}{[\vec a, \Lookup{(\appT{\trans[\vec x]{\rtrans{\pred x_{i_m}}{\tint}}}{[\vec a]})}]},&\textrm{by Lem.~\ref{lem:existenceofeams}\eqref{lem:existenceofeams2}},\\
&\equiv_\alpha&\appT{\trans[\vec x, x_k]{\rtrans[{\Delta, k:\tint}]{ P,T}{\alpha}}}{[\vec a,t\ominus 1, 0)]},&\textrm{by reflexivity},\\
&\equiv_\alpha&\tuple{\vec{R}^r_{\vec a}[R_k := t\ominus 1],P,T},&\textrm{by IH2},\\
&\equiv_\alpha&\tuple{\vec{R}^r_{\vec a},\Pred {i_m}k;P,T},&\textrm{by Lem.~\ref{lem:equivprops}\eqref{lem:equivconvh}}.\\
\ear
\]
Conclude by Lemma~\ref{lem:equivprops}\eqref{lem:equivconvh} and transitivity.
\end{itemize}
\item Case $x_{i_1}:\beta_{i_1},\dots,x_{i_m}:\tint,\dots,x_n:\beta_{i_n} \Vdash^r (\Succ {i_m}k;P,T) : \alpha$. Analogous.
\item Case $\Delta \Vdash^r (\Ifz {i_l}{i_{m_1}}{i_{m_2}}k;P,T) : \alpha$, where $\Delta(l) = \tint,\,\Delta(i_{m_1}) = \beta,\,\Delta(i_{m_2}) = \beta$. There are two subcases.
\begin{itemize}
\item Subcase $\Lookinv{a_{i_l}}$ does not terminate. Proceed as in case $x_{i_1}\!\!:\beta_{i_1},\dots,x_{i_m}\!:\tint,\dots,x_n:\beta_{i_n} \Vdash^r (\Pred {i_m}k;P,T) : \alpha$.
\item Subcase $\Lookinv{a_{i_l}}\reddh \Lookinv{t}, t \in \nat$. Proceed as in case $x_{i_1}:\beta_{i_1},\dots,x_{i_m}:\tint,\dots,x_n:\beta_{i_n} \Vdash^r (\Pred {i_m}k;P,T) : \alpha$ to get \\$\appT{\trans[\vec x]{\rtrans[\Delta]{\Ifz {i_l}{i_{m_1}}{i_{m_2}}k;P,T}{\alpha}}}{[\vec a]}\equiv_\alpha $\\$\appT{\trans{\rtrans[{\Delta, k:\beta}]{ P,T}{\alpha}}{}{\vec x,x_k}}{[\vec a, \Lookup{(\appT{\trans[\vec x]{\ifterm {x_{i_l}}{x_{i_{m_1}}}{x_{i_{m_2}}}}}{[\vec a]})}]}
$. There are two subcases.
\begin{itemize}
\item Subcase $t = 0$. By Lemma~\ref{lem:equivprops}\eqref{lem:equivconvh} we have $\appT{\trans[\vec x]{\ifterm {x_{i_l}}{x_{i_{m_1}}}{x_{i_{m_2}}}}}{[\vec a]} \equiv_\beta \Lookinv{a_{i_{m_{1}}}}$. Then we get
\[
\bar{cll}
&\mathrlap{\appT{\trans[\vec x,x_k]{\rtrans[{\Delta, k:\beta}]{ P,T}{\alpha}}}{[\vec a, \Lookup{(\appT{\trans[\vec x]{\ifterm {x_{i_l}}{x_{i_{m_1}}}{x_{i_{m_2}}}}}{[\vec a]})}]}}&\\
\equiv_\alpha&\appT{\trans[\vec x,x_k]{\rtrans[{\Delta, k:\beta}]{ P,T}{\alpha}}}{[\vec a, a_{i_{m_1}}]},&\textrm{by reflexivity,}\\
\equiv_\alpha&\tuple{\vec{R}^r_{\vec a}[R_k := a_{i_{m_1}}],P,T},&\textrm{by IH2},\\
\equiv_\alpha&\tuple{\vec{R}^r_{\vec a},\Ifz {i_l}{i_{m_1}}{i_{m_2}}k;P,T},&\textrm{by Lemma~\ref{lem:equivprops}\eqref{lem:equivconvh}.}
\ear
\]
Conclude by Lemma~\ref{lem:equivprops}\eqref{lem:equivconvh} and transitivity.
\item Subcase $t > 0$. By Lemma~\ref{lem:equivprops}\eqref{lem:equivconvh} we have $\appT{\trans[\vec x]{\ifterm {x_{i_l}}{x_{i_{m_1}}}{x_{i_{m_2}}}}}{[\vec a]} \equiv_\beta \Lookinv{a_{i_{m_{2}}}}$. Then we get
\[
\bar{cll}
&\mathrlap{\appT{\trans[\vec x,x_k]{\rtrans[{\Delta, k:\beta}]{ P,T}{\alpha}}}{[\vec a, \Lookup{(\appT{\trans[\vec x]{\ifterm {x_{i_l}}{x_{i_{m_1}}}{x_{i_{m_2}}}}}{[\vec a]})}]}}&\\
\equiv_\alpha&\appT{\trans[\vec x,x_k]{\rtrans[{\Delta, k:\beta}]{ P,T}{\alpha}}}{[\vec a, a_{i_{m_2}}]},&\textrm{by reflexivity,}\\
\equiv_\alpha&\tuple{\vec{R}^r_{\vec a}[R_k := a_{i_{m_2}}],P,T},&\textrm{by IH2},\\
\equiv_\alpha&\tuple{\vec{R}^r_{\vec a},\Ifz {i_l}{i_{m_1}}{i_{m_2}}k;P,T},&\textrm{by Lemma~\ref{lem:equivprops}\eqref{lem:equivconvh}.}
\ear
\]
Conclude by Lemma~\ref{lem:equivprops}\eqref{lem:equivconvh} and transitivity.
\end{itemize}
\end{itemize} 
\item Case $\Delta\Vdash^r (\Apply {i_l}{i_m}k;P,T):\gamma$, where $\Delta(i_l) = \beta\to\alpha$ and $\Delta(i_m) = \beta$. By Definition~\ref{def:revtrans}, Definition~\ref{def:trans}, and Lemma~\ref{lem:existenceofeams}\eqref{lem:existenceofeams2}, we get
\[
\bar{rcl}
&&\appT{\trans[\vec x]{\rtrans{\Apply {i_l}{i_m}m ;P,T}{\gamma}}}{[\vec a]} \\&=&\appT{\trans[\vec x]{(\lambda x_k.\rtrans[\Delta, k:\beta]{P,T}{\gamma})\cdot(x_{i_l} \cdot x_{i_m})}}{[\vec a]}\\
&=&\appT{\mAppn{n}{2}}{[\Lookup{\mProj{1}{1}},\Lookup{\trans[\vec x, x_k]{\rtrans[\Delta, k:\beta]{ P,T}{\gamma}}},\Lookup{\trans[\vec x]{x_{i_l} \cdot x_{i_m}}},\vec a]}\\
&\reddh&\appT{\trans[\vec x, x_k]{\rtrans[\Delta, k:\beta]{ P,T}{\gamma}}}{[\vec a, \Lookup{(\appT{\trans[\vec x]{x_{i_l} \cdot x_{i_m}}}{[\vec a]})}]}\\
\ear
\]
By Lemma~\ref{lem:equivprops}\eqref{lem:equivconvh} we have $\appT{\Lookup{\mProj{n}{i_m}}}{[\vec a]}\equiv_\beta \Lookinv{a_{i_m}}$. We then have
\[
\bar{rcll}
\appT{\trans[\vec x]{x_{i_l} \cdot x_{i_m}}}{[\vec a]} &=& \appT{\mAppn{n}{2}}{[\Lookup{\mProj{1}{1}}, \Lookup{\mProj{n}{i_l}},\Lookup{\mProj{n}{i_m}},\vec a]},&\textrm{by Definition~\ref{def:trans}},\\
&\reddh&\appT{\mProj{n}{i_l}}{[\vec a, \Lookup{(\appT{\Lookup{\mProj{n}{i_m}}}{[\vec a]})}]},&\textrm{by Lemma~\ref{lem:existenceofeams}\eqref{lem:existenceofeams2}},\\
&\equiv_\alpha& \appT{\mProj{n}{i_l}}{[\vec a, a_{i_m}]},&\textrm{by reflexivity},\\
&\equiv_\alpha& \appT{\Lookinv{a_{i_l}}}{[a_{i_m}]},&\textrm{by Lemma~\ref{lem:equivprops}\eqref{lem:equivconvh}}.\\
\ear
\]
Thus we get 
\[
\bar{cll}
&\appT{\trans[\vec x, x_k]{\rtrans[\Delta, k:\beta]{ P,T}{\gamma}}}{[\vec a, \Lookup{(\appT{\trans[\vec x]{x_{i_l} \cdot x_{i_m}}}{[\vec a]})}]}\\
\equiv_\gamma &\appT{\trans[\vec x, x_k]{\rtrans[\Delta, k:\beta]{ P,T}{\gamma}}}{[\vec a, a_{i_l}\cdot a_{i_m}]},&\textrm{by reflexivity,}\\
\equiv_\gamma & \tuple{\vec{R}^r_{\vec a}[R_k := a_{i_l} \cdot a_{i_m}],P,T},&\textrm{by IH2},\\
\equiv_\gamma & \tuple{\vec{R}^r_{\vec a},\Apply {i_l}{i_m}k;P,T},&\textrm{by Lemma~\ref{lem:equivprops}\eqref{lem:equivconvh}}.\\
\ear
\]
Conclude by Lemma~\ref{lem:equivprops}\eqref{lem:equivconvh} and transitivity.
\item Case $\Delta \Vdash^r (\Call k,[b_1,\dots,b_m]):\alpha$, where $\Delta(k) = \gamma_1\to\cdots\to\gamma_m\to\alpha$ and for all $0 < j \leq m,\, b_j \in \cD_{\gamma_j}$. Let $\vec b = b_1,\dots,b_m$. By Definition~\ref{def:revtrans} we get $\appT{\trans[\vec x]{\rtrans{\Call k,[\vec b]}{\alpha}}}{[\vec a]} = \appT{\trans[\vec x]{x_k \cdot \rtrans[]{\Lookinv{b_1}}{\gamma_1}\cdots \rtrans[]{\Lookinv{b_m}}{\gamma_m}}}{[\vec a]}$. 
By an easy induction on $m$, one shows the following: 
\[
	\appT{\trans[\vec x]{x_k \cdot \rtrans[]{\Lookinv{b_1}}{\gamma_1}\cdots \rtrans[]{\Lookinv{b_m}}{\gamma_m}}}{[\vec a]} \equiv_\alpha \appT{\trans[\vec x]{x_k}}{[\vec a, \vec b]}\textrm{ (cf.\ Lemma~\ref{lem:applicative}).}
\]
By Definition~\ref{def:trans} and Lemma~\ref{lem:existenceofeams}\eqref{lem:existenceofeams1}, we get $ \appT{\trans[\vec x]{x_k}}{[\vec a, \vec b]} = \appT{\mProj{n}{k}}{[\vec a, \vec b]} \reddh \appT{\Lookinv{a_k}}{[\vec b]}$. 
Conclude by Lemma~\ref{lem:equivprops}\eqref{lem:equivconvh} and transitivity, as $\tuple{\vec{R}^r_{\vec a},\Call k,[\vec b]} \redh \appT{\Lookinv{a_k}}{[\vec b]} $.\qedhere
\end{itemize}
\end{proof}

\end{document}